%
%
%
\documentclass[english,12pt]{article}
\usepackage[latin1]{inputenc}
\usepackage{babel}
\usepackage{amsxtra}
\usepackage{amsthm}
\usepackage{amsfonts}
\usepackage{amsmath}
\usepackage{amssymb}
\usepackage{mathrsfs}
\usepackage{stmaryrd}
\usepackage{dsfont}
\usepackage{bbm}
\usepackage{hyperref}

\topmargin=-1.0cm \textheight=23.5cm
\oddsidemargin=-0cm
\evensidemargin=-0cm
\textwidth=16cm

\openup1.25pt

\def\one{\mathbbm{1}}
\newtheorem{thm}{Theorem}[section]
\newtheorem{prop}[thm]{Proposition}
\newtheorem{cor}[thm]{Corollary}
\newtheorem{lemma}[thm]{Lemma}

\theoremstyle{definition}
\newtheorem{defi}[thm]{Definition}
\newtheorem{remark}[thm]{Remark}
\newtheorem{example}[thm]{Example}

\newtheorem*{ak}{Acknowledgement}

\newcommand{\bt}{\begin{thm}}
\newcommand{\et}{\end{thm}}
\newcommand{\br}{\begin{remarks}}
\newcommand{\er}{\end{remarks}}
\newcommand{\bl}{\begin{lemma}}
\newcommand{\el}{\end{lemma}}
\newcommand{\bp}{\begin{proof}}
\newcommand{\ep}{\end{proof}}
\newcommand{\bal}{\begin{align*}}
\newcommand{\eal}{\end{align*}}
\newcommand{\bi}{\begin{itemize}}
\newcommand{\be}{\begin{equation}}
\newcommand{\ee}{\end{equation}}
\newcommand{\bea}{\begin{eqnarray}}
\newcommand{\eea}{\end{eqnarray}}
\newcommand{\ei}{\end{itemize}}
\newcommand{\sint}{\stackrel{\mbox{\tiny$\mskip-6mu\bullet\mskip-6mu$}}{}}
\newcommand{\ssint}{\stackrel{\mbox{\tiny$\scriptscriptstyle\bullet$}}{}}

\DeclareMathOperator{\essinf}{ess\ inf}
\DeclareMathOperator{\sign}{sign}
\DeclareMathOperator{\argmin}{argmin\, }
\DeclareMathOperator{\Var}{Var}

\newcommand{\Inf}{\inf\limits}
\newcommand{\Lim}{\lim\limits}
\newcommand{\bJ}{\bar{J}}
\newcommand{\bL}{\bar{L}}

\newcommand{\R}{\mathbb{R}}
\newcommand{\RR}{\mathbb{R}}
\newcommand{\N}{\mathbb{N}}

\newcommand{\F}{\mathcal{F}}

\newcommand{\cC}{\mathcal{C}}
\newcommand{\cL}{\mathcal{L}}
\newcommand{\Om}{\Omega}
\newcommand{\om}{\omega}
\newcommand{\Ombar}{\overline{\Om}}

\newcommand{\LiiP}{L^2(P)}
\newcommand{\LiiM}{\cL^2(M)}
\newcommand{\LiiA}{\cL^2(A)}

\newcommand{\Lzloc}{L^2_{\rm loc}}
\newcommand{\HzP}{\mathcal{H}^2(P)}
\newcommand{\HzlocP}{\mathcal{H}^2_{\rm loc}(P)}

\newcommand{\MznlocP}{\mathcal{M}^2_{0,\rm loc}(P)}
\newcommand{\MnlocP}{\mathcal{M}_{0,\rm loc}(P)}

\newcommand{\cS}{\mathcal{S}}

\newcommand{\cP}{\mathcal{P}}

\def\Whell{\widehat{W^\ell}}
\def\Whellpm{\widehat{W^{\ell^\pm}}}
\def\Whellmp{\widehat{W^{\ell^\mp}}}
\def\WhL{\widehat{W^L}}
\def\ZN{Z^N}
\def\ZNTm{\ZN_{T_m}}


\newcommand{\fC}{\mathfrak{C}}
\newcommand{\fK}{\mathfrak{K}}
\newcommand{\fg}{\mathfrak{g}}

\newcommand{\fh}{\mathfrak{h}}

\newcommand{\TB}{\overline{\Theta}}
\newcommand{\TBC}{\overline{\Theta(C)}}
\newcommand{\TBK}{\overline{\Theta(K)}}
\newcommand{\vt}{\vartheta}
\newcommand{\tvt}{\widetilde\vartheta}
\newcommand{\vp}{\varphi}
\newcommand{\vr}{\varrho}
\newcommand{\ve}{\varepsilon}
\newcommand{\tpsi}{\widetilde\psi}


\newcommand{\TS}{\Theta}
\newcommand{\TSC}{\Theta(C)}
\newcommand{\TSK}{\Theta(K)}

\newcommand{\omt}{(\om,t)}
\newcommand{\OmT}{\Om\times[0,T]}
\newcommand{\tvp}{\widetilde\vp}
\newcommand{\PtB}{P\otimes B}

\newcommand{\la}{\langle}
\newcommand{\ra}{\rangle}
\newcommand{\E}{\mathcal{E}}
\newcommand{\cE}{\mathcal{E}}
\newcommand{\T}{\top}


\def\marteq{\mathrel{\mathop{=}\limits^{\rm mart}}}

\numberwithin{equation}{section}
\begin{document}

\title{\vspace{-2\baselineskip}
Cone-Constrained Continuous-Time\\ Markowitz Problems}

\author{
Christoph Czichowsky\\
Faculty of Mathematics, University of Vienna\\
Nordbergstrasse 15, A--1090 Vienna, Austria \\
{\tt christoph.czichowsky@univie.ac.at}\\
\\
Martin Schweizer\\
Department of Mathematics, ETH Zurich\\
R\"amistrasse 101, CH--8092 Zurich, Switzerland \\
{\sl and}\\
Swiss Finance Institute\\
Walchestrasse 9, CH--8006 Zurich, Switzerland \\
{\tt martin.schweizer@math.ethz.ch}
\\
\vspace{-0.2cm}\\
This version: February 24, 2012.\footnote{To appear in \emph{Annals of Applied Probability.}}
}
\date{}
\maketitle
\vspace{-0.4cm}
\begin{abstract}
\noindent
The Markowitz problem consists of finding in a financial market a self-financing trading strategy whose final wealth has maximal mean and minimal variance. We study this in continuous time in a general semimartingale model and under cone constraints: Trading strategies must take values in a (possibly random and time-dependent) closed cone. We first prove \emph{existence} of a solution for convex constraints by showing that the space of constrained terminal gains, which is a space of stochastic integrals, is closed in $L^2$. Then we use stochastic control methods to describe the \emph{local structure} of the optimal strategy, as follows. The value process of a naturally associated constrained linear-quadratic optimal control problem is decomposed into a sum with two opportunity processes $L^\pm$ appearing as coefficients. The martingale optimality principle translates into a drift condition for the semimartingale characteristics of $L^\pm$ or equivalently into a coupled system of backward stochastic differential equations for $L^\pm$. We show how this can be used to both characterise and construct optimal strategies. Our results explain and generalise all the results available in the literature so far. Moreover, we even obtain new sharp results in the unconstrained case.
\end{abstract}
\noindent
\textbf{Key words: }Markowitz problem, cone constraints, portfolio selection, mean-variance hedging, stochastic control, semimartingales, BSDEs, martingale optimality principle, opportunity process, $\cE$-martingales, linear-quadratic control

\bigskip
\noindent
\textbf{MSC 2010 Subject Classification: }91G10, 93E20, 60G48, 49N10

\bigskip
\noindent
\textbf{JEL Classification Codes: }G11, C61

\newpage
\section{Introduction}
Mean-variance portfolio selection is a classical problem in finance. It consists of finding in a financial market a self-financing trading strategy whose final wealth has maximal mean and minimal variance. It is often called the \emph{Markowitz problem} after its inventor Harry Markowitz who proposed it in a one-period setting as a formulation for portfolio optimisation; see \cite{M52} and \cite{M59}. We study this problem here in continuous time in a general semimartingale model and under \emph{cone constraints}, meaning that each allowed trading strategy is restricted to always lie in a closed cone which might depend on the state and time in a predictable way. For applications in the management of pension funds and insurance companies, the inclusion of such constraints into the setup is very useful as they allow to model regulatory restrictions, like for example no shortselling.

As in the unconstrained case, the solution to the Markowitz problem can be obtained by solving the particular mean-variance hedging problem of approximating in $L^2$ a constant payoff by the terminal gains of a self-financing trading strategy. To get existence of a solution to the latter problem, we show first that the space $G_T(\fC)$ of constrained terminal gains is closed in $L^2$; this is sufficient if the constraints, and hence $G_T(\fC)$, are in addition convex. Our approach here combines the space of ($L^2$-)admissible trading strategies of \v Cern\'y and Kallsen \cite{CK07} with $\E$-martingales, a generalisation of martingales introduced by Choulli, Krawczyk and Stricker \cite{CKS98}. The latter notion comes up naturally in quadratic optimisation problems in mathematical finance due to the negative ``marginal utility'' of the square function. The \emph{closedness} result and hence the \emph{existence} of optimal strategies for the constrained Markowitz problem constitute a first major contribution, especially in view of the generality of our setting. In particular, this allows us to obtain in Theorem \ref{T6.2} a new sharp result for the unconstrained case.

Our main focus and achievement, however, is the subsequent \emph{structural description} of the optimal strategy by its local properties. This is made possible by treating the approximation in $L^2$ as a problem in stochastic optimal control and systematically using ideas and results from there. By exploiting the quadratic and conic structure of our task, we first obtain a decomposition of its value process $J(x,\vt)$ into a sum involving two auxiliary coefficient processes. This is similar to the results by \v Cern\'y and Kallsen \cite{CK07} in the unconstrained case, but now requires two \emph{opportunity processes} $L^\pm$, due to the constraints. An analogous opportunity process also plays a central role in the analysis by Nutz \cite{N09} of power utility maximisation, and some of the ideas and techniques are similar. Using the martingale optimality principle for $J(x,\vt)$ next allows us to describe first the drift of $L^\pm$ and from there the optimal strategy locally in feedback form via the pointwise minimisers of two predictable functions ${\fg}^\pm$; these are given in terms of the joint differential semimartingale characteristics of the opportunity processes $L^\pm$ and the price process $S$. The drift equations can also be rewritten as a system of coupled backward stochastic differential equations (BSDEs) for $L^\pm$, and we show that the opportunity processes are the maximal solutions of this system. This is motivated by a similar result in \cite{N09}. Conversely, we also prove verification results saying that if we have minimisers of ${\fg}^\pm$ (or a solution to the BSDE system), then we can construct from there an optimal strategy. This explains and generalises all results so far in the literature on the Markowitz problem under cone constraints; see \cite{LLZ02}, \cite{HZ04}, \cite{LH07} and \cite{JZ07}. 

The generality of our framework allows us to capture a new behaviour of the optimal strategy: It jumps from the minimiser of one predictable function to that of a second one, whenever the optimal wealth process of the approximation problem changes sign. Because this phenomenon is due to jumps in the price process $S$ of the underlying assets, it could not be observed in earlier work since the Markowitz problem under constraints has so far only been studied in (continuous) It\^o process models. Not surprisingly, the presence of jumps and the resulting nontrivial coupling of the BSDEs make the situation more involved; we explain in Section \ref{sec:lit} how things quickly simplify if $S$ is continuous. The usefulness of our general results can also be illustrated by applying them to L\'evy processes. Here the two random equations for the joint differential characteristics of $L^\pm$ and $S$  reduce to two coupled ordinary differential equations. These allow us to describe the solution explicitly, and it turns out that its behaviour is quite different than in the unconstrained case; the details and examples illustrating the various effects have been worked out and will be presented elsewhere.

The paper is organised as follows. Section \ref{sec:pf} gives a precise formulation of the problem, recalls basic results on predictable correspondences and proves the closedness in $L^2$ of the space of constrained terminal gains. In Section \ref{sec:dp}, we use dynamic programming arguments to establish the general structure of the value process $J(x,\vt)$ in terms of the opportunity processes $L^\pm$. Section \ref{sec:ld} exploits this via the martingale optimality principle to derive the local description of the optimal strategy and the characterisation of the opportunity processes via coupled BSDEs. Section \ref{sec:pr} contains the more computational parts of the proofs from Section \ref{sec:ld}, and Section \ref{sec:lit} concludes with a comparison to related work.
%
%
\section{Formulation of the problem and preliminaries}\label{sec:pf}
Let $(\Omega,\F,P)$ be a probability space with a filtration $\mathbb{F}=(\F_t)_{0\leq t\leq T}$ satisfying the usual conditions of completeness and right-continuity, where $T>0$ is a fixed and finite time horizon. We can and do choose for every local $P$-martingale a right-continuous version with left limits (RCLL for short). All unexplained notation concerning stochastic integration can be found in the books of Jacod and Shiryaev \cite{JS} and Protter \cite{P04}. For local martingales, we use the definition in \cite{P04}.

We consider a \emph{financial market} consisting of one riskless asset, whose (discounted) price is $1$, and $d$ risky assets described by an $\R^d$-valued RCLL semimartingale $S=(S_t)_{0\leq t\leq T}$. We suppose that $S$ is \emph{locally square-integrable}, $S\in \HzlocP$, in the sense that $S$ is special with canonical decomposition $S=S_0+M+A$, where $M$ is an $\R^d$-valued locally square-integrable local martingale null at zero, $M\in \MznlocP$, and $A$ is an $\R^d$-valued predictable RCLL process of finite variation and null at zero. Using semimartingale characteristics, we write $\la M\ra=\widetilde c^M\sint B$ and $A=b^S\sint B$, where all processes are predictable, $B$ is RCLL and strictly increasing and null at $0$, and $\widetilde c^M$ is $d\times d$-matrix-valued. For details, see Section II.2 in \cite{JS} or Section \ref{sec:ld} below. On the product space $\Ombar:=\OmT$ with the predictable $\sigma$-field $\cP$, define $P_B:=\PtB$. As \emph{trading strategies} available for investment, we consider a set $\fC$ of $S$-integrable, $\R^d$-valued, predictable processes; this will be specified more precisely later. We call $\fC$ \emph{unconstrained} if $\fC$ is a linear subspace and \emph{constrained} otherwise. By trading with a strategy $\vt\in\fC$ up to time $t\in[0,T]$ in a self-financing way, an investor with initial capital $x\in\R$ can generate the \emph{wealth} 
$$
\textstyle
V_t(x,\vt):=x+\int_0^t\vt_u \,dS_u=:x+\vt\sint S_t.
$$
In this paper, we understand \emph{mean-variance portfolio selection} as in the usual \emph{Marko\-witz problem}, i.e.~as the \emph{static} optimisation problem of finding a (dynamic) self-financing trading strategy whose final wealth has maximal mean and minimal variance. This is static in the sense that we only consider the optimisation at the initial time $0$ without looking at intermediate conditional versions. Mathematically, this can be formulated as
\be
\text{maximise $E[V_T(x,\vartheta)]-\frac{\gamma}{2}\Var[V_T(x,\vartheta)]$ over all $\vt\in\fC$,}\label{MVPS}
\ee
where the parameter $\gamma>0$ describes the \emph{risk aversion} of the investor. The most common alternative formulation is to 
\begin{align}
&\text{minimise $\Var[V_T(x,\vartheta)]=E\big[|V_T(x,\vartheta)|^2\big]-m^2$}\nonumber\\
&\text{subject to $E[V_T(x,\vartheta)]=m>x$ and $\vartheta\in\fC$.}\label{CMP}
\end{align}
If $\fC=\fK$ is a \emph{cone}, we obtain from the purely geometric structure of the optimisation problems the following global description of the solutions to \eqref{MVPS} and \eqref{CMP}.
\bl\label{lgds}
If $\fC=\fK$ is a \emph{cone} and if we have $\tvp\sint S_T\not\equiv1$ and $E[\tvp\sint S_T]>0$, then the solutions to \eqref{MVPS} and \eqref{CMP} are given by
\be
\tvt=\frac{1}{\gamma}\frac{1}{E[1-\widetilde\vp\sint S_T]}\widetilde\vp
\qquad\text{and}\qquad
\tvt^{(m,x)}=\frac{m-x}{E[\widetilde\vp\sint S_T]}\widetilde\vp,\label{gds}
\ee
respectively, where $\widetilde\vp$ is the solution to 
\be
\text{minimise $E\big[|V_T(-1,\vt)|^2\big]=E\big[ |1-\vt\sint S_T|^2\big]$ over all $\vt\in\fC$.}\label{ap-1}
\ee
\el
\bp
This follows from the arguments in the proof of Proposition 3.1 and Theorem 4.2 in \cite{SW06} which are derived in an abstract $L^2$-setting by Hilbert space arguments. Note that the convexity assumed in \cite{SW06} is not necessary for the equations \eqref{gds} to hold; it is used in \cite{SW06} only for the existence of a solution to \eqref{ap-1}, which we do not assert here.
\ep
If $\fC$ is a \emph{convex set}, but not necessarily a cone, one can under suitable feasibility conditions still establish the \emph{existence} of a solution to \eqref{MVPS} and \eqref{CMP} by using Lagrange multipliers; see \cite{LH07} and \cite{D08}. However, these solutions admit less \emph{structure} so that their dynamic behaviour over time cannot be described very explicitly. We therefore concentrate from Section \ref{sec:dp} onwards on constraints which are given by \emph{cones}. Before that, however, we want to prove existence of an optimal strategy in a continuous-time setting.

We first observe that despite its simplicity, Lemma \ref{lgds} is very useful as it relates the solution to the Markowitz problems \eqref{MVPS} and \eqref{CMP} to the solution of a \emph{constrained mean-variance hedging problem}, namely minimising the mean-squared hedging error between a given payoff $H\in\LiiP$ and a constrained self-financing trading strategy, i.e.~to
\begin{equation}
\text{minimise $E\left[|V_T(x,\vt)-H|^2\right]=E\left[|x+\vt\sint S_T-H|^2\right]$ over all $\vt\in\fC$.}\label{MVH}
\end{equation}
Indeed, \eqref{ap-1} corresponds to the very particular version of this problem with $H\equiv0$ and $x=-1$, or $H\equiv1$ and $x=0$. Since \eqref{MVH} is an approximation problem in the Hilbert space $ \LiiP$, it admits a solution for arbitrary $H\in  \LiiP$ if the space 
$$G_T(\fC)=\{\vt\sint S_T \,|\, \vt\in\fC\}$$
of terminal constrained gains is convex and \emph{closed} in $ \LiiP$. Such constrained mean-variance hedging problems in a general semimartingale framework have been studied in \cite{CS10a}. As explained there, one can formulate constraints on trading strategies and then adapt closedness results from the unconstrained case to obtain closedness under constraints as well. This needs a suitable choice of strategies and constraints which we now introduce.

Conceptually, our choice of strategy space can be traced back to \v Cern\'y and Kallsen \cite{CK07}. They start with simple integrands of the form $\vartheta=\sum_{i=1}^{m-1}\xi_i I_{\rrbracket \sigma_i,\sigma_{i+1}\rrbracket }$ with stopping times \hbox{$0\leq\sigma_1\leq\cdots\leq\sigma_m\le \tau_n \leq T$} for some $n\in\N$ and bounded $\R^d$-valued $\F_{\sigma_i}$-measurable random variables $\xi_i$ for $i=1,\ldots,m-1$, where $(\tau_n)$ is a localising sequence of stopping times with $S^{\tau_n}\in \HzP$. Their ($L^2$-)admissible strategies are then those integrands $\vt\in \cL(S)$ for which there exists a sequence $(\vt^n)_{n\in\N}$ of simple integrands such that
\begin{itemize}
\item[\bf{1)}] $\vt^n\sint S_T\overset{ \LiiP}{\longrightarrow} \vt\sint S_T$.
\item[\bf{2)}] $\vt^n\sint S_t\overset{P}{\longrightarrow} \vt\sint S_t$ for all $t\in[0,T]$.
\end{itemize}
A discussion why such a class of strategies is economically reasonable and mathematically useful can be found in \cite{CK07}. For our purposes, we need to modify that definition a little.

Instead of simple strategies, another natural space of strategies coming from the construction of the stochastic integral is $\Theta:=\TS_S:=\LiiM\cap \LiiA$ with
\begin{align*}
\LiiM
&:=
\textstyle
\big\{\vartheta\in\cL^0(\Ombar,\cP;\R^d)\, \big|\, \|\vartheta\|_{\LiiM}:=\big(E\big[\int_0^T\vartheta_s
^\T d\la M\ra_s \, \vartheta_s\big]\big)^{\frac{1}{2}}<\infty\big\},
\\
\LiiA
&:=
\textstyle
\big\{\vartheta\in\cL^0(\Ombar,\cP;\R^d)\, \big|\,  \|\vartheta\|_{\LiiA}:=\big(E\big[\big(\int_0^T|\vartheta_s^\T dA_s|\big)^2\big]\big)^{\frac{1}{2}}<\infty\big\}.
\end{align*}
Next, the trading constraints we consider are formulated via predictable correspondences. 
\begin{defi}
A \emph{correspondence} is a mapping $C:\Ombar\to 2^{\RR^d}$. We call a correspondence $C$ \emph{predictable} if $C^{-1}(F):=\{\omt \,|\, C\omt\cap F\ne\emptyset\}$ is a predictable set for all closed sets $F\subseteq\RR^d$. The \emph{domain} of a correspondence $C$ is $\mathrm{dom}(C):=\{\omt\, |\, C\omt\ne\emptyset\}$. A \emph{(predictable) selector} of a (predictable) correspondence $C$ is a (predictable) process $\psi$ with $\psi\omt\in C\omt$ for all $\omt\in\mathrm{dom}(C)$.
\end{defi}
For a correspondence $C:\Ombar\to 2^{\RR^d}\setminus\{\emptyset\}$, the sets of \emph{$C$-valued} or \emph{$C$-constrained integrands} and of \emph{square-integrable $C$-constrained trading strategies} are given by
\begin{align*}
\cC
&:=
\cC^{S}:=\{\vt\in\cL(S) \,|\, \vt\omt\in C\omt \text{ for all $\omt\in\Ombar$}\},
\\
\Theta(C)
&:=
\Theta\cap\cC=\{\vt\in\Theta \,|\, \vt\omt\in C\omt \text{ for all $\omt\in\Ombar$}\}.
\end{align*}
\begin{defi}
A trading strategy $\vt\in \cC$ is called \emph{$C$-admissible (in $ \LiiP$)} if there exists a sequence $(\vt^n)_{n\in\N}$ in $\Theta(C)$, called \emph{approximating sequence for $\vt$}, such that
\begin{itemize}
\item[\bf{1)}] $\vt^n\sint S_T\overset{ \LiiP}{\longrightarrow} \vt\sint S_T$.
\item[\bf{2)}] $\vt^n\sint S_\tau\overset{P}{\longrightarrow} \vt\sint S_\tau$ for all stopping times $\tau$.
\end{itemize}
The set of all $C$-admissible trading strategies is called $\overline{\Theta(C)}$, and we set $\TB := \overline{\Theta(\R^d)}$.
\end{defi}
In comparison to \v Cern\'y and Kallsen \cite{CK07}, there are two differences. Instead of using simple strategies for the approximation, we use strategies from $\Theta(C)$; the reason is that it can easily happen with time-dependent constraints that no simple strategy satisfies them. (The constraints can also be so bad that no strategy in $\Theta$ satisfies them either; but such situations are almost pathological.) The second difference is that we stipulate 2) for all stopping times $\tau$ and not only for deterministic times $t$; this is needed for dynamic programming arguments, as explained at the end of this section.

Before addressing the issue of closedness of $G_T(\TBC)$ in $ \LiiP$, we recall some results on predictable correspondences, used later to ensure the existence of predictable selectors.
\begin{prop}[Castaing]\label{Castaing}
For a correspondence $C:\Ombar\to2^{\RR^d}$ with closed values, the following are equivalent:
\bi
\item[\bf 1)] $C$ is predictable.
\item[\bf 2)] $\mathrm{dom}(C)$ is predictable and there exists a \emph{Castaing representation} of $C$, i.e.~a sequence $(\psi^n)$ of predictable selectors of $C$ such that
$$C\omt=\overline{\{\psi^1\omt,\psi^2\omt,\ldots\}}
\quad\text{for each $\omt\in\mathrm{dom}(C)$.}$$
\ei
In particular, every closed-valued predictable $C$ admits a predictable selector $\psi$.
\end{prop}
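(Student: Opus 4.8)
The plan is to treat this as a version of the classical Castaing representation / Kuratowski--Ryll-Nardzewski measurable selection theorem, now carried out on the measurable space $(\Ombar,\cP)$ and with values in the Polish, and crucially $\sigma$-compact, space $\RR^d$. I would first record the elementary equivalences that translate predictability into statements about distance functions. Writing $d(x,C\omt):=\inf\{|x-y|:y\in C\omt\}$, I claim that for closed-valued $C$ the predictability in the sense of the definition (i.e.\ $C^{-1}(F)\in\cP$ for all closed $F$) is equivalent to: $\mathrm{dom}(C)\in\cP$ and $\omt\mapsto d(x,C\omt)$ is predictable on $\mathrm{dom}(C)$ for every fixed $x\in\RR^d$. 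Indeed $\mathrm{dom}(C)=C^{-1}(\RR^d)\in\cP$, while $\{d(x,C\cdot)<r\}=C^{-1}(B(x,r))$ and every open set is a countable union of closed sets, so predictability yields predictability of the distance functions; these preliminary reductions are routine and will be used freely in both directions.

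For the implication $1)\Rightarrow2)$, which is the substantial one, I would build the selectors by successive approximation along a fixed countable dense set $\{x_i\}:=\RR^d\cap\mathbb{Q}^d$. The single-selector lemma is the heart of the matter: given closed-valued predictable $C$ with $\mathrm{dom}(C)\in\cP$, construct predictable maps $f_n:\mathrm{dom}(C)\to\RR^d$ taking values in $\{x_i\}$ with $d(f_n\omt,C\omt)<2^{-n}$ and $|f_{n+1}\omt-f_n\omt|<2^{-n+1}$, each $f_n$ defined via the countable predictable partition of $\mathrm{dom}(C)$ that assigns to $\omt$ the first dense point $x_i$ with $d(x_i,C\omt)<2^{-n}$; the sets entering this partition are of the form $C^{-1}(B(x_i,2^{-n}))$ and hence predictable. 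The $f_n$ converge uniformly to a predictable $f$ with $f\omt\in\overline{C\omt}=C\omt$, i.e.\ a predictable selector. To produce the full Castaing family, I would apply this lemma to the localised correspondences $C_{i,k}\omt:=C\omt\cap\overline{B(x_i,1/k)}$ on the predictable domain $C^{-1}(\overline{B(x_i,1/k)})$, gluing each resulting selector with a fixed selector of $C$ off that domain; here the local compactness of $\RR^d$ guarantees that these intersections are compact-valued and still predictable. The resulting countable family $(\psi^{i,k})$ is then a Castaing representation, since any $y\in C\omt$ is approximated by some $x_i$ and the corresponding $\psi^{i,k}\omt$ lies in $C\omt\cap\overline{B(x_i,1/k)}$, hence arbitrarily close to $y$.

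For $2)\Rightarrow1)$ I would exploit $\sigma$-compactness directly. Writing $\RR^d=\bigcup_m K_m$ with $K_m$ compact and increasing, for closed $F$ one has $C^{-1}(F)=\bigcup_m C^{-1}(F\cap K_m)$, so it suffices to handle compact targets $F'=F\cap K_m$. The key point for such $F'$ is that $\overline{\{\psi^n\omt\}}\cap F'\ne\emptyset$ holds if and only if $\inf_n d(\psi^n\omt,F')=0$: the nontrivial implication uses that a sequence in $\RR^d$ whose distance to the compact set $F'$ tends to zero has a subsequence converging to a point of $F'$, which then lies in the closure $C\omt$. Since each $\omt\mapsto d(\psi^n\omt,F')$ is predictable and the infimum of countably many predictable functions is predictable, the zero set lies in $\cP$, proving $C^{-1}(F)\in\cP$. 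The final assertion is immediate: take $\psi:=\psi^1$ from the Castaing representation.

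I expect the main obstacle to be the selection step in $1)\Rightarrow2)$: one must keep every approximation $f_n$ genuinely predictable while ensuring it is defined on all of $\mathrm{dom}(C)$ and Cauchy, and then arrange the localised selectors so that their pointwise closure recovers $C\omt$ exactly rather than merely a smaller or dense subset. The role of $\sigma$-compactness of $\RR^d$---used to pass between the closed-set definition of predictability and the distance-function formulation---is the other point that needs care, since the equivalence between these notions of measurability fails for general Polish targets.
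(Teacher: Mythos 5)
Your proof is correct in substance, but note that the paper itself does not prove this proposition at all: it simply cites Corollary 18.14 in Aliprantis--Border and Theorem 1B in Rockafellar. What you have written out is essentially the content of those references, namely the Kuratowski--Ryll-Nardzewski successive-approximation selection argument, the localisation $C\omt\cap\overline{B(x_i,1/k)}$ to upgrade a single selector to a Castaing representation, and the use of $\sigma$-compactness of $\RR^d$ to pass back from a Castaing representation to measurability with respect to \emph{closed} (not merely open) test sets. All three steps are sound: the identity $C^{-1}\big(B(x,r)\big)=\bigcup_n C^{-1}\big(\overline{B(x,r-1/n)}\big)$ legitimises testing against open balls; the computation $C_{i,k}^{-1}(F)=C^{-1}\big(\overline{B(x_i,1/k)}\cap F\big)$ shows the localised correspondences are predictable (closedness of the ball suffices here, compactness is not actually needed for this point); and the clustering argument in $2)\Rightarrow1)$ (a sequence whose distance to a compact $F'$ tends to $0$ has a subsequence converging to a point of $C\omt\cap F'$) is exactly where $\sigma$-compactness enters, as you correctly emphasise.

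One detail in the selection lemma needs repair as written: taking $f_{n+1}\omt$ to be the \emph{first} dense point $x_i$ with $d(x_i,C\omt)<2^{-(n+1)}$ yields the approximation property but not the Cauchy bound $|f_{n+1}\omt-f_n\omt|<2^{-n+1}$, since the first admissible point can jump between distant parts of $C\omt$ as $n$ varies, and then the limit $f$ need not exist. The standard fix, which stays entirely within your framework, is to define $f_{n+1}\omt$ as the first $x_i$ satisfying \emph{both} $d(x_i,C\omt)<2^{-(n+1)}$ and $|x_i-f_n\omt|<2^{-n}+2^{-(n+1)}$; such an $x_i$ exists by density and the triangle inequality, and predictability is preserved because the defining sets are intersections of $C^{-1}\big(B(x_i,2^{-(n+1)})\big)$ with $\{|x_i-f_n|<2^{-n}+2^{-(n+1)}\}$, the latter predictable by the induction hypothesis on $f_n$. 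Since you state the two required properties of the $f_n$ correctly and explicitly flag the Cauchy issue as the delicate point, this is a gap in the recipe rather than in the idea. Finally, in $2)\Rightarrow1)$ the zero set of $\inf_n d(\psi^n\cdot,F')$ should be intersected with $\mathrm{dom}(C)$ (predictable by hypothesis), since the selectors are unconstrained off $\mathrm{dom}(C)$.
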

\begin{proof}
See Corollary 18.14 in \cite{AB} or Theorem 1B in \cite{Roc75}.
\end{proof}
\begin{prop}\label{propCara}
Let $C:\Ombar\to2^{\RR^d}$ be a predictable correspondence with closed values and $f:\Ombar\times\R^m\to\R^d$ and $g:\Ombar\times\R^d\to\R^m$ \emph{Carath\'eodory functions}, which means that $f(\omega,t,y)$ and $g(\omega,t,x)$ are predictable with respect to $\omt$ and continuous in $y$ and $x$. Then the mappings $C'$ and $C''$ given by $C'\omt=\{y\in\R^m\, |\, f(\omega,t,y)\in C\omt\}$ and $C''\omt=\overline{\{g(\omega,t,x)\, |\, x\in C\omt\}}$ are predictable correspondences (from $\Ombar$ to $2^{\R^m}$) with closed values. 
\end{prop}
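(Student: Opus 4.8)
The plan is to treat the two correspondences separately, in both cases relying on the Castaing characterisation in Proposition \ref{Castaing}. For $C''$ I would start from a Castaing representation $(\psi^n)$ of $C$, so that $C\omt=\overline{\{\psi^1\omt,\psi^2\omt,\ldots\}}$ on $\mathrm{dom}(C)$, and set $\varphi^n\omt:=g(\omega,t,\psi^n\omt)$. The key geometric point is that for the continuous map $g(\omega,t,\cdot)$ the image of a closure, after taking closures, coincides with the image of the set, so that $C''\omt=\overline{g(\omega,t,C\omt)}=\overline{\{\varphi^n\omt:n\in\N\}}$. It then remains to check that each $\varphi^n$ is a predictable selector of $C''$: selection is clear since $\psi^n\omt\in C\omt$, and predictability rests on the standard fact that a Carath\'eodory function is jointly $\cP\otimes\mathcal{B}(\R^d)$-measurable (see \cite{AB}), whence its composition with the predictable process $\psi^n$ is predictable. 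Since $\mathrm{dom}(C'')=\mathrm{dom}(C)$ is predictable and $(\varphi^n)$ is a Castaing representation of the closed-valued $C''$, the implication 2)$\Rightarrow$1) in Proposition \ref{Castaing} yields that $C''$ is predictable.

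For $C'$, closedness of the values is immediate: $f(\omega,t,\cdot)$ is continuous and $C\omt$ is closed, so $C'\omt=f(\omega,t,\cdot)^{-1}(C\omt)$ is closed. For predictability I would verify the definition directly, i.e.\ that $(C')^{-1}(F)$ is predictable for every closed $F\subseteq\R^m$. The device is the distance function $h(\omega,t,y):=d\big(f(\omega,t,y),C\omt\big)$. Using the Castaing representation one writes $d(x,C\omt)=\inf_n|x-\psi^n\omt|$, from which $h$ is seen to be Carath\'eodory — predictable in $\omt$ for each fixed $y$, and (being a $1$-Lipschitz function composed with the continuous $f(\omega,t,\cdot)$) continuous in $y$. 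Moreover $C'\omt=\{y:h(\omega,t,y)=0\}$, since $C\omt$ is closed.

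The remaining and main obstacle is an attainment issue: $(C')^{-1}(F)=\{\omt:\exists\,y\in F,\ h(\omega,t,y)=0\}$ cannot be read off directly from a countable dense subset of $F$, because an infimum equal to $0$ need not be attained when $F$ is unbounded. I would resolve this by the compact exhaustion $F=\bigcup_N\big(F\cap\overline B_N\big)$, with $\overline B_N$ the closed ball of radius $N$, so that $(C')^{-1}(F)=\bigcup_N\{\omt:\exists\,y\in F\cap\overline B_N,\ h(\omega,t,y)=0\}$. On each compact $K:=F\cap\overline B_N$, continuity forces the infimum of $h(\omega,t,\cdot)$ to be attained, whence, choosing a countable dense $(y_k)\subseteq K$,
\[
\{\omt:\exists\,y\in K,\ h(\omega,t,y)=0\}=\Big\{\omt:\inf_k h(\omega,t,y_k)=0\Big\}.
\]

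The right-hand side is predictable, being the zero set of the countable infimum of the predictable maps $\omt\mapsto h(\omega,t,y_k)$; a countable union of predictable sets is again predictable, so $(C')^{-1}(F)$ is predictable, and hence so is $C'$. (Taking $F=\R^m$ also shows $\mathrm{dom}(C')$ is predictable.) The only genuinely delicate points in this scheme are the measurability lemma for Carath\'eodory functions invoked for $C''$ and the non-compactness of $F$ handled above for $C'$; everything else is routine.
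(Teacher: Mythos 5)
Your argument is correct in substance, but note that it does not reproduce \lq\lq the paper's proof\rq\rq\ at all: the paper gives no argument for Proposition \ref{propCara} and simply cites Corollaries 1P and 1Q of Rockafellar \cite{Roc75}. What you have written is a self-contained derivation from the Castaing characterisation (Proposition \ref{Castaing}), and your two devices are exactly the right ones. For $C''$: compose a Castaing representation $(\psi^n)$ of $C$ with $g$, using the joint $\cP\otimes\mathcal{B}(\R^d)$-measurability of Carath\'eodory functions (\cite{AB}) and the identity $\overline{g(\omega,t,\overline{A})}=\overline{g(\omega,t,A)}$, valid because $g(\omega,t,\cdot)$ is continuous; since $C''$ is closed-valued by construction and $\mathrm{dom}(C'')=\mathrm{dom}(C)$ is predictable, the implication 2)$\Rightarrow$1) of Proposition \ref{Castaing} applies. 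For $C'$: the distance function $h(\omega,t,y)=\inf_n|f(\omega,t,y)-\psi^n\omt|$ together with the compact exhaustion $F=\bigcup_N(F\cap\overline{B}_N)$. You also correctly isolate the one genuine subtlety, namely that $\inf_{y\in F}h(\omega,t,y)=0$ does not force attainment when $F$ is unbounded, so a naive dense-subset test would identify $(C')^{-1}(F)$ with a strictly larger set; your compactness argument disposes of this. The trade-off between the two routes is the obvious one: the citation is economical, while your argument makes the selection machinery of Section \ref{sec:pf} self-contained modulo Proposition \ref{Castaing}, and is in substance the classical argument behind Rockafellar's corollaries.

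One small repair is needed because the statement allows $C$ to have empty values. Off $\mathrm{dom}(C)$ the Castaing representation says nothing, and your formula for $h$ then returns a finite number although the distance to the empty set $C\omt=\emptyset$ is $+\infty$; hence the test $\inf_k h(\omega,t,y_k)=0$ can produce false positives on $\mathrm{dom}(C)^c$, where in fact $C'\omt=\emptyset$. The fix is one line: since $(C')^{-1}(F)\subseteq\mathrm{dom}(C')\subseteq\mathrm{dom}(C)$ and $\mathrm{dom}(C)$ is predictable by Proposition \ref{Castaing}, intersect your representation of $(C')^{-1}(F)$ with $\mathrm{dom}(C)$; this also makes precise your parenthetical claim that $\mathrm{dom}(C')=(C')^{-1}(\R^m)$ is predictable. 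No such repair is needed for $C''$, since there the Castaing criterion only involves what happens on $\mathrm{dom}(C'')=\mathrm{dom}(C)$.
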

\begin{proof}
See Corollaries 1P and 1Q in \cite{Roc75}.
\end{proof}
\begin{prop}
Let $C^n:\Ombar\to2^{\RR^d}$ for each $n\in\N$ be a predictable correspondence with closed values and define the correspondences $C'$ and $C''$ by $C'\omt=\underset{n\in\N}{\bigcap}C^n\omt$ and $C''\omt=\underset{n\in\N}{\bigcup}C^n\omt$. Then $C'$ and $C''$ are predictable and $C'$ is closed-valued.
\end{prop}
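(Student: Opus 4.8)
The plan is to prove the two closure statements separately, reducing each to the Castaing characterization (Proposition \ref{Castaing}) by exhibiting an appropriate structure. The core observation is that predictability of a closed-valued correspondence is governed entirely by the behaviour of $C^{-1}(F)$ on closed sets $F$, and that both the intersection and union constructions interact transparently with this preimage operation once we pass to Castaing representations. I would therefore first record Castaing representations $(\psi^{n,k})_{k\in\N}$ for each $C^n$, obtained from Proposition \ref{Castaing}, so that $C^n\omt=\overline{\{\psi^{n,1}\omt,\psi^{n,2}\omt,\ldots\}}$ on $\mathrm{dom}(C^n)$.

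For the union $C''\omt=\bigcup_{n\in\N}C^n\omt$, I expect the argument to be the easy direction. Since preimages commute with unions, we have
\be
(C'')^{-1}(F)=\{\omt \,|\, C''\omt\cap F\ne\emptyset\}=\bigcup_{n\in\N}\{\omt \,|\, C^n\omt\cap F\ne\emptyset\}=\bigcup_{n\in\N}(C^n)^{-1}(F)
\ee
for every closed $F\subseteq\R^d$, and a countable union of predictable sets is predictable. Hence $C''$ is predictable. (Note the proposition does not claim $C''$ is closed-valued, which is correct: a countable union of closed sets need not be closed.) The collection $\{\psi^{n,k}\}_{n,k\in\N}$ moreover furnishes a countable family of predictable selectors whose values are dense in $C''$, which can be used elsewhere if a Castaing-type representation of the union is needed.

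For the intersection $C'\omt=\bigcap_{n\in\N}C^n\omt$, which is evidently closed-valued, the preimage does \emph{not} commute with the set operation, so the easy argument fails and this is where I expect the real work to lie. The standard route is to realize the intersection via a distance function: define the predictable scalar field
$$
h\omt:=\sum_{n\in\N}2^{-n}\big(1\wedge d(\cdot,C^n\omt)\big),
$$
where $d(x,C^n\omt)=\inf_k|x-\psi^{n,k}\omt|$ is jointly measurable in $(\omega,t,x)$ and continuous in $x$ by the Castaing representation, and note that $C'\omt=\{x \,|\, h(\omega,t,x)=0\}$. The map $(\omega,t,x)\mapsto h(\omega,t,x)$ is then a Carath\'eodory function (predictable in $\omt$, continuous in $x$), so applying Proposition \ref{propCara} with the constraint ``$h\in\{0\}$'' (taking $C\equiv\{0\}$, $f=h$) immediately yields that $C'$ is a predictable correspondence with closed values. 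The main obstacle is the measurability bookkeeping: one must verify that the infimum defining the distance is genuinely predictable in $\omt$ and continuous in $x$ so that Proposition \ref{propCara} applies; this is routine given the countable Castaing representation but is the only nontrivial point, since it is precisely where the closed-valuedness and predictability of the individual $C^n$ are consumed.
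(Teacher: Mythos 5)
The paper offers no argument for this proposition at all --- its ``proof'' is a pointer to Theorem 1M in Rockafellar and Lemma 18.4 in Aliprantis--Border --- so your self-contained derivation is by construction a different route, and a reasonable one. The union case is complete and correct: the identity $(C'')^{-1}(F)=\bigcup_{n\in\N}(C^n)^{-1}(F)$ holds for arbitrary $F$, predictable sets are stable under countable unions, and you rightly note that closed-valuedness is not claimed for $C''$. For the intersection you reduce, via the truncated-distance sum $h$, to Proposition \ref{propCara} applied with the constant target correspondence $\{0\}$; within the paper's logical structure this is legitimate and non-circular, since Proposition \ref{propCara} is established independently of the present statement, and the reduction does capture the essential point (level sets of Carath\'eodory functions over the $\sigma$-compact space $\R^d$ form predictable closed-valued correspondences).

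There is, however, one genuine gap: the treatment of empty values. The proposition allows $C^n\omt=\emptyset$ (it is stated for $2^{\R^d}$, not $2^{\R^d}\setminus\{\emptyset\}$), and by the paper's definitions both the Castaing representation and a selector's membership $\psi^{n,k}\omt\in C^n\omt$ are only guaranteed on $\mathrm{dom}(C^n)$; off the domain a predictable selector is unconstrained. Consequently the identity $d(x,C^n\omt)=\inf_k|x-\psi^{n,k}\omt|$, and with it your key formula $C'\omt=\{x \,|\, h(\omega,t,x)=0\}$, can fail: at a point where $C^{n_0}\omt=\emptyset$ one has $C'\omt=\emptyset$, yet $h(\omega,t,\cdot)$ is computed from the arbitrary finite-valued processes $\psi^{n_0,k}$ and may well vanish somewhere, so the level-set correspondence strictly contains $C'$ there. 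The repair is short: by Proposition \ref{Castaing} each $\mathrm{dom}(C^n)$ is predictable, hence so is $D:=\bigcap_{n\in\N}\mathrm{dom}(C^n)$; off $D$ we have $C'\omt=\emptyset$, so replacing $h$ by $\tilde h:=h\one_D+\one_{D^c}$ (still Carath\'eodory, since for fixed $\omt$ it is either $h(\omega,t,\cdot)$ or the constant $1$) restores $C'\omt=\{x \,|\, \tilde h(\omega,t,x)=0\}$ everywhere, and Proposition \ref{propCara} then applies as you intended. The same caveat invalidates your side remark that $\{\psi^{n,k}\}_{n,k}$ gives a Castaing-type family for $C''$: where some $C^{n_0}\omt$ is empty but $C''\omt$ is not, the $\psi^{n_0,k}$ need not even be selectors of $C''$; that remark is unused in the proof, but it should be deleted or restricted to $\bigcap_{n\in\N}\mathrm{dom}(C^n)$ as well.
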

\begin{proof}
See Theorem 1M in \cite{Roc75} and Lemma 18.4 in \cite{AB}.
\end{proof}

Now we aim to prove closedness in $ \LiiP$ of the space of constrained terminal gains. Because we are interested in solving \eqref{ap-1} in maximal generality, we combine ideas and concepts from \cite {CK07} and \cite{CKS98}. Like \v Cern\'y and Kallsen in \cite{CK07}, we use the (modified) space $G_T(\TBC)$ of ($L^2$-)admissible trading strategies, but we drop the assumption from \cite{CK07} that there exists an equivalent local martingale measure (ELMM) $Q$ for $S$ with $\frac{dQ}{dP}\in  \LiiP$. (To illustrate why this is useful, consider the simple case where $S$ is a Poisson process. Then one can compute straightforwardly that the solution to \eqref{ap-1} is given by $\tvp= \one_{\llbracket 0,\tau\rrbracket}$, where $\tau=\Inf\{t>0 \,|\, \Delta S_t=1\}\wedge T$. However, there exists no E(L)MM because each integrand $\vt\equiv c>0$ is an arbitrage opportunity.) Like Choulli, Krawczyk and Stricker in \cite{CKS98}, we impose instead of the existence of an ELMM $Q$ the more general absence-of-arbitrage condition that $S$ is an $\E$-local martingale; but unlike \cite{CKS98}, we do not require a reverse H\"older inequality. 

Let us first recall the notion of an $\E$-martingale. For a semimartingale $Y$, we denote its stochastic exponential by $\E(Y)$. {\bf Throughout this paper, we let $N$ stand for a local $P$-martingale null at zero and $\ZN$ for a strictly positive adapted RCLL process.} We shall see below how $N$ and $\ZN$ are related. For any stopping time $\tau$, we denote the process $Y$ stopped at $\tau$ by $Y^\tau$ and the process $Y$ started at $\tau$ by ${}^\tau Y := Y-Y^\tau$; but we set ${}^\tau\E(N) := \E(N-N^\tau)$. So for stochastic exponentials, ${}^\tau\E(N)$ denotes a multiplicative rather than an additive restarting. Since $N$ is RCLL, it has at most a finite number of jumps with $\Delta N=-1$, and so there is $P$-a.s.~at most a finite number of times, not depending on $\tau$, where the ${}^\tau\E(N)$ can jump to zero; this follows from the representation of the stochastic exponential in Theorem II.37 in \cite{P04}. Thus the stopping times defined by $T_0=0$ and $T_{m+1}=\Inf\{t>T_m\, |\, {}^{T_m}\E(N)_t=0\}\wedge T$ increase stationarily to $T$.
\begin{defi}
An adapted RCLL process $Y$ is an \emph{$\E$-local martingale} if the product of ${}^{T_m}Y$ and ${}^{T_m}\E(N)$ is a local $P$-martingale for any $m\in\N$. It is an \emph{$(\E, \ZN)$-martingale} if for any $m\in\N$, we have
$
E\big[|Y_{T_m} \ZNTm  {}^{T_m}\E(N)_{T_{m+1}}|\big]<\infty
$
and the product of ${}^{T_m}Y$ and $\ZNTm  {}^{T_m}\E(N)$ is a (true) $P$-martingale.
\end{defi}
In comparison to Definition 3.11 in \cite{CKS98}, we have generalised the definition of $\E$-mar\-tin\-gales to $(\E,\ZN)$-martingales by introducing the process $\ZN$. This is needed for a clean formulation of our results, but it also makes intuitive sense. Suppose $Q$ is an equivalent martingale measure for $Y$ and write its density process with respect to $P$ as $Z^Q = Z^Q_0 \E(N^Q)$. By the Bayes rule, the product $Y Z^Q$ is then a $P$-martingale and so is ${}^0 Y Z^Q = (Y-Y_0) Z^Q$. One consequence is that the product of ${}^0 Y$ and $\E(N^Q)$ is a local $P$-martingale so that $Y$ is an $\E(N^Q)$-local martingale. (Of course, $Z^Q>0$ implies that $T_m \equiv T$ for $m\ge1$.) We also have that ${}^0 Y Z^Q_0 \E(N^Q)$ is a true $P$-martingale so that $Y$ is an $(\E(N^Q),Z^Q)$-martingale. But unless we know more about $Z^Q_0$, we cannot assert that the product ${}^0 Y \E(N^Q)$ is a true $P$-martingale  (since it need not be $P$-integrable); so $Y$ is not an $\E(N^Q)$-martingale in the sense of \cite{CKS98}. Hence we see that in the abstract definition, $\ZNTm$ plays a similar role at time $T_m$ as the density $Z^Q_0$ of $Q$ at time 0, and its main role is to ensure integrability properties. (This is not needed in \cite{CKS98} because the authors there work with $Y = \vt\sint S \in \HzP$ and assume that $\E(N)$ satisfies the reverse H\"older inequality $R_2(P)$. In our notation, this allows to take $\ZN\equiv 1$.)
\begin{remark}
\label{Nrem}
If $N$ is as above a local martingale, then 
$J^m := \one_{\rrbracket T_m,T\rrbracket}\sint\E( \one_{\rrbracket T_m,T\rrbracket}\sint N)$
is for each $m$ also a local martingale; if $N$ is in addition locally square-integrable, then so is $J^m$; and both statements still hold if we multiply $J^m$ by a strictly positive $\F_{T_m}$-meas\-ur\-able random variable. There is no problem with adaptedness since $J^m=0$ on $\rrbracket T_m,T\rrbracket$.

Conversely -- and this will be used later -- suppose $N$ is a semimartingale. If $J^m$ is for each $m$ a local martingale, then writing
$J^m = (\E(\one_{\rrbracket T_m,T\rrbracket}\sint N)_- \one_{\rrbracket T_m,T\rrbracket} )\sint N$
and observing that
$\E( \one_{\rrbracket T_m,T\rrbracket}\sint N)_-\ne0$ on $\rrbracket T_m,T_{m+1}\rrbracket$
by the definition of $T_m$ shows that $\one_{\rrbracket T_m,T_{m+1}\rrbracket}\sint N$ is a local martingale for each $m$, and then so is $N$. Again this still holds if we replace $J^m$ by $\beta_m J^m$ for an $\F_{T_m}$-measurable $\beta_m>0$, and again local square-integrability transfers, from $J^m$ (or $\beta_m J^m$) to $N$.
\end{remark}
The next two propositions give some information about the structure of $\E$-local martingales and $(\E,\ZN)$-martingales. The results are almost literally taken from Corollaries 3.16 and 3.17 in \cite{CKS98}; the proofs there still work for our generalisation.
\begin{prop}
\label{propelocalmart}
Let $Y$ be a special semimartingale and $Y=Y_0+M^Y+ A^Y$ its canonical decomposition. Then $Y$ is an $\E$-local martingale if and only if $[M^Y,N]$ is locally \hbox{$P$-integrable} and $A^Y=-\la M^Y,N\ra$.
\end{prop}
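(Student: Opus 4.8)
The plan is to prove the equivalence directly from the definition of an $\E$-local martingale, reducing the statement about the pieces ${}^{T_m}Y$ and ${}^{T_m}\E(N)$ to a clean condition on the canonical decomposition of $Y$. First I would recall that $Y$ is an $\E$-local martingale if and only if the product ${}^{T_m}Y\cdot{}^{T_m}\E(N)$ is a local $P$-martingale for every $m$. Since the stopping times $T_m$ increase stationarily to $T$, it suffices to work on each stochastic interval $\rrbracket T_m,T_{m+1}\rrbracket$, where by definition ${}^{T_m}\E(N)_-\ne0$, so that the stochastic exponential does not vanish and we may divide by it.

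Next I would apply integration by parts to the product $({}^{T_m}Y)\,({}^{T_m}\E(N))$. Writing $\E:={}^{T_m}\E(N)=\E(\one_{\rrbracket T_m,T\rrbracket}\sint N)$, which solves $d\E=\E_-\,d(\one_{\rrbracket T_m,T\rrbracket}\sint N)$, and using the canonical decomposition $Y=Y_0+M^Y+A^Y$, the product rule gives
$$
d\big({}^{T_m}Y\,\E\big)
=\E_-\,d\,{}^{T_m}Y+({}^{T_m}Y)_-\,d\E+d[{}^{T_m}Y,\E].
$$
The term $\E_-\,d\,{}^{T_m}Y$ splits into a local-martingale part $\E_-\,dM^Y$ and the finite-variation part $\E_-\,dA^Y$; the term $({}^{T_m}Y)_-\,d\E$ is a stochastic integral against the local martingale $\E$ and hence a local martingale; and the quadratic-covariation term $d[{}^{T_m}Y,\E]=\E_-\,d[{}^{T_m}Y,\one_{\rrbracket T_m,T\rrbracket}\sint N]$ contributes both a local-martingale piece and the compensator $\E_-\,d\la M^Y,N\ra$ on $\rrbracket T_m,T\rrbracket$. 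Here I would invoke the hypothesis that $[M^Y,N]$ is locally $P$-integrable precisely to guarantee that $\la M^Y,N\ra$ exists and that $[M^Y,N]-\la M^Y,N\ra$ is a local martingale; this is the point where that integrability condition is essential.

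Collecting terms, the product ${}^{T_m}Y\,\E$ equals a local martingale plus $\int\E_-\,d\big(A^Y+\la M^Y,N\ra\big)$ on $\rrbracket T_m,T\rrbracket$. Because $\E_-\ne0$ on this interval and $A^Y+\la M^Y,N\ra$ is predictable of finite variation, the product is a local martingale if and only if this finite-variation integral vanishes, which (since $\E_-$ never vanishes) is equivalent to $A^Y+\la M^Y,N\ra=0$, i.e.\ $A^Y=-\la M^Y,N\ra$, on each $\rrbracket T_m,T_{m+1}\rrbracket$. As $m$ ranges over $\N$ these intervals cover $\rrbracket0,T\rrbracket$, so the condition holds globally, giving the claimed equivalence. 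The main obstacle I anticipate is the careful bookkeeping of the covariation term: one must justify that $[M^Y,N]$ being locally integrable is exactly what lets $\la M^Y,N\ra$ be defined and separated off, and one must argue that a predictable finite-variation local martingale started at an appropriate point is constant, so that the drift condition is both necessary and sufficient. The localisation across the $T_m$ and the nonvanishing of $\E_-$ make this rigorous but require attention.
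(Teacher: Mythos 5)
Your strategy---integration by parts for ${}^{T_m}Y\,{}^{T_m}\E(N)$ on each interval $\rrbracket T_m,T_{m+1}\rrbracket$, compensation of the covariation term, and the fact that a predictable finite-variation local martingale vanishes---is the natural one, and it is essentially the computation behind the result the paper relies on (the paper gives no proof of its own; it cites Corollaries 3.16 and 3.17 of Choulli--Krawczyk--Stricker and asserts that those proofs carry over). One minor omission: in your bookkeeping, $[{}^{T_m}Y,\E]$ involves $[Y,N]=[M^Y,N]+[A^Y,N]$, and the claim that the $[A^Y,N]$ contribution is a local martingale is Yoeurp's lemma ($A^Y$ predictable of finite variation, $N$ a local martingale); this must be invoked, since only the $[M^Y,N]$ part is handled by subtracting $\la M^Y,N\ra$.

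The substantive gap is in the ``only if'' direction. You prove the equivalence under the \emph{standing assumption} that $[M^Y,N]$ is locally $P$-integrable, but in the forward implication of the proposition this integrability is part of the \emph{conclusion}, not a hypothesis: without it, $\la M^Y,N\ra$ does not exist and the identity $A^Y=-\la M^Y,N\ra$ is not even well posed. As written, you show only that, given integrability, $Y$ is an $\E$-local martingale iff $A^Y=-\la M^Y,N\ra$, which is strictly weaker than the statement. To close the gap, observe that if ${}^{T_m}Y\,{}^{T_m}\E(N)$ is a local martingale, then your decomposition exhibits $\int\E_-\one_{\rrbracket T_m,T\rrbracket}\,(dA^Y+d[M^Y,N])$ as a finite-variation local martingale, hence one of locally integrable variation (standard fact). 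Since $Y$ is special, $A^Y$ has locally integrable variation; and since a stochastic exponential can reach zero only by a jump, $\E_{T_{m+1}-}\ne0$, so $\one_{\rrbracket T_m,T_{m+1}\rrbracket}/\E_-$ is locally bounded (stop when $|\E_-|$ first drops below $1/k$; these times equal $T_{m+1}$ for $k$ large). Combining these, $\one_{\rrbracket T_m,T_{m+1}\rrbracket}\sint[M^Y,N]$ has locally integrable variation for every $m$, and because the $T_m$ increase stationarily to $T$, it follows that $[M^Y,N]$ is locally $P$-integrable. Only after this step does $\la M^Y,N\ra$ exist, so that your concluding argument (a predictable finite-variation local martingale null at its start vanishes, then divide by $\E_-\ne0$) can be applied to give $A^Y=-\la M^Y,N\ra$.
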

\begin{prop}
\label{propemart}
A semimartingale of the form $Y=Y_0+M^Y-\la M^Y,N\ra$ and satisfying \hbox{$E\big[Y^*_T \big( \ZNTm {}^{T_m}\E(N) \big)^*_T\big]<\infty$} for any $m\in\N$ is an $(\E,\ZN)$-martingale.
\end{prop}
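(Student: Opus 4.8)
The plan is to verify directly, for each fixed $m\in\N$, the two requirements in the definition of an $(\E,\ZN)$-martingale: the integrability $E\big[|Y_{T_m}\ZNTm\,{}^{T_m}\E(N)_{T_{m+1}}|\big]<\infty$, and the assertion that $U_m:={}^{T_m}Y\,\big(\ZNTm\,{}^{T_m}\E(N)\big)$ is a true $P$-martingale. The backbone is Proposition \ref{propelocalmart}: since $Y=Y_0+M^Y-\la M^Y,N\ra$ has canonical drift $A^Y=-\la M^Y,N\ra$ (and $[M^Y,N]$ is locally $P$-integrable, as is implicit in the existence of $\la M^Y,N\ra$), the process $Y$ is an $\E$-local martingale, so $V_m:={}^{T_m}Y\,{}^{T_m}\E(N)$ is already a local $P$-martingale. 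What is left is to insert the $\F_{T_m}$-measurable weight $\ZNTm$ and to promote ``local'' to ``true'', and for both I would use the hypothesis $E\big[Y^*_T\,(\ZNTm\,{}^{T_m}\E(N))^*_T\big]<\infty$ as a domination bound.

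First I would record that $V_m$ vanishes on $\llbracket0,T_m\rrbracket$, because there ${}^{T_m}Y=Y-Y^{T_m}=0$ and ${}^{T_m}\E(N)=\E(N-N^{T_m})=1$. This is exactly what lets me multiply by the (only $\F_{T_m}$-measurable, and possibly non-integrable) factor $\ZNTm$ without losing the local-martingale property, by the same mechanism as in Remark \ref{Nrem}; I expect this to be the main technical obstacle. Concretely, if $(\sigma_k)$ localises $V_m$, I would localise in addition with $\rho_k:=T_m$ on $\{\ZNTm>k\}$ and $\rho_k:=+\infty$ on $\{\ZNTm\le k\}$; this is a stopping time, since $\{\ZNTm>k\}\in\F_{T_m}$ gives $\{\rho_k\le t\}=\{\ZNTm>k\}\cap\{T_m\le t\}\in\F_t$. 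On $\{\ZNTm>k\}$ the process $V_m^{\sigma_k\wedge\rho_k}$ is identically $0$ (it is stopped at a time $\le T_m$, where $V_m=0$), while on $\{\ZNTm\le k\}$ the weight is bounded by $k$; pulling the bounded, $\F_{T_m}$-measurable random variable $\ZNTm\one_{\{\ZNTm\le k\}}$ through the conditional expectation then shows that $\ZNTm\,V_m^{\sigma_k\wedge\rho_k}$ is a genuine martingale. Since $\sigma_k\wedge\rho_k\uparrow+\infty$ as $\ZN$ is finite-valued, $U_m=\ZNTm V_m$ is a local $P$-martingale.

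Next I would upgrade $U_m$ to a true martingale by domination. Using $|{}^{T_m}Y_t|\le2Y^*_T$ and $|{}^{T_m}\E(N)_t|\le({}^{T_m}\E(N))^*_T$ for $t\in[0,T]$, together with the fact that $\ZNTm$ is a positive scalar multiplier, I get the uniform bound $\sup_{t\le T}|U_{m,t}|\le2Y^*_T\,(\ZNTm\,{}^{T_m}\E(N))^*_T=:W$, and $E[W]<\infty$ by hypothesis. A local martingale on $[0,T]$ dominated by a single integrable random variable is a uniformly integrable, hence true, martingale: taking a localising sequence $(\tau_k)$ and passing to the limit in $E\big[\big(U_m^{\tau_k}\big)_t\mid\F_s\big]=\big(U_m^{\tau_k}\big)_s$ via dominated convergence for conditional expectations yields the martingale property of $U_m$. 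This settles the second requirement; the first is then immediate, since $|Y_{T_m}\ZNTm\,{}^{T_m}\E(N)_{T_{m+1}}|\le Y^*_T\,(\ZNTm\,{}^{T_m}\E(N))^*_T\le W\in L^1$. The whole argument is the straightforward weighted analogue of Corollary 3.17 in \cite{CKS98}, the only genuinely new point being the localisation that handles the possible non-integrability of $\ZNTm$.
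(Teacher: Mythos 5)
Your proof is correct and follows essentially the same route as the paper, whose ``proof'' is simply the citation to Corollaries 3.16 and 3.17 in \cite{CKS98}: you use Proposition \ref{propelocalmart} to get that ${}^{T_m}Y\,{}^{T_m}\E(N)$ is a local $P$-martingale, and then the domination hypothesis to upgrade to a true martingale, which is precisely the CKS98 argument. The one genuinely new ingredient needed for the $(\E,\ZN)$-generalisation --- inserting the possibly non-integrable $\F_{T_m}$-measurable weight $\ZNTm$, which works only because the product vanishes on $\llbracket0,T_m\rrbracket$ --- is exactly the mechanism the paper itself invokes in Remark \ref{Nrem}, and your localisation via $\rho_k$ handles it correctly.
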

We also need the following definitions.
\begin{defi}
We say that $(\E,\ZN)$ with $\E=\E(N)$ is \emph{regular and square-integrable} if $\one_{\rrbracket T_m,T\rrbracket}\sint( \ZNTm {}^{T_m}\E(N))$ is a square-integrable (true) $P$-martingale and $\ZNTm$ is square-integrable 
for any $m$.
\end{defi}
\bl\label{lemart}
Suppose $(\E,\ZN)$ with $\E=\E(N)$ is regular and square-integrable. Let $(X^n)_{n\in\N}$ be a sequence of $(\E,\ZN)$-martingales with $X^n_T\in  \LiiP$ and $X^n_T\to H$ in $ \LiiP$ as $n\to\infty$. Then there exist a subsequence $(X^{n_\ell})_{\ell\in\N}$ and an $\E$-local martingale $X$ given by $X_T=H$ and 
\begin{equation}
\label{Xdef}
X_t:=\frac{E[H\, {}^{T_m}\E(N)_T|\F_t]}{{}^{T_m}\E(N)_t}
\quad\text{on $\llbracket T_m,T_{m+1} \llbracket$}
\end{equation}
such that $X^{n_\ell}\to X$ in the semimartingale topology (in $\cS(P)$, for short) as $\ell\to\infty$. If $\E(N)$ satisfies the reverse H\"older inequality $R_1(P)$, then $X$ is an $(\E,\ZN)$-mar\-tin\-gale.
\el
\bp
1) To show that $X$ above is an $\E$-local martingale with $X_T=H$, we argue similarly as in the proof of Proposition 3.12.iii) in \cite{CKS98}. More precisely, we exploit that we need not assume $\E(N)$ to satisfy $R_{q}(P)$ with $q=2$ as used there; it is sufficient to exploit that $\E(N)$ always satisfies $R_1(P)$ in a local sense. We define for each $m\in\N_0$ a sequence of stopping times $\tau^m_k=T_m\one_{F_k^c}+T\one_{F_k}$ with $F_k:=\big\{E\big[| {}^{T_m}\E(N)_T|\big|\F_{T_m}\big]\leq k\big\}$ for $k\in\N$. Then we rewrite \eqref{Xdef} after multiplication with $\ZNTm$ as
\begin{equation}
L_t := X_t \ZNTm {}^{T_m} \E(N)_t = E[ X_T \ZNTm {}^{T_m} \E(N)_T | \F_t]
\quad\text{on $\llbracket T_m, T_{m+1} \llbracket$}
\label{mprop}
\end{equation}
and note that the right-hand side is in $L^1(P)$ since $X_T=H$ and $\ZNTm {}^{T_m}\E(N)_T$ are both in $L^2(P)$. Hence
$
L_t \one_{\{T_m\le t<T_{m+1}\}}
$
is in $L^1(P)$ and so is then $X_{T_m} \ZNTm {}^{T_m} \E(N)_{T_m}$. To argue that $X$ is an $\E$-local martingale, we want to prove that $({}^{T_m}X \ZNTm {}^{T_m} \E(N))^{\tau^m_k}$ is a $P$-martingale, and \eqref{mprop} already gives the martingale property for the unstopped process ${}^{T_m} L$. So due to $\, {}^{T_m} X = X - X^{T_m}$, the $P$-integrability of $L_t$ and $\tau^m_k\ge T_m$, it only remains to show that
\begin{equation}
X_{T_m} \ZNTm {}^{T_m} \E(N)_{t\wedge\tau^m_k} \one_{\{T_m\le t<T_{m+1}\}} \in L^1(P) .
\label{prodint}
\end{equation}
But $\ZNTm {}^{T_m} \E(N)$ is a $P$-martingale and remains so after stopping by $\tau^m_k$, and the final value of that stopped process is
$$
\ZNTm {}^{T_m} \E(N)_{\tau^m_k}
=
\ZNTm {}^{T_m} \E(N)_{T_m} \one_{F_k^c} + \ZNTm {}^{T_m} \E(N)_T \one_{F_k}.
$$
Multiplying by $X_{T_m}$, conditioning on $\F_{T_m}$ and using the definition of $F_k$ hence gives \eqref{prodint}; indeed, we have
$$
E\big[ | X_{T_m} \ZNTm {}^{T_m} \E(N)_T \one_{F_k} | \big]
\le
E\big[ | X_{T_m} \ZNTm {}^{T_m} \E(N)_{T_m} | E[ |\E(N)_T| \,|\, \F_{T_m}] \one_{F_k} \big]
<\infty.
$$
This shows that $X$ is an $\E$-local martingale; and if $\E(N)$ satisfies $R_1(P)$, we have $F_k=\Om$, hence $\tau^m_k=T$, for $k$ large enough so that $X$ is even an $(\E,\ZN)$-martingale.

2) Now fix $m\in\N$ and take any subsequence of $(X^n)$, again denoted by $(X^n)$ in this step for ease of notation. Set $Y^{n,m} := {}^{T_m} X^n = X^n - (X^n)^{T_m}$ so that by the definition of
$(\E,\ZN)$-mar\-tin\-gales, the product of $\ZNTm {}^{T_m}\E(N)$ and $Y^{n,m}$ is a martingale. Note that $(Y^{n,m})^{\tau^m_k}=(X^n-(X^n)^{T_m})\one_{F_k}$ and $(Y^{m})^{\tau^m_k}=(X-X^{T_m})\one_{F_k}$ for each $k\in\N$. Since $X^n_T\to
X_T=H$ in $ \LiiP$ and
\be
X^n_t\, {}^{T_m}\E(N)_t=E[X^n_T\, {}^{T_m}\E(N)_T|\F_t]
\quad\text{on $\llbracket T_m,T_{m+1}\llbracket$}
\label{eq:p:emart}
\ee
for the $(\E,\ZN)$-martingales $X^n$ by \eqref{mprop}, we obtain for $n\to\infty$ that
\begin{align*}
E\big[|(X^n_{T_{m+1}\wedge \tau^m_k}-X_{T_{m+1}\wedge \tau^m_k})\ZNTm {}^{T_m}\E(N)_{T_{m+1}\wedge \tau^m_k}|\big]
&\leq E\big[|(X^n_T-H)\ZNTm {}^{T_m}\E(N)_{T}|\big]\\
& \leq \| X^n_T-H\|_{ \LiiP}\| \ZNTm {}^{T_m}\E(N)_{T}\|_{ \LiiP}
\end{align*}
tends to 0, and from the definition of $\tau_k^m$ that for $n\to\infty$,
\begin{align*}
&E\big[|(X^n_{T_{m+1}\wedge \tau^m_k}-X_{T_{m+1}\wedge \tau^m_k})\ZNTm {}^{T_m}\E(N)_{T_{m+1}\wedge \tau^m_k}|\big]\\
&=E\big[|E[(X^n_T-H)\, {}^{T_m}\E(N)_T|\F_{T_m}]\ZNTm {}^{T_m}\E(N)_{T_{m+1}\wedge \tau^m_k}|\big]\\
&\leq E\big[ |(X^n_T-H)\ZNTm {}^{T_m}\E(N)_{T}| \big] k
\longrightarrow 0.
\end{align*}
This gives $\ZNTm {}^{T_m}\E(N)_{T\wedge \tau^m_k} \, Y^{n,m}_{T\wedge \tau^m_k} \to \ZNTm{}^{T_m}\E(N)_{T\wedge \tau^m_k} \, Y^{m}_{T\wedge \tau^m_k}$ in $L^1(P)$ as $n\to\infty$ because ${}^{T_m}\E(N)_T=0$ on $\{T_{m+1}<T\}$. Theorem 4.21 in \cite{J79} then yields a subsequence
$(Y^{n_j,m})_{j\in\N}$ such that
$$\big( \ZNTm {}^{T_m}\E(N) \, Y^{n_j,m}\big)^{\tau^m_k}\longrightarrow \big( \ZNTm{}^{T_m}\E(N) \, Y^{m}\big)^{\tau^m_k}
\quad\text{locally in $\underline{\underline{H}}_{\,\rm loc}^1(P)$ as $j\to\infty$}$$
and therefore $\ZNTm {}^{T_m}\E(N)\, Y^{n_j,m} \to \ZNTm{}^{T_m}\E(N) \, Y^m$ in $\cS(P)$ as $j\to\infty$ by Theorem V.14 in \cite{P04}. Because $\frac{1}{\ZNTm {}^{T_m}\E(N)}\one_{\llbracket T_m,T_{m+1} \llbracket}$ is a semimartingale and the multiplication of semimartingales is continuous in $\cS(P)$, we get $Y^{n_j,m}\one_{\llbracket T_m,T_{m+1} \llbracket} \to Y^m\one_{\llbracket T_m,T_{m+1} \llbracket}$ in $\cS(P)$ as $j\to\infty$. Note that the subsequence $(n_j)_{j\in\N}$ depends on $m$. 

3) Now we construct the desired subsequence $(n_\ell)_{\ell\in\N}$ by a diagonal argument, as follows. Start with $m=0$ and the original sequence $(X^n)$ to obtain from step 2) a subsequence $(n_j(0))_{j\in\N}$, and take $n_1 := n_1(0)$. Then take $m=1$, apply step 2) for the subsequence $(X^{n_j(0)})_{j\in\N}$ to get a new subsequence $(n_j(1))_{j\in\N}$, and take $n_2 := n_1(1)$. Iterating this procedure yields our subsequence $(n_\ell)_{\ell\in\N}$, and we claim that $X^{n_\ell}\to X$ in $\cS(P)$ as $\ell\to\infty$. To see this, use the definition of $Y^{n,m}$ to write
\be
X^{n_\ell}=\sum_{m=0}^\infty Y^{n_\ell}\one_{\llbracket T_m,T_{m+1}\llbracket}+\sum_{m=0}^\infty
X^{n_\ell}_{T_m}\one_{\llbracket T_m,T_{m+1}\llbracket}+ X^{n_\ell}_T\one_{\llbracket T\rrbracket}
.\label{eq2:p:emart}
\ee
Since $Y^{{n_j(m)},m}\one_{\llbracket T_m,T_{m+1}\llbracket} \to Y^m\one_{\llbracket T_m,T_{m+1}\llbracket}$ as
$j\to\infty$, the first sum converges in $\cS(P)$ to
$$
\sum_{m=0}^\infty Y^{m}\one_{\llbracket T_m,T_{m+1}\llbracket}
=
X-\sum_{m=0}^\infty X_{T_m}\one_{\llbracket T_m,T_{m+1} \llbracket}-X_T\one_{\llbracket T\rrbracket} ,
$$
where the equality now uses the definition of $Y^m = X - X^{T_m}$. To obtain the convergence of the second sum in \eqref{eq2:p:emart}, we
observe that
\begin{align*}
E\big[ |X^n_{T_m}-X_{T_m}| \ZNTm \big]
&=
E\big[|E[(X^n_T-H)\, {}^{T_m}\E(N)_{T}|\F_{T_m}]| \ZNTm \big]
\\
&\leq \|X^n_T-H\|_{ \LiiP}\| \ZNTm {}^{T_m}\E(N)_{T}\|_{ \LiiP}
\end{align*}
by \eqref{eq:p:emart} for all $m\in\N_0$ and for $m=\infty$ with $T_\infty:=T$ and therefore
as $\ell\to\infty$,
\begin{equation}
\label{aux:conv}
\sum_{m=0}^\infty \ZNTm X^{n_\ell}_{T_m}\one_{\llbracket T_m,T_{m+1}\llbracket}+ X^{n_\ell}_T\one_{\llbracket T\rrbracket}
\longrightarrow
\sum_{m=0}^\infty \ZNTm X_{T_m}\one_{\llbracket T_m,T_{m+1} \llbracket}+X_T\one_{\llbracket T\rrbracket}
\end{equation}
locally in $\underline{\underline{H}}^1(P)$ with the localising sequence $(T_m)$. As local convergence in
$\underline{\underline{H}}^1(P)$ implies convergence in $\cS(P)$ again by Theorem V.14 in \cite{P04}, \eqref{aux:conv} also holds in $\cS(P)$. Because 
$
\sum_{m=0}^\infty \frac{1}{\ZNTm} \one_{\llbracket T_m, T_{m+1} \llbracket}
$
is a semimartingale and the multiplication of semimartingales is continuous in $\cS(P)$, this completes the proof.
\ep
%
\begin{cor}\label{coremart}
Suppose that $(\E,\ZN)$ with $\E=\E(N)$ is regular and square-integrable, $S=S_0+M-\la M, N\ra$ is in $\HzlocP$ and
$(\vt^n)_{n\in\N}$ is a sequence in $\TS$ such that $\vt^n\sint S_T\to H$ in $ \LiiP$. Then $\vt^n\sint S$ is an
$(\E,\ZN)$-martingale for each $n\in\N$, and there exist $\vt\in\TB$ with $\vt\sint S_T=H$ and
$$
\vt\sint S_t=\frac{E[ (\vt\sint S_T) \, {}^{T_m}\E(N)_T|\F_t]}{{}^{T_m}\E(N)_t}=\frac{E[H\, {}^{T_m}\E(N)_T|\F_t]}{{}^{T_m}\E(N)_t}
\quad\text{on $\llbracket T_m,T_{m+1} \llbracket$}
$$
and a subsequence $(\vt^{n_k})_{k\in\N}$ in $\TS$ such that $\vt^{n_k}\sint S\to\vt\sint S$ in $\cS(P)$ as
$k\to\infty$.
\end{cor}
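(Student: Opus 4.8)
The plan is to derive this straight from Lemma~\ref{lemart} applied to the sequence $X^n := \vt^n\sint S$, once I have checked the two hypotheses of that lemma: that each $X^n$ is an $(\E,\ZN)$-martingale, and that $X^n_T\to H$ in $\LiiP$ (which is the standing assumption). For the first, I would use Proposition~\ref{propemart}. Since $S = S_0 + M - \la M,N\ra$, the canonical decomposition of $X^n$ has martingale part $M^{X^n} = \vt^n\sint M$ and finite-variation part $\vt^n\sint(-\la M,N\ra) = -\la\vt^n\sint M,N\ra = -\la M^{X^n},N\ra$, using the rule $\la\vt^n\sint M,N\ra = \vt^n\sint\la M,N\ra$ for predictable integrands. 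Thus $X^n = M^{X^n} - \la M^{X^n},N\ra$ is exactly of the form required by Proposition~\ref{propemart}, and only the integrability condition $E[(X^n)^*_T(\ZNTm\,{}^{T_m}\E(N))^*_T]<\infty$ remains. Here $\vt^n\in\TS = \LiiM\cap\LiiA$ gives $M^{X^n}\in\MzP$, so $(M^{X^n})^*_T\in\LiiP$ by Doob, while the total variation of $-\la M^{X^n},N\ra$ equals that of $\vt^n\sint A$ and lies in $\LiiP$ because $\vt^n\in\LiiA$; hence $(X^n)^*_T\in\LiiP$. On the other side, regularity and square-integrability of $(\E,\ZN)$ give $(\ZNTm\,{}^{T_m}\E(N))^*_T\in\LiiP$, since $\ZNTm\,{}^{T_m}\E(N) = \ZNTm + \one_{\rrbracket T_m,T\rrbracket}\sint(\ZNTm\,{}^{T_m}\E(N))$ with $\ZNTm\in\LiiP$ and the second summand a square-integrable martingale (Doob again). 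Cauchy--Schwarz then yields the integrability condition, so each $X^n$ is an $(\E,\ZN)$-martingale.

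With this in hand, Lemma~\ref{lemart} applies verbatim and produces a subsequence $(X^{n_k})$ together with an $\E$-local martingale $X$ satisfying $X_T = H$ and the conditional-expectation formula \eqref{Xdef}, namely $X_t = E[H\,{}^{T_m}\E(N)_T\mid\F_t]/{}^{T_m}\E(N)_t$ on $\llbracket T_m,T_{m+1}\llbracket$, such that $X^{n_k}\to X$ in $\cS(P)$. This already delivers the representation formula claimed in the corollary (the two displayed quotients coincide once we know $X_T = \vt\sint S_T = H$), so what is left is to recognise the abstract limit $X$ as a stochastic integral $\vt\sint S$ with an integrand $\vt$ lying in $\TB$.

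For the identification I would invoke the closedness of the image of the stochastic integral map in the semimartingale topology (M\'emin's theorem): as each $X^{n_k} = \vt^{n_k}\sint S$ and $X^{n_k}\to X$ in $\cS(P)$, there is $\vt\in\cL(S)$ with $X = \vt\sint S$, and in particular $\vt\sint S_T = X_T = H$. It then remains to verify $\vt\in\TB = \overline{\Theta(\R^d)}$, for which $(\vt^{n_k})\subseteq\TS = \Theta(\R^d)$ serves as the approximating sequence. The first defining condition of admissibility holds because $\vt^{n_k}\sint S_T = X^{n_k}_T\to H = \vt\sint S_T$ in $\LiiP$ by hypothesis, and the second holds because convergence in $\cS(P)$ forces $\sup_{t\le T}|X^{n_k}_t - X_t|\to0$ in probability, hence $\vt^{n_k}\sint S_\tau = X^{n_k}_\tau\to X_\tau = \vt\sint S_\tau$ in probability for every stopping time $\tau$. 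This gives $\vt\in\TB$ and finishes the proof.

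I expect the genuine content to sit in the last step, the passage from the $\cS(P)$-limit furnished by Lemma~\ref{lemart} to an actual stochastic integral $\vt\sint S$ with $\vt\in\TB$; this is where the closedness result for stochastic integrals in the semimartingale topology does the real work. The first paragraph is a routine verification of the hypotheses of Proposition~\ref{propemart} through Doob's inequality and Cauchy--Schwarz, and the second is a direct citation of Lemma~\ref{lemart}. The point of the whole arrangement is that having engineered convergence in the \emph{right} topology, the two defining conditions of $\TB$ both drop out from a single mode of convergence together with the given $L^2$-convergence of terminal values.
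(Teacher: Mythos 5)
Your proposal is correct and follows essentially the same route as the paper's proof: Proposition \ref{propemart} to get that each $\vt^n\sint S$ is an $(\E,\ZN)$-martingale, Lemma \ref{lemart} for the subsequence and the $\cS(P)$-limit $X$, M\'emin's closedness theorem to write $X=\vt\sint S$, and ucp-convergence to check both conditions defining $\TB$. The only difference is that you spell out the integrability hypothesis of Proposition \ref{propemart} (via Doob and Cauchy--Schwarz), which the paper treats as immediate; that verification is correct.
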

\bp
By Proposition \ref{propemart}, $S$ is an $\E$-local martingale and $\vt^n\sint S$ is an $(\E,\ZN)$-martingale for each $n$. Then Lemma \ref{lemart} gives the existence of an $\E$-local martingale $X$ and a subsequence $(\vt^{n_k})$ in $\TS$ such that $X_T=H$ and $X_t=\frac{E[H\, {}^{T_m}\E(N)_T|\F_t]}{{}^{T_m}\E(N)_t}$ on $\llbracket T_m,T_{m+1} \llbracket$ and $\vt^{n_k}\sint S\to X$ in $\cS(P)$. As the space of stochastic integrals is closed under convergence in $\cS(P)$ by Theorem V.4 in \cite{Memin}, there exists some $\vt\in \cL(S)$ with $\vt\sint S=X$. Since convergence in $\cS(P)$ implies ucp-convergence and therefore that $\vt^{n_k}\sint S_\tau \to \vt\sint S_\tau$ in probability for all
stopping times $\tau$, we obtain that $\vt\in\TB$ which completes the proof.
\ep
To deal with the fact that different integrands may lead to the same stochastic integral (or, in financial terms, that we may have redundant assets), we introduce the \emph{projection on the predictable range}. For a detailed explanation of the related issues of selecting particular representatives of equivalence classes of integrands as well as for sufficient conditions for the closedness of the projection on the predictable range for certain correspondences, we refer the reader to \cite{CS11}.
\begin{prop}
For each $\RR^d$-valued semimartingale $Y$, there exists an $\R^{d \times d}$-valued predictable process $\Pi^Y$, called the \emph{projection on the predictable range of $Y$}, which takes values in the orthogonal projections in $\R^d$ and has the following property: If $\vt$ is in $\cL(Y)$ and $\vp$ is predictable, then $\vp$ is in $\cL(Y)$ with $\vp\sint Y = \vt\sint Y$ (up to indistinguishability) if and only if $\Pi^Y\vt=\Pi^Y\vp$ $P_B$-a.e. We choose and fix one version of $\Pi^Y$.
\end{prop}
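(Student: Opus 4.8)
The plan is to build $\Pi^Y$ as the orthogonal projection onto the ``instantaneous range'' of $Y$, encoded through a single symmetric positive semidefinite predictable matrix process assembled from the local characteristics of $Y$. I fix a \emph{radial} truncation function $h$ and, analogously to the treatment of $S$ in Section~\ref{sec:pf}, a predictable, RCLL, strictly increasing reference process $B$ relative to which the characteristics of $Y$ admit densities: I write the continuous-martingale part as $\la Y^c\ra = c^Y\sint B$ with $c^Y$ symmetric positive semidefinite and $d\times d$-matrix-valued, the compensator of the jump measure as $\nu = F\sint B$ with $F_\omt$ a predictable L\'evy-type kernel on $\R^d$, and the drift as $B^{(h)} = b^{(h)}\sint B$. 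I then set
$$
a_\omt := c^Y_\omt + \int_{\R^d}\frac{xx^\T}{1+|x|^2}\,F_\omt(dx) + b^{(h)}_\omt\big(b^{(h)}_\omt\big)^\T
$$
and define $\Pi^Y_\omt$ to be the orthogonal projection of $\R^d$ onto $\Image a_\omt$.

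Predictability and the claimed range of values then come essentially for free. Each summand of $a$ is symmetric positive semidefinite, hence so is $a$; the integral term is predictable because $F$ is a predictable kernel and $x\mapsto \frac{xx^\T}{1+|x|^2}$ is bounded and continuous. Since the assignment sending a symmetric positive semidefinite matrix to the orthogonal projection onto its range is Borel measurable --- for instance as the pointwise limit of the continuous maps $a\mapsto a(a+\tfrac1n I)^{-1}$ --- the process $\Pi^Y$ is predictable and takes values in the orthogonal projections. (Alternatively, predictability can be obtained by producing a Castaing representation of the closed-valued predictable correspondence $\omt\mapsto\Image a_\omt$ via Proposition~\ref{Castaing}.) Fixing one version is then purely a matter of convention.

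The heart of the matter, and the step I expect to be the main obstacle, is the pointwise characterisation
$$
\psi\sint Y \equiv 0 \quad\Longleftrightarrow\quad \psi_\omt\in\Ker a_\omt \ \ P_B\text{-a.e.}
$$
for predictable $\psi$, from which the Proposition follows by linearity: given $\vt\in\cL(Y)$ and predictable $\vp$, apply this to $\psi:=\vt-\vp$, noting that $\Pi^Y\vt=\Pi^Y\vp$ $P_B$-a.e.\ is exactly $\psi\in\Ker a$ $P_B$-a.e., while for the ``if'' direction one checks that $\psi\in\Ker a$ forces the stochastic-integrability conditions defining $\cL(Y)$ to hold trivially, so that $\vp=\vt-\psi\in\cL(Y)$ with $\vp\sint Y=\vt\sint Y$. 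To prove the equivalence I would decompose $\psi\sint Y$ along the canonical representation of $Y$ into its continuous-martingale, purely-discontinuous-martingale and predictable finite-variation parts. Vanishing of the jumps of $\psi\sint Y$ is equivalent to $\psi^\T x=0$ for $F_\omt$-a.e.\ $x$, i.e.\ to $\psi^\T\big(\int \frac{xx^\T}{1+|x|^2}F_\omt(dx)\big)\psi=0$; vanishing of the continuous martingale part is equivalent to $\psi^\T c^Y_\omt\psi=0$; and, once the jumps vanish, the radial choice of $h$ yields $\psi^\T h(x)=0$ $F$-a.e., so the compensated and large-jump contributions drop out and the residual continuous drift $\psi\sint B^{(h)}$ vanishes iff $\psi^\T b^{(h)}_\omt=0$. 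Because $a$ is a sum of positive semidefinite terms, these three conditions hold simultaneously precisely when $\psi^\T a_\omt\psi=0$, i.e.\ when $\psi_\omt\in\Ker a_\omt$. The delicate points are the correct bookkeeping of the truncation $h$ (verifying that the drift condition is independent of the radial truncation chosen, using that $h(x)-h'(x)$ is a scalar multiple of $x$ on the support of $F$) and the verification that $\psi\in\Ker a$ indeed secures stochastic integrability of $\psi$ against $Y$ in the sense of Jacod--Shiryaev; both become routine once the decomposition is in place, but require care because $Y$ need not be special.
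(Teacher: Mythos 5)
Your construction is correct in substance, but note how it relates to the paper: the paper does not prove this proposition at all --- its proof is the single line ``See Lemma 5.3 in \cite{CS11}'' --- so what you have produced is in effect a self-contained reconstruction of that cited lemma, and your route (orthogonal projection onto $\Image a_\omt$ for the predictable positive semidefinite matrix $a = c^Y + \int \frac{xx^\T}{1+|x|^2}\,F(dx) + b^{(h)}(b^{(h)})^\T$ assembled from the characteristics, together with the kernel characterisation of the null integrands) is the standard one and in the same spirit as the reference. Your inclusion of the drift term $b^{(h)}(b^{(h)})^\T$ is essential and correct: without it, a $\psi$ in $\Ker c^Y$ killing all jumps but with $\psi^\T b^{(h)}\ne0$ would produce a nonzero finite-variation integral, so the projection onto the range of the ``diffusion plus jumps'' part alone would not have the stated property. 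One step in your sketch does need repair rather than mere care: for a general $\psi\in\cL(Y)$ you may not ``decompose $\psi\sint Y$ along the canonical representation of $Y$'', since vector stochastic integration does not commute with that representation (the individual integrals need not exist). The clean order of argument is: first use $\Delta(\psi\sint Y)=\psi^\T\Delta Y$, so $\psi\sint Y\equiv0$ forces $\one_{\{\psi^\T x\ne0\}}\ast\mu^Y=0$ and hence, by taking dual predictable projections, $F(\{\psi^\T x\ne0\})=0$ $P_B$-a.e.; only then do the compensated-jump and large-jump contributions of $\psi\sint Y$ genuinely vanish (here your radial choice of $h$ enters), after which $\psi\sint Y$ is continuous and the local-martingale/predictable-finite-variation parts must vanish separately. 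Alternatively, one can invoke the formula for the characteristics of a vector stochastic integral and the fact that a semimartingale null at zero vanishes iff all its characteristics do; this also handles the converse direction, where the identity $E[\one_{\{\psi^\T x\ne0\}}\ast\mu^Y_T]=E[\one_{\{\psi^\T x\ne0\}}\ast\nu^Y_T]$ converts $F(\{\psi^\T x\ne0\})=0$ into actual vanishing of jumps and the integrability conditions defining $\cL(Y)$ hold trivially on $\Ker a$. With that rearrangement, your argument (linearity, $\Ker a=(\Image a)^\perp$, Borel measurability of $a\mapsto\lim_{n\to\infty} a(a+\tfrac1n I)^{-1}$ composed with the predictable process $a$) is complete.
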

\bp
See Lemma 5.3 in \cite{CS11}.
\ep
\begin{example}
\label{Exa:Ito}
For the frequently used It\^o process models of the form
$$
\frac{dY^i_t}{Y^i_t} = (\mu^i_t - r_t) \, dt + \sum_{k=1}^m \sigma^{ik}_t \, dW^k_t,
$$
$\Pi^Y$ is the projection on the orthogonal complement of the kernel of $\sigma\sigma^\top$. If each $\sigma_t \sigma^\top_t$ is invertible (as is usually assumed), $\Pi^Y$ is just the identity. This holds in particular when $m=d$ and each $\sigma_t$ is invertible, i.e.~when the model is complete without the constraints.
\end{example}

After these preparations, we obtain the closedness of $G_T(\TBC)$ by the following theorem. We recall that this implies the \emph{existence of a solution} to the constrained mean-variance hedging problem \eqref{MVH}, for any payoff $H\in  \LiiP$, if $C$ has also convex values.

\bt\label{thmclosedTBC}
Suppose that $(\E,\ZN)$ with $\E=\E(N)$ is regular and square-integrable and $S=S_0+M-\la M, N\ra$ is in $\HzlocP$ so that $S$ is an $\E$-local martingale by Proposition \ref{propelocalmart}. Let $C:\Ombar\to 2^{\R^d}\setminus\{\emptyset\}$ be a predictable correspondence with closed values such that the projection of $C$ on the predictable range of $S$ is closed, i.e. $\Pi^S\omt C\omt$ is $P_B$-a.e.~closed. Then $G_T(\TBC)$ is closed in $\LiiP$.
\et
\bp
Let $(\vt^n)$ be a sequence in $\TBC$ with $\vt^n\sint S_T\to H$ in $\LiiP$. Using the definition of $\TBC$ and a diagonal argument yields a sequence $(\vp^n)$ in $\Theta(C)$ with $\vp^n\sint S_T\to H$ in $\LiiP$. Then Corollary \ref{coremart} implies that there exist $\vt\in\TB$ with $\vt\sint S_T=H$ and a subsequence, again indexed by $n$, with $\vp^n\sint S\to \vt\sint S$ in $\cS(P)$. Since $\cC\sint S=\{\psi\sint S \,|\, \psi\in\cC\}$ is closed in $\cS(P)$ by Theorem 4.5 in \cite{CS11}, the integrand $\vt$ can be chosen
$C$-valued; this uses the assumption on $\Pi^S C$. As convergence in $\cS(P)$ implies ucp-convergence, we obtain
$\vp^n\sint S_\tau\to
\vt\sint S_\tau$ in probability for all stopping times $\tau$, and therefore $\vt$ is in $\TBC$. This completes the
proof.
\ep
\begin{remark}\label{compare}
Let us briefly compare Theorem \ref{thmclosedTBC} to the main result of Theorem 3.12 in \cite{CS10a}. The latter imposes the extra assumption that $\E(N)$ satisfies the reverse H\"older inequality $R_2(P)$, and proves that the space $G_T(\TSC)$ is then closed in $\LiiP$. So Theorem \ref{thmclosedTBC} here has a weaker assumption; but since $\TBC$ is bigger than the space $\TSC$ considered in \cite{CS10a}, one also feels it could be easier for $G_T(\TBC)$ to be closed in $ \LiiP$.
\end{remark}
Having clarified the existence of a solution to \eqref{MVH} or \eqref{ap-1}, our goal in the sequel is to describe its \emph{structure} in more detail. This is done via stochastic control techniques and in particular dynamic programming, and for that, we need certain properties for the space $\TBC$ of strategies we work with. This is the reason why we slightly changed the definition in comparison to \cite{CK07}: We want to show, without assuming that there exists an ELMM $Q$ for $S$ with $\frac{dQ}{dP}\in\LiiP$, that $\TBC$ is stable under bifurcation and almost stable. 
%
\bl\label{las}
For any predictable correspondence $C:\Ombar\to 2^{\R^d}\setminus\{\emptyset\}$, the space $\TBC$ has the following properties:

\vspace{0.25\baselineskip}
{\bf1) }$\TBC$ is \emph{stable under bifurcation}: If $\vt, \varphi$ are in $\TBC$, $\sigma$ is a stopping time, $F\in\F_\sigma$ and $\vartheta \one_{\llbracket 0,\sigma\rrbracket }=\varphi \one_{\llbracket 0,\sigma\rrbracket }$, then $\psi=\vt\one_F+\varphi\one_{F^c}$ is also in $\TBC$.

\vspace{0.25\baselineskip}
{\bf 2) }$\TBC$ is \emph{almost stable}: For all $\vt, \varphi$ in $\TBC$, stopping times $\sigma$ and $F\in\F_{\sigma}$ with $P[F]>0$, there is for each $\varepsilon\in(0,P[F])$ a
set
$F_{\varepsilon}\subseteq F$ in $\F_{\sigma}$ with $P[F\setminus F_{\varepsilon}]\leq\varepsilon$ such that
$$
\hbox{$\psi := \vt \one_{F_{\varepsilon}^{c}}+(\vt \one_{\llbracket 0,\sigma\rrbracket }+\varphi \one_{\rrbracket \sigma,T\rrbracket })\one_{F_{\varepsilon}}$ is in $\TBC$}
$$
and $\vt\sint S_{\sigma}$ is uniformly bounded on $F_{\varepsilon}$.
\el
\begin{proof}
By the definition of $\TBC$, we must in both cases find a sequence $(\psi^n)$ in $\Theta(C)$ such that $\psi^n\sint S_T \overset{ \LiiP}{\longrightarrow} \psi\sint S_T$ and $\psi^n\sint S_\tau\overset{P}{\longrightarrow}\psi\sint S_\tau$ for all stopping times $\tau$. We start with approximating sequences $(\vt^n)$ and $(\varphi^n)$ in $\Theta(C)$ for $\vt,\vp\in\TBC$.

1) For $\psi^n:=\vt^n\one_F+\varphi^n\one_{F^c}\in\Theta(C)$, the local character of stochastic integrals yields
$$\|\psi^n\sint S_T-\psi\sint S_T\|_{ \LiiP}=\|(\vt^n\sint S_T-\vt\sint S_T)\one_F+(\varphi^n\sint S_T-\varphi\sint S_T)\one_{F^c}\|_{ \LiiP}\overset{n\to\infty}{\longrightarrow}0$$
and, for all stopping times $\tau$,
$$\psi^n\sint S_\tau=(\vt^n\sint S_\tau)\one_F+(\varphi^n\sint S_\tau)\one_{F^c}\overset{P}{\longrightarrow}(\vt\sint S_\tau)\one_F+(\varphi\sint S_\tau)\one_{F^c}=\psi\sint S_\tau.
$$

2) By Egorov's theorem, we can find for each $\varepsilon\in(0,P[F])$ a set $F_{\varepsilon}\in\F_{\sigma}$ with $P[F\setminus F_{\varepsilon}]\leq \varepsilon$ such that $\vt^n\sint S_\sigma\to\vt\sint S_\sigma$ and $\varphi^n\sint S_\sigma\to\varphi\sint S_\sigma$ uniformly on $F_{\varepsilon}$. For the sequence $\psi^n := \vt^n\one_{F_{\varepsilon}^c}+(\vt^n\one_{\llbracket 0,\sigma\rrbracket }+\varphi^n\one_{\rrbracket \sigma,T\rrbracket })\one_{ F_{\varepsilon}}$ in $\Theta(C)$, we obtain again from the local character of stochastic integrals that
\begin{align*}
&\big\|\psi^n\sint S_T-\big(\vt \one_{F_{\varepsilon}^{c}}+(\vt \one_{\llbracket 0,\sigma\rrbracket }+\varphi \one_{\rrbracket \sigma,T\rrbracket })\one_{F_{\varepsilon}}\big)\sint S_T\big\|_{ \LiiP}\\
&\leq \|(\vt^n\sint S_T-\vt\sint S_T)\one_{ F_{\varepsilon}^{c}}\|_{ \LiiP}+\|(\vt^n\sint S_\sigma-\vt\sint
S_\sigma)\one_{ F_{\varepsilon}}\|_{ \LiiP}\\ 
&\phantom{\le\ }+\|(\varphi^n\sint S_{\sigma}-\varphi\sint S_{\sigma})\one_{
F_{\varepsilon}}\|_{ \LiiP}+\|(\varphi^n\sint S_T-\varphi\sint S_T)\one_{
F_{\varepsilon}}\|_{ \LiiP}\overset{n\to\infty}\longrightarrow0,
\end{align*}
where the first and the last term on the right-hand side converge to zero by the choice of $(\vt^n)$ and $(\varphi^n)$ and the two middle terms by the uniform convergence on $ F_{\varepsilon}$. Since
\begin{align*}
\psi^n\sint S_\tau
&=
(\vt^n\sint S_\tau)\one_{ F_{\varepsilon}^{c}}+(\vt^n\sint S_{\sigma\wedge\tau})\one_{ F_{\varepsilon}}+(\varphi^n\sint S_\tau-\varphi^n\sint S_{\sigma\wedge\tau})\one_{F_{\varepsilon}} , 
\\
\psi\sint S_\tau
&=
(\vt\sint S_\tau)\one_{F_{\varepsilon}^{c}}+(\vt\sint S_{\sigma\wedge\tau})\one_{F_{\varepsilon}}+(\varphi\sint S_\tau-\varphi\sint S_{\sigma\wedge\tau})\one_{F_{\varepsilon}}
\end{align*}
for all stopping times $\tau$ again by the local character of stochastic integrals, we obtain that $\psi^n\sint S_\tau\overset{P}{\longrightarrow}\psi\sint S_\tau$ for all stopping times $\tau$.

Finally, to get $\vt\sint S_\sigma$ uniformly bounded on $F_{\ve}$ as well, one starts instead of $F$ with some $F_N':=F\cap\{|\vt\sint S_\sigma|\leq N\}\in\F_\sigma$. Then $F_N'\nearrow F$, so $P[F_N']$ increases to $P[F]$ as $N\to\infty$, and taking $N(\ve)$ large enough will give the result. This completes the proof.
\end{proof}
\section{Dynamic programming}\label{sec:dp}
In this section, we establish a dynamic description of the optimal strategy for \eqref{ap-1} by dynamic programming. To that end, we consider the problem to
\be
\text{minimise $E\big[ |V_T(x,\vt)|^2 \big] = E\big[ |x+\vt\sint S_T|^2 \big]$ over all $\vt\in\TBK$}\label{ap}
\ee
for a fixed $x\in\R$ and a predictable correspondence $K:\Ombar\to 2^{\RR^d}\setminus\{\emptyset\}$ with \emph{closed cones} as values. We view \eqref{ap} as a stochastic optimal control problem and want to study the corresponding value process.

We first need some notation. For any stopping time $\tau$ with values in $[0,T]$, we denote by $\cS_{\tau,T}$ the family of all stopping times $\sigma$ with $\tau\leq\sigma\leq T$ (so that $\tau\in\cS_{0,T}$). In order to describe the optimisation starting at time $\tau$ with wealth $x$, we define for $\tau\in\cS_{0,T}$, $\sigma\in\cS_{\tau,T}$ and $\vt\in\TBK$ with $\vt=0$ on $\llbracket 0, \tau \rrbracket$ the space
\begin{align*}
\fK(\vt,\sigma;\tau)
&:=
\big\{\vp\in\TBK \,\big|\, \vp=0 \text{ on $\llbracket0,\tau\rrbracket$ and $\vp\one_{\rrbracket\tau,\sigma\rrbracket}=\vt\one_{\rrbracket\tau,\sigma\rrbracket}$}\big\}
\\
&\phantom{:}=
\big\{\vp\in\TBK \,\big|\, \vp\one_{\llbracket 0,\sigma\rrbracket}=\vt\one_{\llbracket 0,\sigma\rrbracket}\big\}.
\end{align*}
Note that $\fK(\vt,\sigma;\sigma) = \fK(0,\sigma;\sigma)$. We then define for $\vp\in\fK(\vt,\sigma;\tau)$ the random variables
$$
\textstyle
\Gamma(\vp,\sigma;x,\tau,\vt)
:=
E\big[|V_T(x,\vp)|^2\big|\F_\sigma\big]
=
E\big[ |x+\int_\tau^\sigma \vt_u \,dS_u+\int_\sigma^T \vp_u \,dS_u|^2\big|\F_\sigma\big] ,
$$
and for $\sigma\in\cS_{\tau,T}$ and $\vt\in\TBK$ with $\vt=0$ on $\llbracket0,\tau\rrbracket$
$$
\bJ(\sigma;x,\tau,\vt):=\underset{\vp\in\fK(\vartheta,\sigma;\tau)}{\essinf}\Gamma(\vp,\sigma;x,\tau,\vt).
$$
Because the family $\{\Gamma(\vp,\sigma;x,\tau,\vt) \,|\, \vp\in\fK(\vartheta,\sigma;\tau)\}$ is stable under taking minima by part 1) of Lemma \ref{las}, the family $\{\bJ(\sigma;x,\tau,\vt) \,|\, \sigma\in\cS_{\tau,T}\}$ for any fixed $\tau\in\cS_{0,T}$ is a submartingale system for any $\vt\in\TBK$ with $\vt=0$ on $\llbracket 0,\tau\rrbracket$. It is a martingale system for $\tvt\in\TBK$ with $\tvt=0$ on $\llbracket 0,\tau\rrbracket$ if and only if $\tvt=\tvp^{(x,\tau)}$ is optimal for the problem to
\be
\textstyle
\text{minimise $E\big[ |x+\int_\tau^T\vp_u \,dS_u|^2 \big] = E \big[ |x+\vp\sint S_T|^2 \big]$ over all $\vp\in\fK(0,\tau;\tau)$.}\label{A}
\ee
These facts follow by standard arguments as e.g.~in Chapter 1 of \cite{EK79} or the proof of Theorem 4.1 in \cite{LP99}. We now exploit the quadratic and conic structure of our problem to obtain a decomposition of $\bJ$.
\begin{prop}\label{pJL}
For any stopping time $\tau\in\cS_{0,T}$, there exist families of random variables $\{\bL^\pm(\sigma) \,|\, \sigma\in\cS_{\tau,T}\}$ such that
\begin{align}
\bar J(\sigma;x,\tau,\vartheta)
&=
\textstyle
\underset{\vp\in\fK(\vartheta,\sigma;\tau)}{\essinf}E\big[ |x+\int_\tau^\sigma \vt_u \,dS_u+\int_\sigma^T \vp_u \,dS_u|^2 \big|\F_\sigma\big] \nonumber\\
&=
\textstyle
\big((x+\int_\tau^\sigma \vt_u \,dS_u)^+\big)^2\bL^+(\sigma)+\big((x+\int_\tau^\sigma \vt_u \,dS_u)^-\big)^2\bL^-(\sigma)\label{B}
\end{align}
for any $\sigma\in\cS_{\tau,T}$ and any $\vt\in\TBK$ with $\vt=0$ on $\llbracket 0,\tau\rrbracket$. The random variables $\bL^\pm(\sigma)$ do not depend on $x$, $\tau$ or $\vt$ and are explicitly given by
\be
\textstyle
\bL^\pm(\sigma):=\underset{\vp\in\fK(0,\sigma;\sigma)}{\essinf}E\big[|1\pm\int_\sigma^T \vp_u \,dS_u|^2\big|\F_\sigma\big]=\bJ(\sigma;\pm 1,\sigma,0).\label{C}
\ee
In particular, all the $\bL^\pm(\sigma)$ are $[0,1]$-valued, and $\bL^\pm(T)=1$.
\end{prop}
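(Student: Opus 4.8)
The plan is to \emph{define} $\bL^\pm(\sigma)$ by the right-hand side of \eqref{C}, i.e.~as $\bJ(\sigma;\pm1,\sigma,0)$, and then to verify the decomposition \eqref{B} by reducing the value to a ``fresh start'' at time $\sigma$ and exploiting the positive homogeneity forced by the cone structure of $K$. The easy assertions come first: since $|1\pm\int_\sigma^T\vp\,dS|^2\ge0$, each $\bL^\pm(\sigma)$ is an essinf of nonnegative quantities, hence $\ge0$; taking $\vp=0$ (which lies in $\fK(0,\sigma;\sigma)$ because $0\in K$) gives the value $1$, so $\bL^\pm(\sigma)\le1$; and at $\sigma=T$ the tail integral is empty, so $\bL^\pm(T)=1$.

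For \eqref{B}, write $Y:=x+\int_\tau^\sigma\vt\,dS$, which is $\F_\sigma$-measurable, and observe that $\Gamma(\vp,\sigma;x,\tau,\vt)=E[|Y+\int_\sigma^T\vp\,dS|^2\,|\,\F_\sigma]$ depends on $\vp$ only through its tail on $\rrbracket\sigma,T\rrbracket$. I would split along the $\F_\sigma$-partition $\{Y>0\}$, $\{Y<0\}$, $\{Y=0\}$. On $\{Y>0\}$, since $Y$ is $\F_\sigma$-measurable and the integral runs over $\rrbracket\sigma,T\rrbracket$, I can pull $Y$ out of both the stochastic integral and the conditional expectation; substituting $\psi:=\vp/Y$, which is still $K$-valued because $K$ is a cone and $Y>0$, gives $E[|Y+\int_\sigma^T\vp\,dS|^2\,|\,\F_\sigma]=Y^2\,E[|1+\int_\sigma^T\psi\,dS|^2\,|\,\F_\sigma]$, and taking the essential infimum yields $Y^2\bL^+(\sigma)=(Y^+)^2\bL^+(\sigma)$. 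On $\{Y<0\}$ the same computation with $Y=-|Y|$ and $\psi:=\vp/|Y|$ turns the bracket into $|1-\int_\sigma^T\psi\,dS|^2$ and produces $(Y^-)^2\bL^-(\sigma)$, while on $\{Y=0\}$ the admissible choice $\vp=0$ gives $\bJ=0$. Assembling the three pieces across these $\F_\sigma$-sets — which is legitimate because the competing strategies all vanish on $\llbracket0,\sigma\rrbracket$, so stability under bifurcation (Lemma \ref{las}, part~1) lets me paste them — gives exactly \eqref{B}.

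The delicate point, and the step I expect to be the main obstacle, is that the substitution $\vp\mapsto\vp/Y$ (and its inverse $\psi\mapsto|Y|\psi$, needed for the matching inequality that identifies the essinf with $\bL^\pm(\sigma)$) multiplies by the $\F_\sigma$-measurable factor $1/|Y|$ or $|Y|$, which is unbounded where $Y$ is small or large and may therefore destroy the $\LiiP$-admissibility required for membership in $\fK(0,\sigma;\sigma)$. I would resolve this by localising on the $\F_\sigma$-sets $A_N:=\{1/N\le|Y|\le N\}$, on which the scaling factor is bounded. The key observation is that multiplying a strategy in $\fK(0,\sigma;\sigma)$ by a bounded nonnegative $\F_\sigma$-measurable scalar keeps it in $\fK(0,\sigma;\sigma)$: its approximating sequence from $\Theta(K)$ can be multiplied by the same factor, which preserves the cone constraint, the $\cL^2(M)\cap\cL^2(A)$-finiteness, and both convergence requirements (the factor pulls out of all integrals over $\rrbracket\sigma,T\rrbracket$). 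Hence on each $A_N$ the substitution is valid and yields the claimed identity there; letting $N\to\infty$, with $A_N\uparrow\{Y\ne0\}$, transfers it to all of $\{Y>0\}$ and $\{Y<0\}$. Finally, the essinf identification uses that the family defining $\bL^\pm(\sigma)$ is downward directed — again a consequence of stability under bifurcation — so that the infimum is attained along a decreasing sequence which can be scaled and localised as above.
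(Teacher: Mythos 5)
Your overall skeleton is the same as the paper's: define $\bL^\pm(\sigma)$ by \eqref{C}, split along the sign of $Y=x+\int_\tau^\sigma\vt_u\,dS_u$, and transfer competitors between $\fK(\vartheta,\sigma;\tau)$ and $\fK(0,\sigma;\sigma)$ by scaling with $Y^\pm$, using that $K$ is a cone. But your central technical claim -- that multiplying a strategy in $\fK(0,\sigma;\sigma)$ by a bounded nonnegative $\F_\sigma$-measurable scalar $\lambda$ keeps it in $\fK(0,\sigma;\sigma)$ because ``its approximating sequence from $\Theta(K)$ can be multiplied by the same factor'' -- has a genuine gap, and it is exactly the point where the real work lies. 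First, an approximating sequence $(\vp^n)\subseteq\Theta(K)$ for $\vp$ need not vanish on $\llbracket0,\sigma\rrbracket$, so $\lambda\vp^n$ is in general not predictable; you are forced to use $\lambda\one_{\rrbracket\sigma,T\rrbracket}\vp^n$ instead. But then the terminal gain is $\lambda(\vp^n\sint S_T-\vp^n\sint S_\sigma)$, and requirement 1) of admissibility demands convergence of this in $\LiiP$; that needs $\vp^n\sint S_\sigma\to\vp\sint S_\sigma$ \emph{in} $\LiiP$, whereas requirement 2) of admissibility only delivers convergence \emph{in probability}. Your localisation on $A_N=\{1/N\le|Y|\le N\}$ does not help here: it controls the size of the scaling factor, not the mode of convergence of the stopped integrals at $\sigma$. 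The same defect hits the other direction (mapping $\vp\in\fK(\vartheta,\sigma;\tau)$ to $\one_{\rrbracket\sigma,T\rrbracket}\vp/Y$), and even your treatment of $\{Y=0\}$: the stopped strategy $\vt\one_{\llbracket0,\sigma\rrbracket}$ is not obviously in $\TBK$, for precisely this reason, so part 1) of Lemma \ref{las} cannot be used to paste it -- the pieces being pasted must already be known admissible, which is the very point at issue.

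The paper's proof is built around this obstruction. It passes to a subsequence along which $\vp^n\sint S_\sigma\to\vp\sint S_\sigma$ $P$-a.s.\ and then invokes Egorov's theorem to get \emph{uniform} (hence $L^2$) convergence, but only on a set $G$ of large, not full, measure; the rescaled strategy is then admissible only after multiplication by $\one_G$. This is why the paper (i) proves \eqref{B} by \emph{contradiction} -- a violation on a set of positive measure yields, after Egorov and boundedness localisation, an admissible competitor on some positive-measure subset, which suffices to contradict an essential infimum -- rather than by a global substitution/exhaustion as you propose, and (ii) formulates part 2) of Lemma \ref{las} only as \emph{almost} stability, with the exceptional set $F_\ve$. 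Indeed, if your key observation were true as stated, $\TBK$ would be genuinely stable under stopping and scaling at $\sigma$ and the ``almost'' in that lemma would be unnecessary. Your argument can be repaired, but the repair consists of inserting the subsequence-plus-Egorov step and then retreating to positive-measure sets, at which point it becomes the paper's proof; as written, the proposal does not close.
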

\bp
Fix $x,\tau,\vt$ and $\sigma$ and define $\bL^\pm(\sigma)$ by \eqref{C}. The last assertion is then obvious, and the intuition for \eqref{B} is that the quadratic structure of our problem and the fact that the constraints are given by cones allow us to pull out an $\F_\sigma$-measurable factor. Note that we can also write $\vt\sint S_\sigma$ instead of $\int_\tau^\sigma \vt_u \,dS_u$ because $\vt=0$ on $\llbracket 0,\tau\rrbracket$. For the detailed proof of \eqref{B}, we argue by contradiction. Suppose first that
$$
\bJ(\sigma;x,\tau,\vartheta)<\big((x+\vt\sint S_\sigma)^+\big)^2\bL^+(\sigma)+\big((x+\vt\sint S_\sigma)^-\big)^2\bL^-(\sigma)
\quad\text{on $F^\prime$}
$$
for some set $F^\prime\in\F_\sigma$ with $P[F^\prime]>0$. Then there exist $\vp\in\fK(\vartheta,\sigma;\tau)$ and $F\in\F_\sigma$ with $F\subseteq F^\prime$  and $P[F]>0$ such that
\be
E\big[ |x+\vp\sint S_T|^2\big|\F_\sigma\big]<\big((x+\vt\sint S_\sigma)^+\big)^2\bL^+(\sigma)+\big((x+\vt\sint S_\sigma)^-\big)^2\bL^-(\sigma)
\quad\text{on $F$.}
\label{eq:ineq1}
\ee
Since $\bar J(\sigma;x,\tau,\vartheta)\geq 0$, we have $F\subseteq \{0< |x+\vt\sint S_\sigma|\}$ and can write
\begin{align}
E\big[ |x+\vp\sint S_T|^2\big|\F_\sigma\big]
&=
\big((x+\vt\sint S_\sigma)^+\big)^2E\bigg[\left(1+\frac{\one_{\rrbracket\sigma,T\rrbracket }\vp}{(x+\vt\sint S_\sigma)^+}\sint S_T\right)^2\bigg|\F_\sigma\bigg]
\nonumber
\\
&\phantom{=\ }
+\big((x+\vt\sint S_\sigma)^-\big)^2E\bigg[\left(1-\frac{\one_{\rrbracket \sigma,T\rrbracket }\vp}{(x+\vt\sint S_\sigma)^-}\sint S_T\right)^2\bigg|\F_\sigma\bigg]
\quad\text{on $F$.}
\label{eq-sep}
\end{align}
Plugging the last expression into \eqref{eq:ineq1}, we obtain
$$
E\bigg[\left(1\pm \frac{\one_{\rrbracket \sigma,T\rrbracket }\vp}{(x+\vt\sint S_\sigma)^\pm}\sint S_T\right)^2\bigg|\F_\sigma\bigg]
< \bL^\pm(\sigma)
\quad\text{on $F^\pm:=F\cap\{x+\vt\sint S_\sigma\gtrless 0\}$.}
$$
To derive a contradiction to the definition of $\bL^\pm(\sigma)$, it remains to show that
$$
\psi^\pm:=\frac{\one_{\rrbracket \sigma,T\rrbracket }\vp}{(x+\vt\sint S_\sigma)^\pm}\one_{G^\pm}\in\fK(0,\sigma;\sigma)
$$
for some sets $G^\pm\in\F_\sigma$ with $G^\pm\subseteq F^\pm$ and $P[G^\pm]>0$. To that end, let $(\vp^n)_{n\in\N}$ be an approximating sequence in $\TS(K)$ for $\vp$. By passing to a subsequence again indexed by $n$, we can assume that $\vp^n\sint S_\sigma\to\vp\sint S_\sigma$ $P$-a.s. Then we can find $G^+\in\F_\sigma$ with $G^+\subseteq F^+$ and $P[G^+]>0$ such that $m\geq|x+\vt\sint S_\sigma|\geq \frac{1}{m}$ on $G^+$ for some $m\in\N$, by continuity of $P$ from below, and $\vp^n\sint S_\sigma\to\vp\sint S_\sigma$ uniformly on $G^+$, by Egorov's theorem. Moreover, we obtain that $\psi^n:=\frac{\one_{\rrbracket \sigma,T\rrbracket }\vp^n}{(x+\vt\ssint S_\sigma)^+}\one_{G^+}\in\TS(K)$ because $K$ is cone-valued, and
$$
|\psi^n\sint S_\vr-\psi^+\sint S_\vr|
\leq 
(|\vp^n\sint S_\vr-\vp\sint S_\vr|+|\vp^n\sint S_\sigma-\vp\sint S_\sigma|)\frac{1}{m}\one_{G^+}
$$
for all stopping times $\vr$. By the choice of $(\vp^n)$ and the local character of stochastic integrals, the right-hand side converges to zero in probability for all stopping times $\vr$, and in $ \LiiP$ for $\vr=T$. Since $\psi^n\sint S=0=\psi^+\sint S$ on $\llbracket0,\tau\rrbracket$, we have that $\psi^+\in\fK(0,\sigma;\sigma)$. By analogous arguments, we can also establish that $\psi^-\in\fK(0,\sigma;\sigma)$.

To complete the proof of \eqref{B}, we now assume that
$$
\bar J(\sigma;x,\tau,\vartheta)>\big((x+\vt\sint S_\sigma)^+\big)^2\bL^+(\sigma)+\big((x+\vt\sint S_\sigma)^-\big)^2 \bL^-(\sigma)
\quad\text{on $F$}
$$
for some set $F\in\F_\sigma$ with $P[F]>0$. Then there exist $\vp^+$ and $\vp^-$ in $\fK(0,\sigma;\sigma)$, some $\ve>0$ and $F_\varepsilon\in\F_\sigma$ with $F_\varepsilon\subseteq F$ and $P[F_\varepsilon]>2\varepsilon$ such that
\begin{align}
\bar J(\sigma;x,\tau,\vartheta)
&\geq{}
\big((x+\vt\sint S_\sigma)^+\big)^2E\big[ |1+\vp^+\sint S_T|^2\big|\F_\sigma\big] \nonumber\\
&\phantom{\ge\ }
+\big((x+\vt\sint S_\sigma)^-\big)^2E\big[ |1-\vp^-\sint S_T|^2\big|\F_\sigma\big]+2\ve
\quad\text{on $F_\ve$.}\label{eq:pr:eqJL1}
\end{align}
By the definition of the essential infimum, there exists $\vp^\varepsilon\in\fK(\vartheta,\sigma;\tau)$ such that
\be
E\big[ |x+\vp^\ve\sint S_T|^2\big] < E\big[ \bar J(\sigma;x,\tau,\vartheta) \big]+\varepsilon^2.\label{eq:pr:eqJL2}
\ee
Since $\{|x+\vt\sint S_\sigma|\leq m\}\nearrow \Om$ for $m\to\infty$, there exists $G_\varepsilon\in\F_\sigma$ with $G_\varepsilon\subseteq F_\varepsilon$ and $P[G_\varepsilon]>\varepsilon$ and such that $|x+\vt\sint S_\sigma|\leq m$ on $G_\varepsilon$, and therefore
$$
\chi:=\big((x+\vt\sint S_\sigma)^+\vp^++(x+\vt\sint S_\sigma)^-\vp^-\big)\one_{G_\ve}\in\fK(0,\sigma;\sigma)
$$
by the local character of stochastic integrals. Moreover, we can by part 2) of Lemma \ref{las} without loss of generality choose $G_\ve$ such that $\psi := \vp^\ve \one_{G_{\varepsilon}^{c}}+(\vt \one_{\llbracket 0,\sigma\rrbracket }+\chi \one_{\rrbracket \sigma,T\rrbracket })\one_{G_{\varepsilon}}$ is in $\fK(\vartheta,\sigma;\tau)$. Then we use that $\vp^\ve\in\fK(\vartheta,\sigma;\tau)$, the definitions of $\psi$ and $\chi$, and \eqref{eq:pr:eqJL1} to write 
\begin{align*}
E\big[ |x+\vp^\ve\sint S_T|^2\big|\F_\sigma\big]
&\geq \one_{G_{\varepsilon}^c}E\big[ |x+\psi\sint S_T|^2\big|\F_\sigma\big]+\one_{G_{\varepsilon}}\bJ(\sigma; x, \tau,\vt)\\
&\geq \one_{G_{\varepsilon}^c}E\big[ |x+\psi\sint S_T|^2\big|\F_\sigma\big]+\one_{G_{\varepsilon}}\big(E\big[ |x+\vt\sint S_\sigma+\chi\sint S_T|^2 \big| \F_\sigma \big]+2\varepsilon\big) .
\end{align*}
In view of \eqref{eq:pr:eqJL2}, the definition of $\psi$ and since $P[G_\varepsilon]>\varepsilon$ and $\psi\in\fK(\vartheta,\sigma;\tau)$, we obtain after taking expectations that
$$
E\big[ \bar J(\sigma;x,\tau,\vartheta) \big]
>E\big[ |x+\vp^\ve\sint S_T|^2\big]-\varepsilon^2
\geq E\big[ |x+\psi\sint S_T|^2\big]+2\varepsilon^2-\varepsilon^2
\geq E\big[ \bar J(\sigma;x,\tau,\vartheta) \big]+\ve^2
$$
which is a contradiction. So \eqref{B} must hold.
\ep
Our next result shows that the random variables $\bL^\pm(\sigma)$ as well as $\bJ(\sigma;x,\tau,\vt)$ can be aggregated into nice RCLL processes.
\begin{prop}\label{prop:mop}
{\bf 1) }There exist RCLL submartingales $(L^\pm_t)_{0\leq t\leq T}$, called \emph{opportunity processes}, such that
\be
L^\pm_\sigma=\bL^\pm(\sigma)
\quad\text{$P$-a.s.~for each $\sigma\in\cS_{0,T}$.}\label{D}
\ee

\vspace{0.25\baselineskip}

{\bf 2) }Fix $x\in\R$ and $\tau\in\cS_{0,T}$. Define the RCLL process $(J_t(\vt;x,\tau))_{0\leq t\leq T}$ for every $\vt\in\TBK$ with $\vt=0$ on $\llbracket0,\tau\rrbracket$ by
\be
\textstyle
J_t(\vt;x,\tau)=\big((x+\int_\tau^t\vt_u\,dS_u)^+\big)^2L^+_t+\big((x+\int_\tau^t\vt_u\,dS_u)^-\big)^2L^-_t.\label{E}
\ee
Then we have for each $\vt\in\TBK$ with $\vt=0$ on $\llbracket0,\tau\rrbracket$ that
\be
J_\sigma(\vt;x,\tau)=\bJ(\sigma;x,\tau,\vt)
\quad\text{$P$-a.s.~for each $\sigma\in\cS_{\tau,T}$.}\label{F}
\ee
Moreover, $J(\vt;x,\tau)$ is a submartingale for every $\vt\in\TBK$ with $\vt=0$ on $\llbracket 0, \tau\rrbracket$, and $J(\tvt;x,\tau)$ is a martingale for $\tvt\in\TBK$ with $\tvt=0$ on $\llbracket 0,\tau\rrbracket$ if and only if $\tvt=\tvp^{(x,\tau)}$ is optimal for \eqref{A}.
\end{prop}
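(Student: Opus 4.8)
The plan is to deduce both parts from the submartingale-system structure already recorded just before Proposition~\ref{pJL}, and to obtain the genuine RCLL \emph{processes} $L^\pm$ (and then $J(\vt;x,\tau)$) by an aggregation theorem. The starting observation for part~1) is that specialising the decomposition \eqref{B} to $x=\pm1$, $\tau=0$ and $\vt=0$ gives $\bL^+(\sigma)=\bJ(\sigma;1,0,0)$ and $\bL^-(\sigma)=\bJ(\sigma;-1,0,0)$. Hence, for each fixed sign, $\{\bL^\pm(\sigma)\mid\sigma\in\cS_{0,T}\}$ is one of the (fixed-$\tau$, here $\tau=0$) submartingale systems from before Proposition~\ref{pJL}, and it is $[0,1]$-valued, hence uniformly integrable. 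To pass from such a system to an RCLL process I would invoke the aggregation theorem for submartingale systems (in the spirit of Chapter~1 of \cite{EK79}; compare the proof of Theorem~4.1 in \cite{LP99}): a submartingale system $\{Y(\sigma)\}$ aggregates into an RCLL submartingale $(Y_t)$ with $Y_\sigma=Y(\sigma)$ a.s.\ for every stopping time $\sigma$ if and only if $\sigma\mapsto E[Y(\sigma)]$ is right-continuous along decreasing sequences of stopping times. This would directly produce the RCLL submartingales $L^\pm$ of \eqref{D}.

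The real work, and the step I expect to be the main obstacle, is verifying this right-continuity in expectation. By stability under bifurcation (part~1) of Lemma~\ref{las}) the family in \eqref{C} is downward directed, so $E[\bL^+(\sigma)]=\inf_{\vp\in\fK(0,\sigma;\sigma)}E[|1+\vp\sint S_T|^2]$; since $\fK(0,\sigma';\sigma')\subseteq\fK(0,\sigma;\sigma)$ for $\sigma'\ge\sigma$, this is nondecreasing in $\sigma$, so for $\sigma_n\downarrow\sigma$ the limit $\ell:=\lim_nE[\bL^+(\sigma_n)]\ge E[\bL^+(\sigma)]$ exists and only $\ell\le E[\bL^+(\sigma)]$ needs an argument. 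Given an almost optimal $\vp\in\fK(0,\sigma;\sigma)$ for $\bL^+(\sigma)$, the natural move is to truncate, $\vp_n:=\vp\one_{\rrbracket\sigma_n,T\rrbracket}\in\fK(0,\sigma_n;\sigma_n)$, and to use $E[\bL^+(\sigma_n)]\le E[|1+\vp_n\sint S_T|^2]$. Since $\vp\sint S$ is RCLL with $(\vp\sint S)_\sigma=0$, one has $\vp\sint S_{\sigma_n}\to0$ $P$-a.s., and the whole argument hinges on upgrading this to $\vp\sint S_{\sigma_n}=(\vp\one_{\rrbracket\sigma,\sigma_n\rrbracket})\sint S_T\to0$ in $\LiiP$. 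For strategies in $\Theta(K)=(\LiiM\cap\LiiA)\cap\cC$ this is immediate from dominated convergence, since the controlling integrals of $\vp^\top\,d\la M\ra\,\vp$ and $|\vp^\top\,dA|$ over $\rrbracket\sigma,\sigma_n\rrbracket$ vanish while staying dominated by their values on $[0,T]$. The delicate point is to carry this over to the closure $\TBK$, where only the \emph{terminal} gains are controlled in $\LiiP$ and the other values merely in probability, so that both the membership $\vp_n\in\TBK$ and the $\LiiP$-convergence must be extracted from the admissibility and closedness machinery of Section~\ref{sec:pf}. Granting this, $E[\bL^+(\sigma_n)]\le E[|1+\vp_n\sint S_T|^2]\to E[|1+\vp\sint S_T|^2]$, and letting $\vp$ approach the infimum yields $\ell\le E[\bL^+(\sigma)]$; the case of $\bL^-$ is symmetric.

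With $L^\pm$ in hand, part~2) is essentially bookkeeping. Defining $J(\vt;x,\tau)$ by \eqref{E} makes it RCLL, because $t\mapsto x+\int_\tau^t\vt_u\,dS_u$ and $L^\pm$ are RCLL and $r\mapsto(r^\pm)^2$ is continuous. Evaluating \eqref{E} at a stopping time $\sigma\in\cS_{\tau,T}$, using \eqref{D} together with the decomposition \eqref{B} of Proposition~\ref{pJL} (and $x+\int_\tau^\sigma\vt_u\,dS_u=x+\vt\sint S_\sigma$ since $\vt=0$ on $\llbracket0,\tau\rrbracket$), gives \eqref{F} at once:
$$
J_\sigma(\vt;x,\tau)=\big((x+\vt\sint S_\sigma)^+\big)^2\bL^+(\sigma)+\big((x+\vt\sint S_\sigma)^-\big)^2\bL^-(\sigma)=\bJ(\sigma;x,\tau,\vt).
$$
Because $J(\vt;x,\tau)$ is RCLL and agrees at every stopping time with the submartingale system $\{\bJ(\sigma;x,\tau,\vt)\}$, the optional-sampling characterisation shows it is a submartingale; and it is a martingale precisely when that family is a martingale system, which by the equivalence recorded just before Proposition~\ref{pJL} happens if and only if $\tvt=\tvp^{(x,\tau)}$ is optimal for \eqref{A}. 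This completes the plan.
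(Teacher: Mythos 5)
Your overall architecture --- aggregating the submartingale systems $\{\bL^\pm(\sigma)\}$ into RCLL processes via right-continuity of $\sigma\mapsto E[\bL^\pm(\sigma)]$, and then reducing part 2) to the decomposition \eqref{B} and the properties of the $\bJ$-family --- is exactly the paper's, and those portions of your argument are sound. But the step you flag as ``the delicate point'' and then pass over with ``Granting this'' is precisely where the paper's proof has its one essential idea, and in the form you set it up the step genuinely fails. If the near-optimal $\vp$ is only known to lie in $\TBK$, then neither the membership $\vp\one_{\rrbracket\sigma_n,T\rrbracket}\in\TBK$ nor the convergence $(\vp\one_{\rrbracket\sigma,\sigma_n\rrbracket})\sint S_T\to0$ in $\LiiP$ can be ``extracted from the admissibility and closedness machinery'': for an approximating sequence $(\vp^k)$ in $\Theta(K)$ for $\vp$, one controls $\vp^k\sint S_T$ in $\LiiP$ but $\vp^k\sint S_{\sigma_n}$ only in probability, so the truncated gains $\vp^k\sint S_T-\vp^k\sint S_{\sigma_n}$ need not converge in $\LiiP$, and no dominating bound is available for a general admissible integrand. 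So the proposal, as written, has a genuine gap at its crucial step.

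The paper closes this gap by reversing the order of the two operations: it does not truncate the $\TBK$-near-optimizer, but first replaces it by an $\varepsilon$-optimizer $\vt^\varepsilon$ lying in $\Theta$. This is legitimate because, by the very definition of admissibility, $G_T(\TBK)$ is contained in the $\LiiP$-closure of $G_T(\Theta(K))$, so (using the downward directedness you already invoke) the infimum defining $E[\bL^\pm(\sigma)]$ is not changed by restricting to integrands from $\Theta(K)$ vanishing on $\llbracket 0,\sigma\rrbracket$. For such a $\vt^\varepsilon\in\Theta$ your dominated-convergence argument does apply: $(\one_{\rrbracket\sigma_n,T\rrbracket}\vt^\varepsilon)\sint S\to(\one_{\rrbracket\sigma,T\rrbracket}\vt^\varepsilon)\sint S$ in $\HzP$, hence the expectations converge by Theorem IV.5 in \cite{P04}; moreover $\one_{\rrbracket\sigma_n,T\rrbracket}\vt^\varepsilon$ remains in $\Theta(K)\subseteq\fK(0,\sigma_n;\sigma_n)$ because the values of $K$ are closed cones and hence contain $0$. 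This yields $\lim_{n\to\infty}E[\bL^\pm(\sigma_n)]\le E[|1\pm\vt^\varepsilon\sint S_T|^2]<E[\bL^\pm(\sigma)]+\varepsilon$, which is the inequality you were missing. In short, the missing idea is not more closedness machinery but the simple exchange of operations: approximate into $\Theta(K)$ first, truncate second. (Your appeal to an aggregation theorem in the spirit of \cite{EK79} in place of the paper's combination of Theorem VI.4 of \cite{DM82} with a finitely-valued-stopping-time approximation is only a packaging difference, not an error.)
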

\bp
1) For $\tau\equiv0$, $(\bL^\pm(t))_{0\leq t\leq T}$ are submartingales by Proposition \ref{pJL}. They have by Theorem VI.4 in \cite{DM82} RCLL versions if the mappings \hbox{$t\mapsto E[\bL^\pm(t)]$} are right-continuous. We only prove this for $\bL^-$ as the argument for $\bL^+$ is completely analogous, but argue a bit more generally than directly needed. Fix a stopping time $\sigma\in\cS_{\tau,T}$. By \eqref{C} and the definition of the essential infimum, there exists for each $\varepsilon>0$ some $\vt^{\varepsilon}\in\fK(0,\sigma;\sigma)$ with
$$
E\big[ \bL^-(\sigma) \big]>E\big[ |1-\vt^\varepsilon\sint S_T|^2 \big]-\varepsilon,
$$
and $\vt^\ve$ can be chosen in $\Theta$ as the $\LiiP$-closure of $G_T(\Theta(K))$ contains $G_T(\TBK)$. Let $(\sigma_n)$ be a sequence in $\cS_{\sigma,T}$ with $\sigma_n\searrow \sigma$. Then $(\one_{\rrbracket \sigma_n,T\rrbracket }\vt^{\varepsilon})\sint S\overset{\HzP}\longrightarrow(\one_{\rrbracket \sigma,T\rrbracket }\vt^{\varepsilon})\sint S$ and thus $E[|1-(\one_{\rrbracket \sigma_n,T\rrbracket }\vt^\varepsilon)\sint S_T|^2]\to E[|1-(\one_{\rrbracket \sigma,T\rrbracket }\vt^\varepsilon)\sint S_T|^2]$ by Theorem IV.5 in \cite{P04}. Therefore
\begin{equation*}
E\big[ \bL^-(\sigma) \big]
>
\Lim_{n\to\infty}E\big[ |1-(\one_{\rrbracket \sigma_n,T\rrbracket }\vt^\varepsilon)\sint S_T|^2 \big]-\varepsilon
\geq
\Lim_{n\to\infty}E\big[ \bL^-(\sigma_n) \big]-\varepsilon,
\end{equation*}
which yields $E[\bL^-(\sigma)]\geq \underset{n\to\infty}{\Lim}E[\bL^-(\sigma_n)]$ as $\varepsilon>0$ was arbitrary. Conversely, the sub\-mar\-tin\-gale property of $\bL^-$  gives $E[\bL^-(\sigma)]\leq \underset{n\to\infty}{\Lim}E[\bL^-(\sigma_n)]$, where the limit exists by mono\-tonic\-ity. So we get $E[\bL^-(\sigma)]=\underset{n\to\infty}{\Lim}E[\bL^-(\sigma_n)]$, completing the proof of right-continuity.

2) Thanks to step 1), we can take as $L^\pm$ an RCLL version of $(\bL^\pm(t))_{0\leq t\leq T}$. To prove \eqref{D}, take $\sigma, \sigma_n\in\cS_{\tau,T}$ such that $\sigma_n\searrow \sigma$ and each $\sigma_n$ takes only finitely many values. Then \eqref{D} holds for each $\sigma_n$ and so
$\Lim_{n\to\infty}\bL^\pm(\sigma_n)=\Lim_{n\to\infty}L^\pm_{\sigma_n}=L^\pm_\sigma$
because $L^\pm$ are RCLL. Since all processes take values in $[0,1]$, dominated convergence yields 
$E[L^\pm_\sigma]=\Lim_{n\to\infty}E[\bL^\pm(\sigma_n)]=E[\bL^\pm(\sigma)]$
by the argument in step 1), and since the submartingale property, \eqref{D} for $\sigma_n$ and again dominated convergence give
$$
\bL^\pm(\sigma)
\leq
\Lim_{n\to\infty}E\big[ \bL^\pm(\sigma_n)\big|\F_\sigma\big]
=
\Lim_{n\to\infty}E\big[ L^\pm_{\sigma_n}\big|\F_\sigma \big]
=
L^\pm_\sigma,
$$
we obtain \eqref{D} for $\sigma$ as well. This proves part 2).

3) The equality in \eqref{F} follows directly from the definition \eqref{E}, \eqref{D} and the decomposition \eqref{B} in Proposition \ref{pJL}. The properties of the $\bJ$-family then immediately give the remaining assertion in part 2).
\ep
The next result gives an alternative description of the processes $L^\pm$ and some further useful properties.
\bl\label{mart}
Suppose that there exists a solution $\tvp^{(x,\tau)}$ to \eqref{A}. Then:

\vspace{0.25\baselineskip}
{\bf 1) }We have the decomposition
\begin{equation}
\tvp^{(x,\tau)} = x^+ \tvp^{(1,\tau)} + x^- \tvp^{(-1,\tau)}.
\label{strat-fact}
\end{equation}

\vspace{0.25\baselineskip}
{\bf 2) }For any $\sigma\in\cS_{\tau,T}$, we have on $\{V_\sigma(x,\tvp^{(x,\tau)})\gtrless0\}$ that
$$
L^\pm_\sigma
=
E\bigg[\bigg(1\pm\frac{\one_{\rrbracket \sigma,T\rrbracket }\tvp^{(x,\tau)}}{V_\sigma^\pm(x,\tvp^{(x,\tau)})}\sint S_T\bigg)^2\bigg|\F_\sigma\bigg]
=
E\bigg[1\pm\frac{\one_{\rrbracket \sigma,T\rrbracket }\tvp^{(x,\tau)}}{V_\sigma^\pm(x,\tvp^{(x,\tau)})}\sint S_T\bigg|\F_\sigma\bigg].
$$

\vspace{0.25\baselineskip}
{\bf 3) }The process $\, {}^\tau \widetilde M^{(x,\tau)} = \one_{\rrbracket \tau,T\rrbracket}\sint \widetilde M^{(x,\tau)}$ with
$$
\widetilde M^{(x,\tau)}:= (x+\tvp^{(x,\tau)}\sint S)^+L^+-(x+\tvp^{(x,\tau)}\sint S)^-L^-
$$
is a square-integrable martingale.

\vspace{0.25\baselineskip}

{\bf 4) }If $K:\Ombar\to 2^{\RR^d}\setminus\{\emptyset\}$ is convex-valued, then $(\vt\sint S)\widetilde M^{(x,\tau)}$ is a submartingale for all $\vt\in\Theta(K)$ with $\vt=0$ on $\llbracket 0,\tau\rrbracket$.
\el
\bp
1) The decomposition \eqref{strat-fact} of the optimal strategy is obtained like \eqref{eq-sep} directly from the fact that our optimisation problem is quadratic and the constraints are conic.

2) If there exists a solution $\tvp^{(x,\tau)}$ to \eqref{A}, we obtain by part 2) of Proposition \ref{prop:mop} that $J_\sigma(\tvp^{(x,\tau)};x,\tau)=E[|x+\tvp^{(x,\tau)}\sint S_T|^2|\F_\sigma]$ and therefore
$$
L^+_\sigma
=
E\bigg[\bigg(1+\frac{\one_{\rrbracket \sigma,T\rrbracket }\tvp^{(x,\tau)}}{V_\sigma^+(x,\tvp^{(x,\tau)})}\sint S_T\bigg)^2\bigg|\F_\sigma\bigg]
\quad\text{on $F:=\{V_\sigma(x,\tvp^{(x,\tau)})>0\}\in\F_\sigma$}
$$
by dividing in \eqref{B}. For the proof of the second equality, we can assume that the process $\vt := \frac{\one_{\rrbracket \sigma,T\rrbracket }\tvp^{(x,\tau)}}{V_\sigma^+(x,\tvp^{(x,\tau)})}\one_F$ is in $\TBK$ by part 2) of Lemma \ref{las} and by possibly shrinking $F$. Then the first equality implies for all $\varepsilon>-1$ that
\begin{align}
0
&\leq
\frac{E\big[ |1+((1+\varepsilon)\vt)\sint S_T|^2\big| \F_\sigma\big]-E\big[|1+\vt\sint S_T|^2\big| \F_\sigma\big]}{|\varepsilon|}
\nonumber\\
&=
-\sign(\varepsilon)E[(\vt\sint S_T)(1+\vt\sint S_T)| \F_\sigma]+|\varepsilon|E\big[ |\vt\sint S_T|^2 \big| \F_\sigma \big] .
\label{optimalitycondition}
\end{align}
Taking $\underset{\varepsilon\nearrow 0}{\varliminf}$ and $\underset{\varepsilon\searrow 0}{\varliminf}$ in \eqref{optimalitycondition} yields $E[(\vt\sint S_T)(1+\vt\sint S_T)| \F_\sigma]=0$,
which implies that $E[|1+\vt\sint S_T|^2| \F_\sigma]=E[1+\vt\sint S_T| \F_\sigma]$ and therefore the second asserted equality. The argument for $L^-_\sigma$ is completely analogous and therefore omitted.

3) Using the second equalities in part 2), we can write for $\sigma\in\cS_{\tau,T}$ that
$$
E[x+\tvp^{(x,\tau)}\sint S_T|\F_\sigma]
=
(x+\tvp^{(x,\tau)}\sint S_\sigma)^+L_\sigma^+-(x+\tvp^{(x,\tau)}\sint S_\sigma)^-L_\sigma^- ,
$$
which immediately gives that $\, {}^\tau \widetilde M^{(x,\tau)} = \one_{\rrbracket \tau,T\rrbracket}\sint \widetilde M^{(x,\tau)}$ is a square-integrable martingale.

4) Since $\vt\in\Theta(K)$ implies that $\one_{F\times(s,t]\cap\rrbracket \tau,T\rrbracket }\vt$ is in $\fK(0,\tau)$ for all $s\leq t$ and $A\in\F_s$, it follows from the first order condition of optimality for \eqref{A} that
\begin{align*}
&E\big[\one_F\big((\one_{\rrbracket \tau,T\rrbracket }\vt)\sint S_t-(\one_{\rrbracket \tau,T\rrbracket }\vt)\sint S_s\big)(x+\tvp^{(x,\tau)}\sint S_T)\big]
\\
&=E\big[ \big( (\one_{F\times(s,t]\cap\rrbracket \tau,T\rrbracket }\vt)\sint S_T\big) (x+\tvp^{(x,\tau)}\sint S_T) \big]\geq 0
\end{align*}
and therefore that $((\one_{\rrbracket \tau,T\rrbracket }\vartheta)\sint S_t) E[(x+\tvp^{(x,\tau)}\sint S_T)| \F_{t}]$, $0\leq t\leq T$, is a submartingale.
\ep
The martingale optimality principle in Proposition \ref{prop:mop} gives a dynamic description of the solution $\tvp = \tvp^{(x,0)}$ only for $J(\tvp;x,0)\ne0$. This can cause problems. But \eqref{E} shows that if $J(\tvp;x,0)$ becomes $0$, then either $V(x,\tvp)=0$ or $L^+=0$ or $L^-=0$. In the latter two cases, the payoffs $\one_{\{L^+_{\tau}=0\}}$ or $-\one_{\{L^-_{\tau}=0\}}$ with $\tau=\Inf\{t>0 \,|\, J_t(\tvp;x,0)=0\}\wedge T$ are in $G_T(\overline{\Theta(K\one_{\rrbracket \tau,T\rrbracket})})$, and in the terminology of Section 4 in \cite{S01}, these random variables provide approximate profits in $L^2$ which is a weak form of arbitrage. So intuitively, we have difficulties with describing $\tvp$ only if the basic model allows some kind of arbitrage. The next result, which generalises Lemma 3.10 in \cite{CK07}, gives a sufficient condition to prevent such problems.
\bl\label{lwac}
Suppose that there exist $N\in \MznlocP$ and $\ZN$ such that $(\E,\ZN)$ with $\E=\E(N)$ is regular and square-integrable and $S$ is an $\E$-local martingale. Then $L^\pm$ and their left limits $L^\pm_-$ are $(0,1]$-valued.
\el
\bp
We prove the assertion for $L^+$ and $L^+_-$ by way of contradiction; the completely analogous proof for $L^-$ and $L^-_-$ is omitted. Define $\tau:=\Inf\{t>0 \,|\, L^+_t=0\}\wedge T$ and suppose that $P[L^+_\tau=0]>0$. By \eqref{C}, \eqref{D} and the definition of $\tau$,
$$\underset{\vp\in\fK(0,\tau;\tau)}{\essinf}E\big[|1+\vp\sint S_T|^2\big|\F_\tau\big]\one_{\{L^+_\tau=0\}}=L^+_\tau\one_{\{L^+_\tau=0\}}=0
$$
and so there exists a sequence $(\vt^n)$ in $\fK(0,\tau;\tau)$ such that $((\vt^n\sint S_T)\one_{\{L^+_\tau=0\}})$ converges to $-\one_{\{L^+_\tau=0\}}$ in $ \LiiP$. Since $L^+_T=1$, we have that
$$\{L^+_\tau=0\}=\{L^+_\tau=0, \tau<T\}=\bigcup_{m=0}^\infty\{L^+_\tau=0, T_m\leq \tau<T_{m+1}\}$$
and hence $P[L^+_\tau=0, T_m\leq \tau<T_{m+1}]>0$ for some $m\in\N_0$. But each $\vt^n\sint S$ is an $(\E,\ZN)$-martingale by Corollary \ref{coremart}, and since $\ZNTm {}^{T_m}\E(N)$ is square-integrable, we get for every $F\in\F_\tau$ that
\begin{align*}
0
&=
\Lim_{n\to\infty}E\big[ \ZNTm {}^{T_m}\E(N)_T(\vt^n \sint S_T)\one_{\{L^+_\tau=0, T_m\leq \tau<T_{m+1}\}\cap F}\big]
\\
&=
-E\big[ \ZNTm {}^{T_m}\E(N)_\tau\one_{\{L^+_\tau=0, T_m\leq \tau<T_{m+1}\}\cap F}\big] .
\end{align*}
Since ${}^{T_m}\E(N)\ne0$ on $\llbracket T_m,T_{m+1}\llbracket$, choosing $F:= \{ {}^{T_m}\E(N)_\tau>0\}$ or $F:= \{ {}^{T_m}\E(N)_\tau<0\}$ gives a contradiction to the assumption that $P[L^+_\tau=0]>0$. So we get $L^+>0$.

To prove that $L^+_->0$, define the stopping time $\sigma:=\Inf\{t>0 \,|\, L^+_{t-}=0\}\wedge T$ and assume that 
$F_\infty:=\{L^+_{\sigma-}=0\}$ has $P[F_\infty]>0$. Because ${}^{T_m}\E(N)\ne0$ on $\llbracket T_m, T_{m+1}\llbracket$ and
$$
\{L^+_{\sigma-}=0\}=\{L^+_{\sigma-}=0, \sigma>0 \}=\bigcup_{m=0}^\infty\{L^+_{\sigma-}=0, T_m< \sigma\leq T_{m+1}\} ,
$$
there exists some $m\in\N_0$ with $P[F^{m,+}_\infty]>0$ or $P[F^{m,-}_\infty]>0$, where
$$
F^{m,\pm}_\infty := F_\infty\cap\{T_m< \sigma \leq T_{m+1}\} \cap \{ {}^{T_m}\E(N)_{\sigma-}\gtrless 0\}.
$$
We fix $m$ and treat without loss of generality the \lq\lq$+$\rq\rq\ case so that $P[F^{m,+}_\infty]>0$. Setting 
$\sigma_n:=\Inf\{t>0 \,|\, L^+_{t}\leq \frac{1}{n}\}\wedge T$ 
gives $\sigma_n<\sigma$ and $\sigma_n\nearrow\sigma$ $P$-a.s.~on $F_\infty$, and defining
$$
F^{m,+} _n:=\{0<L^+_{\sigma_n}\leq\frac{1}{n}\} \cap \{T_m\le\sigma_n<T_{m+1}\} \cap \{ {}^{T_m}\E(N)_{\sigma_n}>0\} \in\F_{\sigma_n}
$$
yields by the definition of $\sigma_n$ that 
$$
E\Big[\underset{\vp\in\fK(0,\sigma_n;\sigma_n)}{\essinf}E\big[ |1+\vp\sint S_T|^2\big|\F_{\sigma_n}\big] \one_{F^{m,+}_n}\Big]
=
E[L^+_{\sigma_n}\one_{F^{m,+}_n}]
\leq
\frac{1}{n}P[F^{m,+}_n].
$$
Thus there exist $\vp^n\in\fK(0,\sigma_n;\sigma_n)$ such that $\Lim_{n\to\infty}E\big[ |1+\vp^n\sint S_T|^2\one_{F^{m,+}_n}\big]=0$.
This implies as above via Corollary \ref{coremart} and the square-integrability of $\ZNTm {}^{T_m}\E(N)$ that
\begin{align*}
0
&=
\Lim_{n\to\infty}E\left[ \ZNTm {}^{T_m}\E(N)_T(\vp^n \sint S_T)\one_{F^{m,+}_n}\right]
=
-\Lim_{n\to\infty}E\left[ \ZNTm {}^{T_m}\E(N)_{\sigma_n}\one_{F^{m,+}_n}\right]
\\
&=
-E\big[ \ZNTm {}^{T_m}\E(N)_{\sigma-} \, \one_{F^{m,+}_\infty } \big].
\end{align*}
This contradicts the fact that $P[F^{m,+}_\infty]>0$ so that we must have $P[F_\infty]=0$.  
\ep
The lemma below allows us to parametrise the optimal strategy in terms of units of wealth. The proof uses the technique in \cite{DS96a}, which also appears in \cite{CKS98} and \cite{CK07}.
\bl\label{lpfw}
Suppose that $L^\pm$ and their left limits $L^\pm_-$ are $(0,1]$-valued and that there exists a solution $\tvp^{(x,\tau)}$ to \eqref{A}. Then there exists $\tpsi^{(x,\tau)}\in\cL(S)$ such that
\be
V(x,\tvp^{(x,\tau)})=x+\tvp^{(x,\tau)}\sint S=x\,\cE(\tpsi^{(x,\tau)}\sint S) \label{def:tpsi}
\ee
and
\be
L_t^\pm=E\big[ |\E(\tpsi^{(x,\tau)}\one_{\rrbracket t,T\rrbracket}\sint S)_T |^2\big|\F_t\big]
\quad\text{on $\{x+\tvp^{(x,\tau)}\sint S_t\gtrless0\}$.}\label{eq:Ltpsi}
\ee
\el
\bp
Define the stopping times $\sigma_n=\Inf\{t>0 \,|\, |V_t(x,\tvp^{(x,\tau)})|\leq \frac{|x|}{n+1}\}\wedge T$ for $n\in\N$, set $\sigma=\Lim_{n\to\infty}\sigma_n$ and $F=\bigcap_{n\in\N}\{\sigma_n <\sigma\}\in\bigvee_{n=1}^\infty\F_{\sigma_n}=\F_{\sigma-}$ and consider the square-integrable martingale $M^{(x,\tau)}_t=E[V_T(x,\tvp^{(x,\tau)})|\F_t]$ for $t\in[0,T]$. Lemma \ref{mart} yields
\begin{align}
M^{(x,\tau)}_t
&=
(x+\tvp^{(x,\tau)}\sint S_t)^+L^+_t-(x+\tvp^{(x,\tau)}\sint S_t)^-L^-_t
\quad\text{for $t\geq \tau$,}
\nonumber
\\
E\big[ (M^{(x,\tau)}_T)^2 \big| \F_t \big]
&=
\big((x+\tvp^{(x,\tau)}\sint S_t)^+\big)^2L^+_t+\big((x+\tvp^{(x,\tau)}\sint S_t)^-\big)^2L^-_t
\quad\text{for $t\geq \tau$,}
\label{eq:lpfw}
\end{align}
and since $L^\pm$ are $(0,1]$-valued and $\sigma_n\geq\tau$, we get $|M^{(x,\tau)}_{\sigma_n}| \leq \frac{|x|}{n+1}$, $|M^{(x,\tau)}_{\sigma_n}|>0$ on $\{\sigma_n<\sigma\}$, $F=\{M^{(x,\tau)}_{\sigma-}=0\}$ and $\one_FE[M^{(x,\tau)}_T|\F_{\sigma-}]=0$. Then the martingale property of $M^{(x,\tau)}$, conditioning on $\F_{\sigma-}$, and using Cauchy--Schwarz and \eqref{eq:lpfw} yields
\begin{align*}
\one_{\{\sigma_n <\sigma\}}
&=
E\bigg[\frac{M^{(x,\tau)}_T}{M^{(x,\tau)}_{\sigma_n}}\one_{\{\sigma_n <\sigma\}}\bigg|\F_{\sigma_n}\bigg]=E\bigg[\frac{M^{(x,\tau)}_T}{M^{(x,\tau)}_{\sigma_n}}\one_{\{\sigma_n <\sigma\}}\one_{F^c}\bigg|\F_{\sigma_n}\bigg]\\
&\leq
E\bigg[\bigg(\frac{M^{(x,\tau)}_T}{M^{(x,\tau)}_{\sigma_n}}\bigg)^2\one_{\{\sigma_n <\sigma\}}\bigg|\F_{\sigma_n}\bigg]^{\frac{1}{2}}P[F^c|\F_{\sigma_n}]^{\frac{1}{2}}\\
&\leq
\left(\frac{1}{L^+_{\sigma_n}}+\frac{1}{L^-_{\sigma_n}}\right)^{\frac{1}{2}}\one_{\{\sigma_n <\sigma\}}P[F^c|\F_{\sigma_n}]^{\frac{1}{2}}.
\end{align*}
Since
\begin{align*}
\one_F
&=
\Lim_{n\to\infty}\one_{\{\sigma_n <\sigma\}}\one_F\leq\Lim_{n\to\infty}\left(\frac{1}{L^+_{\sigma_n}}+\frac{1}{L^-_{\sigma_n}}\right)^{\frac{1}{2}}\one_F\one_{\{\sigma_n <\sigma\}}P[F^c|\F_{\sigma_n}]^{\frac{1}{2}}\\
&=
\left(\frac{1}{L^+_{\sigma-}}+\frac{1}{L^-_{\sigma-}}\right)^{\frac{1}{2}}\one_F\one_{F^c}=0,
\end{align*}
this gives $P[F]=0$ and therefore $V_-(x,\tvp^{(x,\tau)})\ne0$ on $\llbracket0,\sigma\rrbracket$ and $V(x,\tvp^{(x,\tau)})=0$ on $\llbracket\sigma,T\rrbracket$. Therefore $\tpsi^{(x,\tau)}:=\frac{\tvp^{(x,\tau)}}{V_-(x,\tvp^{(x,\tau)})}\one_{\llbracket0,\sigma\rrbracket}$ is well defined and satisfies \eqref{def:tpsi}. Plugging \eqref{def:tpsi} into the equations of part 2) of Lemma \ref{mart} yields \eqref{eq:Ltpsi} and completes the proof.
\ep
\section{Local description and structure}\label{sec:ld}
In this section, we use the dynamic characterisation of the solution of \eqref{ap} to derive a local description for the structure of the optimal strategy. To that end, we first give a local description of the underlying processes by their differential semimartingale characteristics.

As in \cite{JS}, Theorem II.2.34, each $\R^d$-valued semimartingale $X$ has, with respect to some truncation function $h : \R^d\to\R^d$, the \emph{canonical representation}
$$X=X_0+X^c+ A^{X,h} +h(x)\ast(\mu^X-\nu^X)+ [x-h(x)]\ast\mu^X$$
with the jump measure $\mu^X$ of $X$ and its predictable compensator $\nu^X$. The quadruple $(b^X,c^X,F^X,B)$ of \emph{differential characteristics} of $X$ then consists of a predictable $\R^d$-valued process $b^X$, a predictable nonnegative-definite symmetric matrix-valued process $c^X$, a predictable process $F^X$ with values in the set of L\'evy measures on $\R^d$, and a predictable increasing RCLL process $B$ null at zero such that
$$
A^{X,h}= b^X\sint B, \qquad \la X^c\ra=c^X\sint  B,\qquad\nu^X=F^X\sint B.
$$
We use the same predictable process $B$ for all the finitely many semimartingales appearing in this paper, and since they are all special, we can and do always work with the (otherwise forbidden) truncation function $h(x)=x$, which simplifies computations considerably. We then write $A^X$ instead of $A^{X,h}$. For two (special) semimartingales $X$ and $Y$, we denote their joint differential characteristics by
$$(b^{X,Y},c^{X,Y}, F^{X,Y},B)=\left(\Big(\begin{array}{c}b^X\\
b^{Y}
\end{array}\Big),\Big(\begin{array}{cc}c^X& c^{XY}\\
c^{YX}& c^{Y}
\end{array}\Big), F^{X,Y},B\right).
$$
By adding $t$ to $B$, we can assume that $B$ is strictly increasing. Recall that $P_B=\PtB$. For the locally square-integrable semimartingale $S$, there exists by Proposition II.2.29 in \cite{JS} a predictable nonnegative-definite symmetric matrix-valued process $\widetilde c^M$ such that $\la M\ra = \widetilde c^M\sint B$, and it is given by
$
\widetilde c^M=c^S+\int xx^\T F^S(dx)-b^S(b^S)^\T \Delta B
$.

To prepare for the local description of the optimal strategy, we need some notation. For two $[0,1]$-valued (hence special) semimartingales $\ell^+$ and $\ell^-$, we look at their joint differential characteristics with $S$ and define the predictable functions 
\begin{align}
\mathfrak{g}^{1,\pm}(\psi):=\mathfrak{g}^{1,\pm}(\psi;S,\ell^+,\ell^-)
&:=
\ell^\pm_{-}\psi^\T c^S\psi\pm2\ell^\pm_{-}\psi^\T b^S\pm2 \psi^\T c^{S\ell^\pm},\label{g1+}\\
\mathfrak{g}^{2,\pm}(\psi):=\mathfrak{g}^{2,\pm}(\psi;S,\ell^+,\ell^-)
&:=
\ell^\pm_{-}\int\big(\big\{(1\pm\psi^\T u)^+\big\}^2-1\mp2\psi^\T u\big)F^S(du) \nonumber\\
&\phantom{:=\ }+\int\big(\big\{(1\pm\psi^\T u)^+\big\}^2-1\big)yF^{S,\ell^\pm}(du,dy)\nonumber\\
&\phantom{:=\ }+\int \big\{(1\pm\psi^\T u)^-\big\}^2(\ell^\mp_-+z) F^{S,\ell^\mp}(du,dz),\label{g2+}\\
\mathfrak{g}^\pm(\psi):=\mathfrak{g}^\pm(\psi;S,\ell^+,\ell^-)
&:=
\mathfrak{g}^{1,\pm}(\psi;S,\ell^+,\ell^-)+\mathfrak{g}^{2,\pm}(\psi;S,\ell^+,\ell^-).\label{g+}
\end{align}
All these functions have $\psi\in\R^d$ as arguments and depend on $\om,t$ via $\ell^\pm_{t-}(\om)$ and the joint characteristics of $S$ and $\ell^\pm$. For ease of notation, we shall drop in the proofs all superscripts ${}^\T$, writing $xy$ instead of $x^\T y$ for the scalar product of two vectors $x,y$.

Our first main result is now a local description of the optimal strategy $\tvp$ for \eqref{ap}. It is obtained by examining the drift rate of $J(\vt)$, as follows. Recall that the constraints are given by a predictable correspondence $K$ with closed cones as values.
\bt\label{mainthm1}
For each $\vt\in\TBK$, define a $K$-valued predictable process $\psi$ via
\be
\psi := \one_{\{V_{-}(x,\vt)\ne0\}} \frac{\vt}{|V_{-}(x,\vt)|}+\one_{\{V_{-}(x,\vt)=0\}}\vt\label{defpsi}
\ee
or equivalently
$$
\vt=:V_-^+(x,\vt)\psi+V_-^-(x,\vt)\psi+\one_{\{V_{-}(x,\vt)=0\}}\psi.
$$
Then:

\vspace{0.25\baselineskip}
{\bf{1)} }The finite variation part of $J(\vt)$ is given by $A(\vt)=b^{J(\vt)}\sint B$ with
\begin{align*}
b^{J(\vt)}
&=
\big(V_-^+(x,\vt)\big)^2\big\{\fg^+(\psi;S,\ell^+,\ell^-)+b^{\ell^+}\big\}+\big(V_-^-(x,\vt)\big)^2\big\{\fg^-(\psi;S,\ell^+,\ell^-)+b^{\ell^-}\big\}\nonumber\\
&\phantom{=\ }
+\one_{\{V_{-}(x,\vt)=0\}}\Big(\int\big((\psi^\T u)^+\big)^2(\ell^+_{-}+y)F^{S,\ell^+}(du,dy)+\ell^-_-\psi^\T c^S\psi\nonumber\\
&\phantom{=\ }
+\int\big((\psi^\T u)^-\big)^2(\ell^-_{-}+z)F^{S,\ell^-}(du,dz)\Big)\geq 0.
\end{align*}

\vspace{0.25\baselineskip}
{\bf{2)} }If there exists a solution $\tvp=\tvp^{(x,0)}\in\TBK$ to problem \eqref{ap} with the property that
$$
V(x,\tvp)=x+\tvp\sint S=x\,\E(\tpsi\sint S),
$$
then the joint differential characteristics of $(S,L^+,L^-)$ satisfy the two coupled equations
\begin{equation}
b^{L^\pm}
=
-\min_{\psi\in K}\mathfrak{g}^\pm(\psi;S,L^+,L^-)=-\mathfrak{g}^\pm(\pm\tpsi;S,L^+,L^-)
\quad\text{on $\{V_-(x,\tvp) \gtrless 0\}$.}
\label{mainthm1:eq:1}
\end{equation}
\et
\bp
1) Since $J(\vt)$ is given by \eqref{E}, finding its drift rate $b^{J(\vt)}$ is a straightforward, but lengthy computation; this is done in Lemma \ref{lcdc} below. Then $b^{J(\vt)}$ is nonnegative because $J(\vt)$ is a submartingale by the martingale optimality principle in Proposition \ref{prop:mop}.

2) The basic idea to prove the first equality is (as usual) to assume that the set
$$D:=\big\{\omt \,\big|\, b^{L^+}>-\min_{\psi\in K}\mathfrak{g}^+(\psi;S,L^+,L^-)\big\}\cap\{x\,\E(\tpsi\sint S)_->0\}$$
has $P_B(D)>0$ and then to construct from $D$ via measurable selection a strategy $\vt$ in $\TBK$ which violates the submartingale property of $J(\vt)$. This simple idea is technically a bit involved because one must ensure that $\vt$ is $K$-admissible and that there exists a set $D'\in\cP$ with $D' \subseteq D$, $P_B(D')>0$ and $V_-(x,\vt)>0$ on $D'$. The details are as follows.

Since $V(x,\tvp)=x\,\E(\tpsi\sint S)$ is a stochastic exponential, it changes sign only at jumps with $\tpsi \Delta S<-1$, which $P$-a.s.~can only happen a finite number of times. So there exist stopping times $\tau_1\leq\tau_2$ such that $P_B(D\,\cap\,\rrbracket \tau_{1},\tau_2\rrbracket)>0$ and $x\,\E(\tpsi\sint S)_->0$ on $\rrbracket \tau_{1},\tau_2\rrbracket$. By part 2) of Lemma \ref{las}, we can choose $F_{\varepsilon}\in\F_{\tau_1}$ such that $\tvp\one_{\llbracket 0,\sigma_1\rrbracket }\in\TBK$ and $(x+\tvp\sint S_{\sigma_1})\one_{F_{\varepsilon}}\geq0$ is uniformly bounded and $D_{\ve}:=D\,\cap\,\rrbracket \sigma_{1},\sigma_2\rrbracket$ has $P_B(D_{\ve})>0$, where $\sigma_i:=\tau_i\one_{ F_{\varepsilon}}+T\one_{ F_{\varepsilon}^c}$ for $i=1,2$ are stopping times. Because $\fg^+$ is a Carath\'eodory function by Lemma \ref{lg} below and $K$ is a predictable correspondence, we can construct by Propositions \ref{propCara} and \ref{Castaing} a $K$-valued predictable process $\vp$ with $\fg^+(\vp)<-b^{L^+}$ on $D_\ve$ and $\fg^+(\vp)=0$ else. After possibly shrinking $D_\ve$, we can also assume without loss of generality that $\vp$ is bounded, which implies that $\vp$ is in $\cL(S)$ so that $\vp\sint S$ is well defined and has $P$-a.s.~only a finite number of jumps with $\vp\Delta S<-1$. Thus there exists stopping times $\vr_1\leq\vr_2$ such that $D':=D_\ve\cap\,\rrbracket \vr_{1},\vr_2\rrbracket$ has $P_B(D')>0$ and $\E(\psi\sint S)_->0$ on $\rrbracket \vr_{1},\vr_2\rrbracket$, where $\psi:=\vp\one_{\rrbracket \vr_{1},\vr_2\rrbracket}$. By stopping $\E(\psi\sint S)_-$ and $S$, we can even choose $\vr_2$ such that $\E(\psi\sint S)_-$ is bounded and $\E(\psi\sint S)_-\psi\in\Theta(K)$; this uses that $K$ is cone-valued. Moreover, since $(x+\tvp\sint S_{\sigma_1})\one_{F_{\varepsilon}}$ is bounded, also $(x+\tvp\sint S_{\sigma_1})\one_{F_{\varepsilon}}\E(\psi\sint S)_-\psi$ is in $\Theta(K)$. Therefore the sum
$$\vt:=\tvp\one_{\llbracket 0,\sigma_1\rrbracket }+(x+\tvp\sint S_{\sigma_1})\one_{F_{\varepsilon}}\E(\psi\sint S)_-\psi$$ is in $\TBK$ and has $(x+\vt\sint S)_->0$ and $\fg^+(\frac{\vt}{(x+\vt\ssint S)_-})=\fg^+(\psi)<-b^{L^+}$ on $D'$. In view of part 1), $\one_{D'}\sint A(\vt)=(\one_{D'}b^{J(\vt)})\sint B=(\one_{D'}(x+\vt\sint S)_-\{\fg^+(\psi)+b^{L^+}\})\sint B$ is strictly decreasing on a non-negligible set, and so $J(\vt)$ cannot be a submartingale. This contradicts the martingale optimality principle and thus establishes the equality for $b^{L^+}$. The argument for $b^{L^-}$ is completely analogous and therefore omitted.
\ep
To explain the significance as well as the limitations of Theorem \ref{mainthm1}, let us suppose that we have an optimal strategy $\tvp$ for problem \eqref{ap}. Then part 2) of Theorem \ref{mainthm1} gives a kind of BSDE description for the pair $(L^+,L^-)$ since it expresses their drift rates in terms of their joint semimartingale characteristics with $S$. However, this description is not yet fully informative on its own. A closer look at \eqref{mainthm1:eq:1} shows that we only have a description of the drift of $L^+$ (or $L^-$) when $V_-(x,\tvp)$ is positive (or negative). Once $V(x,\tvp)$ hits $0$, it stays there, being a stochastic exponential, and we can no longer tell how $L^\pm$ behave. Even worse, $V(x,\tvp)$ might jump across $0$ so that we immediately lose track of the drift of $L^+$ or $L^-$, depending on whether the jump goes downwards or upwards. To overcome this difficulty and obtain a full characterisation of $L^\pm$, we must be able to ``restart $V(x,\tvp)$ whenever it jumps across or to $0$''. This can be achieved by assuming that not only \eqref{ap}, but each problem \eqref{A} for $x$ and $\tau$ has a solution. This key insight can be traced back to \v Cern\'y and Kallsen \cite{CK07}.

The second condition we need to get a description of $L^\pm$ is that these processes as well as their left limits are strictly positive. As already explained before Lemma \ref{lwac}, this can be interpreted as a kind of absence-of-arbitrage condition. In fact, if -- as in \cite{CK07} -- there exists an equivalent local martingale measure for $S$ with density in $\LiiP$, that condition is automatically satisfied; a slightly more general result is given in Lemma \ref{lwac} above. For the case without constraints, we provide a sharp result in Theorem \ref{T6.2} below.
\begin{cor}\label{corld}
Suppose that $L^\pm$ and their left limits $L^\pm_-$ are all $(0,1]$-valued and that there exists a solution $\tvp^{(x,\tau)}$ to \eqref{A} for any $x\in\R$ and any stopping time $\tau$. Then the joint differential characteristics of $(S,L^+,L^-)$ satisfy
\be
b^{L^+}=-\min_{\psi\in K}\mathfrak{g}^+(\psi;S,L^+,L^-)
\qquad\text{and}\qquad
b^{L^-}=-\min_{\psi\in K}\mathfrak{g}^-(\psi;S,L^+,L^-).\label{eq:corld:a}
\ee
Moreover, for all $x\in\R$ and all stopping times $\tau$, there exists a solution to the SDE
\be
d V_t^{(x,\tau)}=\big( (V_{t-}^{(x,\tau)})^+\widetilde\psi_t^++(V_{t-}^{(x,\tau)})^-\widetilde\psi_t^-\big)\one_{\rrbracket \tau,T\rrbracket }\,dS_t,\quad V_0^{(x,\tau)}=V_\tau^{(x,\tau)}=x \label{eq:corld:b}
\ee
with $\widetilde\psi^{\pm}\in\underset{\psi\in K}{\argmin}\mathfrak{g}^\pm(\psi;S,L^+,L^-)$ on 
$\{V^{(x,\tau)}_- \gtrless0\}\,\cap\,\rrbracket\tau,T\rrbracket$
and $\tpsi^\pm \one_{\{V^{(x,\tau)}_- \gtrless0\}\cap\rrbracket\tau,T\rrbracket}$ in $\cL(S)$, and we have 
\be
\tvp^{(x,\tau)} =\big(( V_{-}^{(x,\tau)})^+\widetilde\psi^{+}+( V_{-}^{(x,\tau)})^-\widetilde\psi^{-}\big)\one_{\rrbracket \tau,T\rrbracket }.\label{eq:corld:c}
\ee
Note that $\widetilde\psi^\pm$ are not the positive and negative parts of the process $\widetilde\psi$ from Theorem \ref{mainthm1}.
\end{cor}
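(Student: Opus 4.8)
The plan is to combine the pointwise drift equation of Theorem \ref{mainthm1} with the reparametrisation of Lemma \ref{lpfw}, using the hypothesis that \eqref{A} is solvable for \emph{every} $x$ and $\tau$ to patch together a description of $L^\pm$ that remains valid even after $V(x,\tvp)$ touches or jumps across $0$. First I would establish \eqref{eq:corld:a}. By Lemma \ref{lpfw}, the assumed solvability of \eqref{A} together with $L^\pm,L^\pm_->0$ gives for each $(x,\tau)$ a representation $V(x,\tvp^{(x,\tau)})=x\,\E(\tpsi^{(x,\tau)}\sint S)$, so the hypothesis of part 2) of Theorem \ref{mainthm1} is met. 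That theorem then yields $b^{L^+}=-\min_{\psi\in K}\fg^+(\psi;S,L^+,L^-)$ on $\{V_-(x,\tvp^{(x,\tau)})>0\}$ and the analogous identity for $b^{L^-}$ on $\{V_-<0\}$. The point of varying $(x,\tau)$ is that the drift $b^{L^+}$ is an intrinsic predictable process not depending on $(x,\tau)$, whereas the set $\{V_-(x,\tvp^{(x,\tau)})>0\}$ does depend on the starting data; by choosing $\tau$ to be a restart time and $x$ of the appropriate sign (e.g.\ restarting with $x=+1$ at any stopping time via $\tvp^{(1,\tau)}$), one covers $P_B$-a.e.\ point of $\Ombar$ by such a positivity set. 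Hence the pointwise equation for $b^{L^+}$ holds $P_B$-a.e., and symmetrically for $b^{L^-}$, which is exactly \eqref{eq:corld:a} as an identity of predictable processes.

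Next I would produce the feedback strategy. Having \eqref{eq:corld:a} globally, the predictable functions $\fg^\pm(\,\cdot\,;S,L^+,L^-)$ are Carath\'eodory (Lemma \ref{lg}) and $K$ is a closed-cone-valued predictable correspondence, so Propositions \ref{Castaing} and \ref{propCara} furnish predictable minimisers $\tpsi^+\in\argmin_{\psi\in K}\fg^+$ and $\tpsi^-\in\argmin_{\psi\in K}\fg^-$. With these in hand I would set up the linear SDE \eqref{eq:corld:b}, whose coefficient is linear in $V^{(x,\tau)}_-$ through the positive- and negative-part weights; existence and uniqueness of a solution follow from the integrability and the $\cL(S)$-membership of $\tpsi^\pm\one_{\{V_-\gtrless0\}\cap\rrbracket\tau,T\rrbracket}$, which I would verify using the same stopping and boundedness arguments that appear in the measurable-selection step of Theorem \ref{mainthm1} (stopping to make $\E(\tpsi^\pm\sint S)_-$ bounded, then invoking the cone property so that $V_-^\pm\tpsi^\pm\in\Theta(K)$).

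Finally I would identify the solution of \eqref{eq:corld:b} with the optimiser, i.e.\ prove \eqref{eq:corld:c}. The strategy defined by the right-hand side of \eqref{eq:corld:c} is $K$-valued by the cone property and generates wealth $V^{(x,\tau)}$; I would show that the associated process $J(\cdot;x,\tau)$ has vanishing drift rate. Indeed, plugging $\psi=\tpsi^\pm$ into part 1) of Theorem \ref{mainthm1} makes the bracketed terms $\fg^\pm(\tpsi^\pm)+b^{L^\pm}$ equal to zero precisely by \eqref{eq:corld:a}, and on $\{V_-=0\}$ the extra nonnegative term must likewise vanish for the minimiser; hence $b^{J(\vt)}=0$, so $J(\vt;x,\tau)$ is a martingale, and by the martingale optimality principle of Proposition \ref{prop:mop} the strategy is optimal for \eqref{A}. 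Matching it with the factorised optimiser $\tvp^{(x,\tau)}$ via \eqref{strat-fact} of Lemma \ref{mart} then gives \eqref{eq:corld:c}.

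The main obstacle I expect is the global upgrade in the first paragraph: turning the conditional identities, each valid only on the $(x,\tau)$-dependent positivity set $\{V_-(x,\tvp^{(x,\tau)})\gtrless0\}$, into a single $P_B$-a.e.\ statement. One must argue carefully that the family of such sets, as $(x,\tau)$ ranges over all admissible data, exhausts $\Ombar$ up to a $P_B$-null set — this is exactly where the assumption that \emph{every} \eqref{A} is solvable (rather than just \eqref{ap}) is indispensable, since it lets one restart the wealth process after each sign change or jump across $0$. The secondary technical difficulty is the well-posedness of the sign-dependent SDE \eqref{eq:corld:b}, whose coefficient switches between $\tpsi^+$ and $\tpsi^-$ at zero-crossings of $V_-$; I would handle this by solving successively on the excursion intervals delimited by the (finitely many, $P$-a.s.) jump times across $0$, using on each piece the boundedness and cone-stability reductions already exploited in Theorem \ref{mainthm1}.
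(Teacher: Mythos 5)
Your first step, upgrading the drift identities of Theorem \ref{mainthm1} to the unrestricted statement \eqref{eq:corld:a} by restarting with $x=\pm1$ at suitable stopping times, is exactly the mechanism the paper relies on (it is what the solvability of \eqref{A} for \emph{all} $(x,\tau)$ is for), and you correctly flag the covering of $\Ombar$ as the delicate point. One caveat: restarting ``after each sign change'' may require transfinitely many restarts, since the restart times can accumulate before $T$; a more robust version is to restart at the d\'ebut of the exceptional set $D:=\{b^{L^+}\ne-\min_{\psi\in K}\fg^+\}$ and use right-continuity of the restarted wealth at its starting time to show that $\{V_-(1,\tvp^{(1,\tau)})>0\}\cap\rrbracket\tau,T\rrbracket$ meets $D$ in positive $P_B$-measure, a contradiction.

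The genuine gap is in your second and third steps. You propose to obtain $\tpsi^\pm$ by measurable selection from $\argmin_{\psi\in K}\fg^\pm$, then \emph{solve} the SDE \eqref{eq:corld:b} and \emph{verify} optimality of the resulting strategy. This fails at the very first point: there is no argument showing that arbitrarily selected minimisers are in $\cL(S)$ (nor that the SDE with such coefficients is solvable, nor that the resulting $\bar\vp$ is in $\TBK$, which your appeal to Proposition \ref{prop:mop} would require). This is precisely the obstruction the paper states in the Remark following Corollary \ref{corld}: ``it is not obvious whether these $\tpsi^\pm$ are automatically in $\cL(S)$.'' The stopping/boundedness tricks from the proof of Theorem \ref{mainthm1} only produce bounded perturbations on small stochastic intervals; they do not give global $S$-integrability of a selected minimiser. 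The paper's proof avoids all of this by going in the opposite direction: it takes the \emph{assumed} optimiser $\tvp^{(x,\tau)}$, writes $V(x,\tvp^{(x,\tau)})=x\,\E(\tpsi^{(x,\tau)}\sint S)$ via Lemma \ref{lpfw}, and defines $\tpsi^\pm:=\pm\tpsi^{(x,\tau)}\one_{\{V_-(x,\tvp^{(x,\tau)})\gtrless0\}}$. These are in $\cL(S)$ because $\tpsi^{(x,\tau)}$ is; the SDE \eqref{eq:corld:b} is then satisfied \emph{by construction} with $V^{(x,\tau)}:=V(x,\tvp^{(x,\tau)})$ (existence is by exhibiting this solution, no SDE theory needed); the minimiser property on $\{V_-^{(x,\tau)}\gtrless0\}\cap\rrbracket\tau,T\rrbracket$ is read off from \eqref{mainthm1:eq:1}; and \eqref{eq:corld:c} is immediate. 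A secondary flaw in your plan: vanishing drift $b^{J(\vt)}=0$ only makes $J(\vt;x,\tau)$ a $\sigma$-martingale/local martingale, not a martingale, so even granting integrability of the selected minimisers you would still need the nonnegativity--supermartingale--class (D) argument of Theorem \ref{vt1} rather than a direct appeal to the martingale optimality principle.
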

\bp
By Lemma \ref{lpfw}, we have $V(x,\tvp^{(x,\tau)})=x\,\E(\tpsi^{(x,\tau)}\sint S)$ for some $\tpsi^{(x,\tau)}\in\cL(S)$ with $\tpsi^{(x,\tau)}=\tpsi^{(x,\tau)}\one_{\rrbracket \tau,T\rrbracket }$ so that $\tpsi^\pm:=\tpsi^{(x,\tau)}\one_{\{V_-(x,\tvp^{(x,\tau)}) \gtrless 0\}}$ are in $\cL(S)$ and yield \eqref{eq:corld:b} with $V^{(x,\tau)}:=V(x,\tvp^{(x,\tau)})$. Moreover, \eqref{mainthm1:eq:1} in Theorem \ref{mainthm1} shows that $\tpsi^\pm$ are minimisers for $\fg^\pm$ on $\{V_-(x,\tvp^{(x,\tau)})\gtrless0\}\,\cap\,\rrbracket\tau,T\rrbracket$, and finally \eqref{eq:corld:c} holds by construction because $V^{(x,\tau)}=V(x,\tvp^{(x,\tau)})=x+\tvp^{(x,\tau)}\sint S$.
\ep
\begin{remark}
For the purpose of \emph{constructing} an optimal strategy, the result in Corollary \ref{corld} is not yet optimal. Ideally, one would like to take any minimisers $\tpsi^\pm$ for $\fg^\pm$, solve the SDE \eqref{eq:corld:b} and obtain that $\tvp^{(x,\tau)}$ \emph{defined} by \eqref{eq:corld:c} is optimal. However, it is not obvious whether these $\tpsi^\pm$ are automatically in $\cL(S)$. (That would of course imply solvability of \eqref{eq:corld:b}, and even optimality of $\tvp^{(x,\tau)}$ if that strategy is $K$-admissible.)
\end{remark}
Before we proceed with our BSDE descriptions, let us briefly return to the classical (but constrained) Markowitz problem in \eqref{CMP}. For given initial wealth $x$ and target mean $m$, we know from Lemma \ref{lgds} that the optimal strategy is given by
$
\textstyle
\tvt^{(m,x)}=\frac{m-x}{E[\widetilde\vp\ssint S_T]}\widetilde\vp,
$
where $\widetilde\vp = \widetilde\vp^{(-1,0)}$ solves \eqref{A} for $x=-1, \tau=0$. To express $\tvt^{(m,x)}$ in feedback form, write
\be
V(x, \tvt^{(x,m)})
=
x + \frac{m-x}{E[\widetilde\vp\sint S_T]} \big( V(-1,\widetilde\vp)+1 \big)
=
\widetilde m + \frac{m-x}{E[\widetilde\vp\sint S_T]}V(-1, \widetilde\vp) 
\label{auxeq1}
\ee
with
$$
\widetilde m
:=
x + \frac{m-x}{E[\widetilde\vp\sint S_T]}
=
\frac{m-xE[1-\widetilde\vp\ssint S_T]}{E[\widetilde\vp\sint S_T]}.
$$
By Corollary \ref{corld}, we have
$
\widetilde\vp^{(-1,0)} = (V_-^{(-1,0)})^+ \widetilde\psi^+ + (V_-^{(-1,0)})^- \widetilde\psi^-
$
and therefore
$$
\tvt^{(m,x)}
=
\big( V_-(x, \tvt^{(m,x)}) - \widetilde m \big)^+ \widetilde \psi^+ + \big( V_-(x, \tvt^{(m,x)}) - \widetilde m \big)^- \widetilde \psi^-
$$
by plugging in for $V^{(-1,0)} = V(-1, \widetilde\vp)$ from \eqref{auxeq1}. This shows that $\tvt^{(m,x)}$ is indeed a \emph{state feedback control}, and it also makes it clear that the critical level for switching between the \lq\lq positive and negative case strategies\rq\rq\ $\widetilde\psi^+$ and $\widetilde\psi^-$ is not zero (as one might think from the appearance of positive and negative parts), but rather $\widetilde m$.

Having found in Theorem \ref{mainthm1} and Corollary \ref{corld} necessary conditions for optimality, we now turn to sufficient ones.
\bt[Verification theorem]\label{vt1}
Let $\ell^\pm$ be semimartingales such that
\bi
\item[\bf{1)}]$\ell^\pm$ and their left limits $\ell^\pm_{-}$ are all $(0,1]$-valued and $\ell^\pm_T=1$.
\item[\bf{2)}]The joint differential characteristics of $(S,\ell^+,\ell^-)$ satisfy
\be
b^{\ell^+}=-\min_{\psi\in K}\mathfrak{g}^+(\psi;S,\ell^+,\ell^-)
\qquad\text{and}\qquad b^{\ell^-}=-\min_{\psi\in K}\mathfrak{g}^-(\psi;S,\ell^+,\ell^-).\label{eq:vt1:a}
\ee
\item[\bf{3)}]The solution to the SDE
\begin{equation}
d V_t=( V_{t-}^+\widetilde\psi_t^++V_{t-}^-\widetilde\psi_t^-)\,dS_t,\quad V_0=x\label{eq:vt1:b}
\end{equation}
with $\widetilde\psi^{\pm}\in\underset{\psi\in K}{\argmin}\mathfrak{g}^\pm(\psi)$ on $\{V_-\gtrless0\}$ exists and satisfies that
\be
\bar\vp:=V_{-}^+\widetilde\psi^{+}+ V_{-}^-\widetilde\psi^{-}\in\TBK .
\label{eq:vt1:c}
\ee
\ei
Then $\widetilde\vp:=\bar\vp$ is the solution to \eqref{ap}. In particular, $(V^+)^2\ell^++( V^-)^2\ell^-$ is of class (D).
\et
To better explain the significance of our results, let us rewrite the drift descriptions \eqref{eq:corld:a} and \eqref{eq:vt1:a} into a BSDE as follows. Consider the pair of coupled backward equations
\begin{equation}
\ell^\pm
=
-\inf\limits_{\psi\in K}\fg^\pm(\psi;S,\ell^+,\ell^-)\sint B+H^{\ell^\pm}\sint S^c+W^{\ell^\pm}\ast(\mu^S-\nu^S)+N^{\ell^\pm},\quad \ell^\pm_T=1,\label{BSDE:1}
\end{equation}
where a solution is a tuple $(\ell^\pm,H^{\ell^\pm},W^{\ell^\pm},N^{\ell^\pm})$ satisfying suitable properties; see below for a more precise formulation. Then Corollary \ref{corld} says that the opportunity processes $L^\pm$ from \eqref{D} satisfy the BSDE system \eqref{BSDE:1}, and Theorem \ref{vt1} conversely allows us to construct from a solution to \eqref{BSDE:1} a solution to the basic problem \eqref{ap}, if the natural candidate strategy $\bar\vp$ from \eqref{eq:vt1:c} has sufficiently good properties.
\begin{remark}\label{rm:vt1}
More generally, we could use Theorem \ref{vt1} to construct solutions to \eqref{A} for any $x\in\R$ and stopping time $\tau$. Indeed, if we replace the SDE \eqref{eq:vt1:b} with \eqref{eq:corld:b}, the definition of $\bar\vp$ in \eqref{eq:vt1:c} by \eqref{eq:corld:c} and assume that $\bar\vp^{(x,\tau)}$ is in $\TBK$, then $\bar\vp^{(x,\tau)}$ is the solution to \eqref{A}. The argument is exactly the same as below for problem \eqref{ap}.
\end{remark}
\bp[Proof of Theorem \ref{vt1}]
For $\vt\in\TBK$, define $j(\vt)=(V^+(x,\vt))^2\ell^++(V^-(x,\vt))^2\ell^-$ and a $K$-valued predictable process $\psi$ by \eqref{defpsi} so that $\vt=V_-^+(x,\vt)\psi+V_-^-(x,\vt)\psi+\one_{\{V_{-}(x,\vt)=0\}}\psi$. If $\vt\in\TS(K)$, then $\sup_{0\leq t\leq T}|V_t(x,\vt)|\in\LiiP$. Since $\ell^\pm$ are $(0,1]$-valued, we then have $\sup_{0\leq t\leq T}|j_t(\vt)|\in L^1(P)$ and so $j(\vt)$ is a special semimartingale with canonical decomposition $j(\vt)=j_0(\vt)+M^{j(\vt)}+A^{j(\vt)}$. Lemma \ref{lcdc} below gives $A^{j(\vt)}=b^{j(\vt)}\sint B$ with
\begin{align*}
b^{j(\vt)} = \bar b^{\vt}
&=
\big(V_-^+(x,\vt)\big)^2\big\{\fg^+(\psi;S,\ell^+,\ell^-)+b^{\ell^+}\big\}+\big(V_-^-(x,\vt)\big)^2\big\{\fg^-(\psi;S,\ell^+,\ell^-)+b^{\ell^-}\big\}\nonumber\\
&\phantom{=\ }
+\one_{\{V_{-}(x,\vt)=0\}}\Big( \int\big((\psi u)^+\big)^2(\ell^+_{-}+y)F^{S,\ell^+}(du,dy)+\ell^-_-\psi c^S\psi\nonumber\\
&\phantom{=\ }
+\int\big((\psi u)^-\big)^2(\ell^-_{-}+z)F^{S,\ell^-}(du,dz)\Big).
\end{align*}
Since $\bar b^{\vt}\geq 0$ by the BSDE \eqref{eq:vt1:a} in 2) and because $\ell^\pm$ are nonnegative, $j(\vt)$ is therefore a submartingale, and using $|V_T(x,\vt)|^2 = j_T(\vt)$ due to $\ell^\pm_T = 1$ gives
\be
E\big[|V_T(x,\vt)|^2\big]\geq E\big[(x^+)^2\ell_0^++(x^-)^2\ell_0^-\big].\label{eq:pr:vt}
\ee
Because $\vt\in\TS(K)$ was arbitrary and the closure in $L^2$ of $G_T(\TS(K))$ contains $G_T(\TBK)$, by definition, \eqref{eq:pr:vt} extends to all $\vt\in\TBK$.

To show that $\bar\vp$ is optimal, we want to argue that $j(\bar\vp)$ is a supermartingale, since we then get the reverse inequality in \eqref{eq:pr:vt} which is enough to conclude. Because $\bar\vp$ is only in $\TBK$, however, we do not know a priori if $j(\bar\vp)$ is special and thus must localise as in Lemma \ref{lcdc}. So we define for each $n\in\N$ the set $D_n:=\{|\bar\vp|\leq n\}\in\cP$ and $X^n:=\one_{D_n}\sint j(\bar\vp)=j^n(\bar\vp)$. We first note that \eqref{eq:vt1:c} and \eqref{eq:vt1:b} imply that $V=V(x,\bar\vp)$. The SDE \eqref{eq:vt1:b} then implies that $V$ remains at $0$ after $V_-$ hits zero, and so $\bar\vp\one_{\{V_-=0\}}=0$ by \eqref{eq:vt1:c}. For $\bar\psi$ defined from $\bar\vp$ via \eqref{defpsi} or \eqref{defpsi2} in Lemma \ref{lcdc} below, we then get
$$
\bar\psi=\bar\vp=0 
\quad\text{on $\{V_-=0\}=\{V_-(x,\bar\vp)=0\}$}$$
and therefore from \eqref{defb2} below that
$$\bar b^{\bar\vp}=\big(V_-^+(x,\bar\vp)\big)^2\big\{\fg^+(\bar\psi;S,\ell^+,\ell^-)+b^{\ell^+}\big\}+\big(V_-^-(x,\bar\vp)\big)^2\big\{\fg^-(\bar\psi;S,\ell^+,\ell^-)+b^{\ell^-}\big\}.
$$
But \eqref{defpsi} also gives that $\bar\vp=V_-^+(x,\bar\vp)\bar\psi+V_-^-(x,\bar\vp)\bar\psi=V_-^+\bar\psi+V_-^-\bar\psi$, and comparing this to \eqref{eq:vt1:c} shows that
$$
\bar\psi=\widetilde \psi^+ \text{ on $\{V_->0\}$}
\qquad\text{and}\qquad
\bar\psi=\widetilde \psi^- \text{ on $\{V_-<0\}$.}
$$
Because $\widetilde\psi^\pm$ are minimisers for $\fg^\pm$, we obtain that $\bar b^{\bar\vp}\equiv0$.

Now each $X^n$ is by Lemma \ref{lcdc} below and the above argument a special semimartingale with finite variation part $A^{X^n}=A^{j^n(\bar\vp)}=b^{j^n(\bar\vp)}\sint B=(\one_{D_n}\bar b^{\bar\vp})\sint B\equiv 0$. So each $X^n$ is a local martingale, which means that $j(\bar\vp)$ is a $\sigma$-martingale. Since $j(\bar\vp)\ge0$, it is therefore a supermartingale and so $\bar\vp$ solves \eqref{ap}. By part 2) of Proposition \ref{prop:mop}, $j(\bar\vp)$ is then even a martingale on $[0,T]$ and hence in particular of class (D).
\ep
We now return to the formulation of the equations \eqref{eq:corld:a} or \eqref{eq:vt1:a} as a coupled system of BSDEs. We first recall that by Proposition II.2.29 and Lemma III.4.24 in \cite{JS}, any special semimartingale $\ell$ can be decomposed as
\be
\ell
=
A^{\ell}+H^{\ell}\sint S^c+W^{\ell}\ast(\mu^S-\nu^S)+N^{\ell}\label{delta}
\ee
with $H^{\ell}\in\Lzloc(S^c)$, $W^{\ell}\in G_{\rm loc}(\mu)$ and $N^{\ell}\in\MnlocP$ such that $\la S^c,(N^{\ell})^c\ra=0$ and $M^P_\mu(\Delta N^{\ell}|\widetilde \cP)=0$. Then
$$\Delta \ell=\Delta A^{\ell}+(W^{\ell}-\Whell)\one_{\{\Delta S\ne0\}}+\Delta N^{\ell}$$
and therefore
\be
{}^\mathbf{p}(\Delta \ell\Delta S)=\int\big(\Delta A^{\ell}+(W^{\ell}(u)-\Whell)\big)uF^S(du).\label{alpha}
\ee
This allows us to rewrite the functions $\fg^\pm$ from \eqref{g1+}--\eqref{g+} as
\begin{align}
\mathfrak{g}^\pm(\psi;S,\ell^+,\ell^-)
&=
\ell^\pm_{-}\psi^\T c^S\psi\pm2\ell^\pm_{-}\psi^\T b^S\pm2 \psi^\T c^{S}H^{\ell^\pm}\nonumber\\
&\phantom{=\ }
+\ell^\pm_{-}\int\big(\big\{(1\pm\psi^\T u)^+\big\}^2-1\mp2\psi^\T u\big)F^S(du)\nonumber\\
&\phantom{=\ }
+\int\big(\big\{(1\pm\psi^\T u)^+\big\}^2-1\big) \big( \Delta A^{\ell^\pm}+W^{\ell^\pm}(u)-\Whellpm \big) F^{S}(du)\nonumber\\
&\phantom{=\ }
+\int \big\{(1+\psi^\T u)^-\big\}^2 \big(\ell^\mp_-+\Delta A^{\ell^\mp}+W^{\ell^\mp}(u)-\Whellmp \big) F^{S}(du)\nonumber\\
&=:
\mathfrak{h}^\pm(\psi;S,\ell^+,\ell^-).\label{beta}
\end{align}
We now consider the coupled system of backward equations
\be
\ell^\pm=-\Inf_{\psi\in K}\fh^\pm(\psi;S,\ell^+,\ell^-)\sint B+H^{\ell^\pm}\sint S^c+W^{\ell^\pm}\ast(\mu^S-\nu^S)+N^{\ell^\pm},\quad \ell^\pm_T=1.\label{BSDE2}
\ee
A \emph{solution} of \eqref{BSDE2} consists of tuples $(\ell^\pm,H^{\ell^\pm},W^{\ell^\pm},N^{\ell^\pm})$ such that $H^{\ell^\pm}$ are in $\Lzloc(S^c)$, $W^{\ell^\pm}$ are in $G_{\rm loc}(\mu)$, $N^{\ell^\pm}$ are in $\MnlocP$ with $\la S^c,(N^{\ell^\pm})^c\ra=0$ and $M^P_\mu(\Delta N^{\ell^\pm}|\widetilde \cP)=0$, and $\ell^\pm$ are (special) semimartingales with values in $[0,1]$. Moreover, being a solution also includes the condition that $\Inf_{\psi\in K}\fh^\pm(\psi;S,\ell^+,\ell^-)$ are finite-valued processes. For brevity, we sometimes call only $(\ell^+,\ell^-)$ a solution. Then Corollary \ref{corld} can be restated as
\begin{cor}\label{cor:BSDE}
Suppose that $L^\pm$ and their left limits $L^\pm_-$ are all $(0,1]$-valued and that there exists a solution to \eqref{A} for any $x\in\R$ and any stopping time $\tau$. Then the opportunity processes satisfy the coupled BSDE system
\begin{equation}
L^\pm
=
-\Inf_{\psi\in K}\fh^\pm(\psi;S,L^+,L^-)\sint B+H^{L^\pm}\sint S^c+W^{L^\pm}\ast(\mu^S-\nu^S)+N^{L^\pm},\quad L^\pm_T=1.
\label{L-BSDE}
\end{equation}
Moreover, 
%
there exist $K$-valued processes $\tpsi^\pm$ such that
$$\fh^\pm( \tpsi^\pm;S,L^+,L^-)=\Inf_{\psi\in K}\fh^\pm(\psi;S,L^+,L^-).
$$
\end{cor}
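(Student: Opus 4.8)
The plan is to obtain the BSDE \eqref{L-BSDE} purely by rewriting Corollary \ref{corld}, and to produce the minimizers $\tpsi^\pm$ by a measurable selection. For the first part, each $L^\pm$ is $[0,1]$-valued, hence bounded and special, so the canonical decomposition \eqref{delta} applies and yields tuples $(L^\pm,H^{L^\pm},W^{L^\pm},N^{L^\pm})$ with $H^{L^\pm}\in\Lzloc(S^c)$, $W^{L^\pm}\in G_{\rm loc}(\mu)$ and $N^{L^\pm}\in\MnlocP$ satisfying $\la S^c,(N^{L^\pm})^c\ra=0$ and $M^P_\mu(\Delta N^{L^\pm}|\widetilde\cP)=0$, so that
$$
L^\pm = A^{L^\pm}+H^{L^\pm}\sint S^c+W^{L^\pm}\ast(\mu^S-\nu^S)+N^{L^\pm}.
$$
By Corollary \ref{corld}, $A^{L^\pm}=b^{L^\pm}\sint B$ with $b^{L^\pm}=-\min_{\psi\in K}\fg^\pm(\psi;S,L^+,L^-)$, and the algebraic identity \eqref{beta} gives $\fg^\pm(\psi;S,L^+,L^-)=\fh^\pm(\psi;S,L^+,L^-)$. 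Substituting $b^{L^\pm}=-\inf_{\psi\in K}\fh^\pm(\psi;S,L^+,L^-)$ into the decomposition reproduces exactly \eqref{L-BSDE}. The finiteness requirement in the definition of a solution is then automatic, since $\inf_{\psi\in K}\fh^\pm=-b^{L^\pm}$ is the drift rate of a special semimartingale and hence a finite-valued, $B$-integrable predictable process.

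It remains to produce $K$-valued processes $\tpsi^\pm$ with $\fh^\pm(\tpsi^\pm)=\inf_{\psi\in K}\fh^\pm$. The essential input is that the infimum is \emph{attained} $P_B$-a.e.: this is what the ``$\min$'' in \eqref{eq:corld:a} encodes, resting in turn on Theorem \ref{mainthm1} (where, on $\{V^{(x,\tau)}_-\gtrless0\}$, the optimal $\pm\tpsi^{(x,\tau)}$ realizes the minimum of $\fg^\pm$) together with the assumed solvability of \eqref{A} for every $x$ and every stopping time $\tau$, which allows one to cover $\Ombar$ up to evanescence by such sign sets.

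Granting attainment, I would set up the argmin correspondence
$$
\Psi^\pm\,\omt := \big\{\psi\in K\omt \,\big|\, \fg^\pm(\omt,\psi)=\inf\nolimits_{\psi'\in K\omt}\fg^\pm(\omt,\psi')\big\}
$$
and verify it is closed-valued and predictable. Its values are closed because $\psi\mapsto\fg^\pm(\omt,\psi)=\fh^\pm(\omt,\psi)$ is continuous (a Carath\'eodory function by Lemma \ref{lg}) and $K$ is closed-valued; predictability of $\Psi^\pm$ follows from the Carath\'eodory structure of $\fg^\pm$ and the predictability of $K$, using that the value process $\inf_{\psi\in K}\fg^\pm$ is predictable (a standard property of normal integrands, cf.\ \cite{Roc75}) and the measurable-selection machinery recalled in Propositions \ref{Castaing} and \ref{propCara}. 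A Castaing selection (Proposition \ref{Castaing}) then delivers a predictable selector $\tpsi^\pm$ of $\Psi^\pm$, which is the desired minimizer.

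The main obstacle is the attainment step. Everything else is either bookkeeping (the decomposition \eqref{delta} and the identity \eqref{beta}) or a routine measurable selection. The delicate point, inherited from Corollary \ref{corld}, is to justify that varying $x$ and $\tau$ makes the sets $\{V^{(x,\tau)}_-\gtrless0\}\cap\rrbracket\tau,T\rrbracket$ exhaust $\Ombar$ up to a $P_B$-null set---controlling in particular that the restart times of the optimal wealth process do not accumulate before $T$---so that the pointwise minimum of $\fg^\pm$ is genuinely attained everywhere and not merely where one fixed optimal wealth process keeps its sign. Once this is secured, checking predictability of $\Psi^\pm$ and invoking Castaing is straightforward.
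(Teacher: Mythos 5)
Your proposal is correct and follows essentially the paper's own (implicit) route: the paper introduces this corollary as a direct restatement of Corollary \ref{corld}, obtained exactly as you do by applying the canonical decomposition \eqref{delta} to the special semimartingales $L^\pm$, substituting the drift from \eqref{eq:corld:a} rewritten via \eqref{alpha}--\eqref{beta} as $\fh^\pm$, and producing the minimisers by the measurable-selection machinery (Propositions \ref{propCara} and \ref{Castaing} applied to the argmin correspondence of the Carath\'eodory functions $\fg^\pm$ from Lemma \ref{lg}), which is the same device already used in the proof of Theorem \ref{mainthm1}. The attainment step you single out as the main obstacle is not actually an obstacle \emph{within this proof}: equation \eqref{eq:corld:a} of Corollary \ref{corld} is already a global, $P_B$-a.e.\ statement with an attained minimum (that is what the \lq\lq$\min$\rq\rq\ there asserts), so you may cite it as given; the exhaustion of $\Ombar$ by the sign sets $\{V^{(x,\tau)}_-\gtrless0\}\cap\rrbracket\tau,T\rrbracket$ via restarting at varying $(x,\tau)$ belongs to the proof of Corollary \ref{corld}, not to the corollary you are asked to prove.
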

The result in Corollary \ref{cor:BSDE} can be viewed as giving existence of a solution to the BSDE system \eqref{BSDE2}, and so it is natural to ask about uniqueness. For the case of an It\^o process $S$ in a Brownian filtration, Hu and Zhou \cite{HZ04} obtain a uniqueness result in the class of those solutions which have both $\ell^\pm$ uniformly bounded away from 0. However, this also rests on very restrictive assumptions on the It\^o coefficients of $S$ (uniformly bounded drift and uniformly elliptic volatility matrix), and one should not expect to have uniqueness in general. In fact, one can deduce from Example 3.26 in \cite{CK07} and the counterexample in \cite{CK08} that the opportunity processes $L^\pm$ are not the only solution to the BSDE system \eqref{BSDE2}, not even in the unconstrained case and if $S$ is continuous and under uniform integrability assumptions. Nevertheless, there is a positive result, motivated by similar ones in \cite{N09}: It turns out that $L^\pm$ are the \emph{maximal} processes which satisfy \eqref{BSDE2}.
\bl\label{maximal}
The opportunity processes $L^\pm$ satisfy $L^\pm\geq \ell^\pm$ for any solution $(\ell^+,\ell^-)$ of the BSDE \eqref{BSDE2}. In particular, under the assumptions of Corollary \ref{corld}, $(L^+,L^-)$ is the maximal solution of \eqref{BSDE2}.
\el 
\bp
This argument only uses the definitions of $L^\pm$ in \eqref{D} and \eqref{C} as essential infima. Let $(\ell^+,\ell^-)$ be any solution to \eqref{BSDE2} and define $\tau:=\Inf\{t>0 \,|\, \ell^+_t>L^+_t\}\wedge T$. By \eqref{D}, there exists a sequence $(\vt^n)$ in $\TS(K\one_{\rrbracket\tau,T\rrbracket})$ such that $\Lim_{n\to\infty}E[|V_T(1,\vt^n)|^2|\F_\tau]=L^+_\tau$ $P$-a.s.~The same argument as in the proof of Lemma \ref{lcdc} then shows that the process \hbox{$j(\vt^n)=(V^+(1,\vt^n))^2\ell^++(V^-(1,\vt^n))^2\ell^-$} is a submartingale, and so we obtain from $\ell^+_T=1$ and $V_\tau(1,\vt^n)=1$ that
$$\ell^+_\tau\leq \Lim_{n\to\infty}E\big[|V_T(1,\vt^n)|^2\big|\F_\tau\big]=L^+_\tau.
$$
By the definition of $\tau$, this implies that $P[\tau<T]=0$ and therefore that $L^+\geq \ell^+$ $P$-a.s. The proof of $L^-\geq\ell^-$ $P$-a.s.~is analogous and therefore omitted.
\ep
\begin{remark}
Due to the coupling term coming from $\mathfrak{h}^\pm$, the BSDE system \eqref{BSDE2} is very complicated. It has a nonlinear non-Lipschitz generator plus a driver with jumps, so that finding a solution by general BSDE techniques seems a formidable challenge. It is fortunate (and inherent to our approach) that we do not need to tackle this issue. We exploit instead that \eqref{L-BSDE} is intimately related to a stochastic control problem and prove directly existence of a solution to the latter, which then yields existence of a solution to \eqref{L-BSDE}. In that sense, we use BSDEs not for their own sake, but only as a tool to describe the value process of our stochastic control problem.
\end{remark}
%
\section{Proofs}\label{sec:pr}
This section contains the more technical proofs. Several results and computations do not use the precise definition \eqref{D} of the processes $L^\pm$, but only some of their properties. To emphasise this, we formulate the corresponding results here for generic processes $\ell^\pm$. Recall that we drop the superscript ${}^\T$ in all proofs.

We first show that the predictable functions in \eqref{g1+}--\eqref{g+} are well defined and have nice properties.
\bl\label{lg}
Let $\ell^\pm$ be two $[0,1]$-valued semimartingales. Then the predictable functions $\fg^{1,\pm}$, $\fg^{2,\pm}$ and $\fg^\pm$ defined in \eqref{g1+}--\eqref{g+} are Carath\'eodory functions, which are convex and continuously differentiable in $\psi$ with
\begin{align*}
\nabla\fg^{1,\pm}(\psi)
&=
2\ell^\pm_{-} c^S\psi\pm2\ell^\pm_{-}b^S\pm2 c^{S\ell^\pm},\\
\nabla\fg^{2,\pm}(\psi)
&=
2\ell^\pm_{-}\int \big((1\pm\psi^\T u)^+u-u\big)F^{S}(du)\pm2\int(1\pm\psi^\T u)^+uyF^{S,\ell^\pm}(du,dy)\\
&\phantom{=\ }
\mp2\int (1\pm\psi^\T u)^-u(\ell^\mp_-+z)F^{S,\ell^\mp}(du,dz).
\end{align*}
\el
\bp
We only prove the assertion for $\fg^{2,-}$ as the arguments for the other functions are completely analogous or obvious. So we write $\fg^{2,-}$ as
\begin{align*}
\fg^{2,-}(\psi;S,\ell^+,\ell^-)
&=
\ell^-_-\int f_1(\psi,u)F^S(du)+\int f_2(\psi,u,y)F^{S, \ell^-}(du,dy)\\
&\phantom{=\ }
+\int \big(f_3(\psi,u)\ell^+_-+f_4(\psi,u,z)\big)F^{S,\ell^+}(du,dz)
\end{align*}
with
\begin{align*}
f_1(\psi,u)
&=
\big\{(1-\psi u)^+\big\}^2-1+2\psi u,\\
f_2(\psi,u,y)
&=
\big(\big\{(1-\psi u)^+\big\}^2-1\big)y,\\
f_3(\psi,u)
&=
\{(1-\psi u)^-\big\}^2,\\
f_4(\psi,u,z)
&=
\big\{(1-\psi u)^-\big\}^2z.
\end{align*}
Since $S\in\HzlocP$ and the jumps of $\ell^\pm$ are bounded by $1$, we obtain that $\int |u|^2F^S(du)$, $\int |u|^2|y|F^{S,\ell^-}(du,dy)$, $\int |u|^2|y|^2F^{S,\ell^-}(du,dy)$ and $\int |u|^2|z|F^{S,\ell^+}(du,dz)$ are finite. Combining this with the estimates
\begin{align*}
|f_1(\psi,u)|
&=
|\psi u|^2\one_{\{\psi u\leq 1\}}+|2\psi u-1|\one_{\{\psi u> 1\}}\leq 2|\psi|^2|u|^2,\\
|f_2(\psi,u,y)|
&=
\big|\big( (\psi u)^2-2\psi u \big) y\one_{\{\psi u\leq 1\}}-y\one_{\{\psi u>1\}}\big|\leq |\psi|^2|u|^2(|y|+|y|^2),\\
|f_3(\psi,u)|
&=
|\psi u-1|^2\one_{\{\psi u\leq 1\}}\leq |\psi|^2|u|^2,\\
|f_4(\psi,u,z)|
&=
|\psi u-1|^2|z|\one_{\{\psi u\leq 1\}}\leq |\psi|^2|u|^2|z|
\end{align*}
gives that $\fg^{2,-}$ is finite-valued for all $\psi\in\R^d$. The convexity of $\fg^{2,-}$ then follows immediately from the convexity of $f_1,\ldots,f_4$ in $\psi$. To verify the continuous differentiability of $\fg^{2,-}$, we want to differentiate under the integrals via an appeal to dominated convergence. To that end, we fix $\psi\in\R^d$, take an open ball $B_\ve(\psi)$ of radius $\ve>0$ around $\psi$ and estimate for $\xi\in B_\ve(\psi)$ the partial derivatives
\begin{align*}
|\nabla_{\psi}f_1(\xi,u)|
&=
|-2(1-\xi u)^+u+2u|\leq2|\xi uu|\one_{\{\xi u\leq 1\}}+2|u|\one_{\{\xi u> 1\}}\\
&\leq
2 (|\psi|+\ve)|u|^2+2|u|\one_{\left\{|u|>\frac{1}{|\psi|+\ve}\right\}}\leq 4(|\psi|+\ve)|u|^2=:h_1(u),\\
|\nabla_{\psi}f_2(\xi,u,y)|
&=
|-2(1-\xi u)^+uy|=2|\xi u||u||y|\one_{\{\xi u\leq 1\}}\leq2(|\psi|+\ve)|u|^2|y|=:h_2(u,y),\\
|\nabla_{\psi}f_3(\xi,u)|
&=
|2(1-\xi u)^-u|=2|1-\xi u| |u|\one_{\{\xi u\leq 1\}}\leq2(|\psi|+\ve)|u|^2=:h_3(u),\\
|\nabla_{\psi}f_4(\xi,u,z)|
&=
|2(1-\xi u)^-uz|=2|1-\xi u|\one_{\{\xi u\leq 1\}}|u||z|=:h_4(u,z).
\end{align*}
Since $h_1,\ldots,h_4$ are all integrable, we may indeed interchange differentiation and integration, and so $\fg^{2,-}$ is continuously differentiable in $\psi$. In particular, $\fg^{2,-}$ is continuous in $\psi$ and a Carath\'eodory function.
\ep
We next want to compute the drift of $J(\vt)$ for Theorem \ref{mainthm1}. Note below that the superscripts $\pm$ for $\ell$ only serve as indices; they do not denote positive and negative parts, unlike $V^\pm(x,\vt)$. While this notation may be slightly ambiguous, we found $\ell^{(\pm)}$ too heavy.
\bl\label{lcdc}
Let $\ell^\pm$ be $[0,1]$-valued semimartingales and set
$$j(\vt):=\big(V^+(x,\vt)\big)^2\ell^++\big(V^-(x,\vt)\big)^2\ell^-.
$$
For each $\vt\in\TBK$, we define the $K$-valued predictable process $\psi$ as in \eqref{defpsi} via
\be
\vt=:V_{-}^+(x,\vt)\psi+V_{-}^-(x,\vt)\psi+\one_{\{V_{-}(x,\vt)=0\}}\psi\label{defpsi2}.
\ee
Then $j^n(\vt):=\one_{D_n}\sint j(\vt)$ is a special semimartingale for each $D_n:=\{|\vt|\leq n\}\in\cP$ and $n\in\N$. In the canonical decomposition $j^n(\vt)=j^n_0(\vt)+M^{j^n(\vt)}+A^{j^n(\vt)}$, we have $A^{j^n(\vt)}=(\one_{D_n}\bar b^{\vt})\sint B$ with
\begin{align}
\bar b^{\vt}
&=
\big(V_{-}^+(x,\vt)\big)^2\big\{\fg^+(\psi;S,\ell^+,\ell^-)+b^{\ell^+}\big\}+\big(V_{-}^-(x,\vt)\big)^2\big\{\fg^-(\psi;S,\ell^+,\ell^-)+b^{\ell^-}\big\} \nonumber\\
&\phantom{=\ }
+\one_{\{V_{-}(x,\vt)=0\}}\Big(\int\big((\psi^\T u)^+\big)^2(\ell^+_{-}+y)F^{S,\ell^+}(du,dy)+\ell^-_-\psi^\T c^S\psi\nonumber\\
&\phantom{=\ }
+\int\big((\psi^\T u)^-\big)^2(\ell^-_{-}+z)F^{S,\ell^-}(du,dz)\Big).\label{defb2}
\end{align}
If $j(\vt)$ is special, then $b^{j(\vt)}=\bar b^\vt$.
\el
\bp 
The Meyer--It\^o formula (Theorem IV.71 in \cite{P04}) and integration by parts give
\begin{align*}
d\big(V^+(x,\vt)\big)^2
&=
2V_-^+(x,\vt)\vt \,dS+\one_{\{V_{-}(x,\vt)>0\}}\vt\, d[S^c]\vt
+\Delta\big(V^+(x,\vt)\big)^2-2V_-^+(x,\vt)\vt \Delta S,\\
d\big(V^-(x,\vt)\big)^2
&=
-2V_-^-(x,\vt)\vt \,dS+\one_{\{V_{-}(x,\vt)\leq0\}}\vt\, d[S^c]\vt
+\Delta\big(V^-(x,\vt)\big)^2+2V_-^-(x,\vt)\vt\Delta S
\end{align*}
and
\begin{align}
\one_{D_n}d\big\{ \ell^+\big(V^+(x,\vt)\big)^2 \big\}
&=
\one_{D_n}\big(V_-^+(x,\vt)\big)^2d\ell^++\one_{D_n}\ell^+_{-}\Big(2V_-^+(x,\vt)\vt \,dS \nonumber\\
&\phantom{=\ }
+\one_{\{V_{-}(x,\vt)>0\}}\vt\, d[S^c]\vt+\big\{\Delta\big(V^+(x,\vt)\big)^2-2V_-^+(x,\vt)\vt\Delta S\big\}\Big)\nonumber\\
&\phantom{=\ }
+2\one_{D_n}V_-^+(x,\vt)\vt\, d[S^c,(\ell^+)^c]+\one_{D_n}\Delta\big(V^+(x,\vt)\big)^2\Delta \ell^+ , \label{LV}\\
\one_{D_n}d\big\{ \ell^-\big(V^-(x,\vt)\big)^2 \big\}
&=
\one_{D_n}\big(V_-^-(x,\vt)\big)^2d\ell^-+\one_{D_n}\ell^-_{-}\Big(-2V_-^-(x,\vt)\vt \,dS \nonumber\\
&\phantom{=\ }
+\one_{\{V_{-}(x,\vt)\leq0\}}\vt\, d[S^c]\vt+\big\{\Delta\big(V^-(x,\vt)\big)^2+2V_-^-(x,\vt)\vt\Delta S\big\}\Big)\nonumber\\
&\phantom{=\ }
-2\one_{D_n}V_-^-(x,\vt)\vt\, d[S^c,(\ell^-)^c]+\one_{D_n}\Delta\big(V^-(x,\vt)\big)^2\Delta \ell^-.\label{LV2}
\end{align}
Since $\Delta V(x,\vt)=\vt \Delta S$, $S$ is in $\HzlocP$, $|\Delta\ell^\pm|\leq 1$ and $\vt$ is bounded on $D_n$, the supremum of the jumps of each term in \eqref{LV} and \eqref{LV2} is locally integrable. So Theorem III.36 in \cite{P04} implies that these terms are all special and we can calculate their compensators as
\begin{align*}
&\one_{D_n}\sint \big\{ \ell^+\big(V^+(x,\vt)\big)^2 \big\}\\
&\marteq
\one_{D_n}\big(V_-^+(x,\vt)\big)^2\sint A^{\ell^+}+(\one_{D_n}\ell^+_{-})\sint\big((2V_-^+(x,\vt)\vt)\sint A^S+\one_{\{V_{-}(x,\vt)>0\}}\sint[\vt\sint S^c]\big)\\
&\phantom{\marteq\ }
+\one_{D_n}\ell^+_{-}\big\{\big((V_{-}(x,\vt)+\vt u)^+\big)^2-\big(V_-^+(x,\vt)\big)^2-2V_-^+(x,\vt)\vt u\big\}\ast\nu^{S}\\
&\phantom{\marteq\ }
+\one_{D_n}\big\{\big(\big(V_{-}(x,\vt)+\vt u\big)^+\big)^2-\big(V_-^+(x,\vt)\big)^2\big\}y\ast\nu^{S,\ell^+}+2\one_{D_n}V_-^+(x,\vt)\sint [\vt\sint S^c,(\ell^+)^c],\\
&\one_{D_n}\sint \big\{ \ell^-\big(V^-(x,\vt)\big)^2 \big\}\\
&\marteq
\one_{D_n}\big(V_-^-(x,\vt)\big)^2\sint A^{\ell^-}+ (\one_{D_n}\ell^-_{-})\sint\big(-(2V_-^-(x,\vt)\vt)\sint A^S+\one_{\{V_{-}(x,\vt)\leq0\}}\sint[\vt\sint S^c]\big)\\
&\phantom{\marteq\ }
+\one_{D_n}\ell^-_{-}\big\{\big((V_{-}(x,\vt)+\vt u)^-\big)^2-\big(V_-^-(x,\vt)\big)^2+2V_-^-(x,\vt)\vt u\big\}\ast\nu^{S}\\
&\phantom{\marteq\ }
+\one_{D_n}\big\{\big((V_{-}(x,\vt)+\vt u)^-\big)^2-\big(V_-^-(x,\vt)\big)^2\big\}z\ast\nu^{S,\ell^-}-2\one_{D_n}V_-^-(x,\vt)\sint [\vt\sint S^c,(\ell^-)^c],
\end{align*}
where we denote by $\marteq$ equality up to a local martingale. Adding both equations and passing to differential characteristics gives
\begin{align}
A^{j^n(\vt)}
&=
\one_{D_n}\Big(\one_{\{V_{-}(x,\vt)>0\}}\ell^+_{-}\vt c^S\vt+2V_-^+(x,\vt)\vt \big(\ell^+_{-}b^S+ c^{S,\ell^+}\big)+\big(V_-^+(x,\vt)\big)^2b^{\ell^+}\nonumber\\
&\phantom{=\ }
+\ell^+_{-}\int\big\{\big((V_{-}(x,\vt)+\vt u)^+\big)^2-\big(V_-^+(x,\vt)\big)^2-2V_-^+(x,\vt)\vt u\big\}F^S(du)\nonumber\\
&\phantom{=\ }
+\int\big\{\big((V_{-}(x,\vt)+\vt u)^+\big)^2-\big(V_-^+(x,\vt)\big)^2\big\}yF^{S,\ell^+}(du,dy)\nonumber\\
&\phantom{=\ }
+\one_{\{V_{-}(x,\vt)\leq0\}}\ell^-_{-}\vt c^S\vt-2V_-^-(x,\vt)\vt (\ell^-_{-}b^S+c^{S,\ell^-})+\big(V_-^-(x,\vt)\big)^2b^{\ell^-}\nonumber\\
&\phantom{=\ }
+\ell^-_{-}\int\big\{\big((V_{-}(x,\vt)+\vt u)^-\big)^2-\big(V_-^-(x,\vt)\big)^2+2V_-^-(x,\vt)\vt u\big\}F^{S}(du)\nonumber\\
&\phantom{=\ }
+\int\big\{\big((V_{-}(x,\vt)+\vt u)^-\big)^2-\big(V_-^-(x,\vt)\big)^2\big\}zF^{S,\ell^-}(du,dz)\Big)\sint B.\nonumber
\end{align}
By plugging in \eqref{defpsi2}, we obtain first
\begin{align*}
&\big((V_{-}(x,\vt)+\vt u)^\pm\big)^2-\big(V_-^\pm(x,\vt)\big)^2\\
&=
\big(V_-^\pm(x,\vt)\big)^2\big\{\big((1\pm\psi u)^+\big)^2-1\big\}
+\big(V_-^\mp(x,\vt)\big)^2\big((1\mp\psi u)^-\big)^2+\one_{\{V_{-}(x,\vt)=0\}}\big((\psi u)^\pm\big)^2
\end{align*}
and therefore also $A^{j^n(\vt)} = (\one_{D_n}\bar b^{\vt})\sint B$ with
\begin{align*}
\bar b^{\vt}
&=
\big(V_-^+(x,\vt)\big)^2\Big\{\ell^+_{-}\psi c^S\psi+2\psi(\ell^+_{-}b^S+c^{S\ell^+})+b^{\ell^+}\nonumber\\
&\phantom{=\ }
+\ell^+_{-}\int\big\{\big((1+\psi u)^+\big)^2-1-2\psi u\big\}F^S(du)+\int\big\{\big((1+\psi u)^+\big)^2-1\big\}yF^{S,\ell^+}(du,dy)\nonumber\\
&\phantom{=\ }
+\int\big((1+\psi u)^-\big)^2(\ell^-_{-}+z)F^{S,\ell^-}(du,dz)\Big\}\nonumber\\
&\phantom{=\ }
+\big(V_-^-(x,\vt)\big)^2\Big\{\ell^-_{-}\psi c^S\psi-2\psi(\ell^-_{-}b^S+c^{S\ell^-})+b^{\ell^-}\nonumber\\
&\phantom{=\ }
+\ell^-_{-}\int\big\{\big((1-\psi u)^+\big)^2-1+2\psi u\big\}F^{S}(du)+\int\big\{\big((1-\psi u)^+\big)^2-1\big\}zF^{S,\ell^-}(du,dz)\nonumber\\
&\phantom{=\ }
+\int\big((1-\psi u)^-\big)^2(\ell^+_{-}+y)F^{S,\ell^+}(du,dy)\Big\}\nonumber\\
&\phantom{=\ }
+\one_{\{V_{-}(x,\vt)=0\}}\Big\{ \int\big((\psi u)^+\big)^2(\ell^+_{-}+y)F^{S,\ell^+}(du,dy)\nonumber\\
&\phantom{=\ }
+\ell^-_-\psi c^S\psi+\int\big((\psi u)^-\big)^2(\ell^-_{-}+z)F^{S,\ell^-}(du,dz)\Big\}\nonumber
\end{align*}
after collecting terms. The assertion then follows by inserting the definitions of $\fg^\pm$.
\ep
%
%
\section{Related work}
\label{sec:lit}
To round off the paper and put our contribution into perspective, we finally discuss the connections of our work to the existing literature. This naturally splits in two parts.
\subsection{The unconstrained case}\label{sec:rw:uc}
For (semimartingale) models without constraints, one key motivation to study the Marko\-witz problem has been the mean-variance hedging problem \eqref{MVH}. The solution of \eqref{MVH}, for an arbitrary payoff $H$, can be described more explicitly if one knows the variance-optimal martingale measure or the opportunity-neutral measure; see for example Theorem 4.6 in \cite{S01} and Theorem 4.10 in \cite{CK07}. Finding those measures is intimately linked to the approximation in $ \LiiP$ of the constant $1$ by stochastic integrals of $S$, i.e.~to \eqref{ap-1}. While there is a vast literature on mean-variance hedging, the most general results for these problems without constraints have been obtained by \v Cern\'y and Kallsen \cite{CK07}, and their work has also provided a lot of inspiration for our approach. We now quickly explain how the main results of \cite{CK07} can be obtained directly as special cases of our setting.

Suppose that there are no constraints so that $C\equiv K\equiv\RR^d$. The first key simplification is then that the opportunity processes $L^\pm$ agree so that we can write $L:=L^+=L^-$. One way to see this is to look at the proof of Proposition \ref{pJL} and note there that the distinction according to the sign of $x+\vt\sint S_\sigma$ becomes superfluous since $K$ is symmetric. Alternatively, one can look at the definitions of $\bL^\pm(\sigma)$ in \eqref{C} and observe that they agree for $+$ and $-$ because $\fK(0,\sigma;\sigma)$ contains with $\vp$ also $-\vp$. Again this only needs that $K$ is a cone and symmetric around $0$, but we shall exploit $K\equiv\RR^d$ later. Recall that $\TB = \overline{\Theta(\R^d)}$.

To get good properties for the (single) opportunity process $L$, we next suppose as in \cite{CK07} that there exists an equivalent $\sigma$-martingale measure (E$\sigma$MM) $Q$ for $S$ with $\frac{dQ}{dP}\in\LiiP$. (Because $S\in\HzlocP$, we then have that $\sup_{0\leq t\leq \tau_n}|S_t|\in L^1(Q)$ so that $Q$ is actually an equivalent local martingale measure (ELMM) for $S$.) Lemma \ref{lwac} then tells us that both $L$ and $L_-$ are strictly positive; this recovers Lemma 3.10 from \cite{CK07}. A substantial sharpening is given in Theorem \ref{T6.2} below.

Moving on to the local description in Section \ref{sec:ld}, we see from $L^+=L^-=L$ that we only need to consider a setting with $\ell^+=\ell^-=:\ell$. Then \eqref{g2+} reduces to
\begin{align*}
\mathfrak{g}^{2,+}(\psi)&=\ell_{-}\int\big((1+\psi^\T u)^2-1-2\psi^\T u\big)F^S(du)+\int\big((1+\psi^\T u)^2-1\big)yF^{S,\ell}(du,dy)\nonumber\\
&=\int(\psi^\T u)^2(\ell_-+y)F^{S,\ell}(du,dy)+\int2\psi^\T uyF^{S,\ell}(du,dy)\nonumber\\
&=\mathfrak{g}^{2,-}(-\psi),
\end{align*}
and therefore \eqref{g+} yields
\begin{align*}
\mathfrak{g}^{+}(\psi)&=\ell_{-}\psi^\T c^S\psi+2\ell_{-}\psi^\T b^S+2 \psi^\T c^{S\ell}+\mathfrak{g}^{2,+}(\psi)=\mathfrak{g}^{-}(-\psi).
\end{align*}
If in addition $\ell_-$ is strictly positive, we can rewrite this as
$$\mathfrak{g}^{+}(\psi)=\ell_{-}(\psi^\T\bar c\psi+2\psi^\T\bar b)=\mathfrak{g}^{-}(-\psi)$$
with
\begin{align}
\bar c&:=\bar c(S,\ell):=c^S+\int uu^\T\Big(1+\frac{y}{\ell_-}\Big)F^{S,\ell}(du,dy),\label{sec:rw:A}\\
\bar b&:=\bar b(S,\ell):=b^S+ \frac{c^{S\ell}}{\ell_-}+\int u\frac{y}{\ell_-}F^{S,\ell}(du,dy),
\label{sec:rw:B}
\end{align}
as in (3.25) and (3.23) in \cite{CK07}. So $\fg^\pm$ are quadratic functions and we can easily, by comple\-ting squares, find their minimisers and minimal values in explicit form. The result is
\be
\min_{\psi\in\RR^d}\mathfrak{g}^{+}(\psi)=\mathfrak{g}^{+}(\tpsi^+)=-\ell_-\bar b^\T (\bar c)^{-1}\bar b=\min_{\psi\in\RR^d}\mathfrak{g}^{-}(\psi)=\mathfrak{g}^{-}(\tpsi^-)\label{sec:rw:C}
\ee
with
\begin{equation}
\tpsi^+=-\tpsi^-=:\tpsi=-(\bar c)^{-1}\bar b=:-\bar a,
\label{sec:rw:C1}
\end{equation}
where $(\bar c)^{-1}$ denotes the Moore--Penrose pseudoinverse of $\bar c$. We remark that this is well defined whenever a minimiser exists, hence in particular if there is an optimal strategy.

Under the assumption (made in \cite{CK07}) that there is an E$\sigma$MM $Q$ for $S$ with \hbox{$\frac{dQ}{dP}\in\LiiP$}, Theorem 2.16 for $C\equiv\RR^d$ tells us that $G_T(\TB)$ is closed in $\LiiP$. The same is true for
$$
G_T(\TB\mathbbm{1}_{\rrbracket \tau,T\rrbracket})=G_T(\overline{\Theta(\RR^d\mathbbm{1}_{\rrbracket \tau,T\rrbracket})})
$$
for any stopping time $\tau$, and so \eqref{A} has a solution $\tvp^{(x,\tau)}$ for every pair $(x,\tau)$. Corollary \ref{corld} thus allows us to identify $\tvp^{(x,\tau)}$; indeed, $\tpsi^+ = -\tpsi^- = \tpsi$ reduces the SDE \eqref{eq:corld:b} to
$$
d V_t^{(x,\tau)}=V_{t-}^{(x,\tau)}\widetilde\psi_t\one_{\rrbracket \tau,T\rrbracket }\,dS_t,\quad V_0^{(x,\tau)}=V_\tau^{(x,\tau)}=x
$$
whose solution is of course
$$
V^{(x,\tau)}=x\,\E\big((\tpsi\one_{\rrbracket \tau,T\rrbracket })\sint S\big)=x\,\E\big((-\bar a\one_{\rrbracket \tau,T\rrbracket })\sint S\big),
$$
and so \eqref{eq:corld:c} yields
\be
\tvp^{(x,\tau)} =V_{-}^{(x,\tau)}\widetilde\psi\one_{\rrbracket \tau,T\rrbracket }=-x\,\E\big((-\bar a\one_{\rrbracket \tau,T\rrbracket })\sint S\big)_- \bar a\one_{\rrbracket \tau,T\rrbracket }.\label{eq:corld:CK}
\ee
This recovers Lemma 3.7 from \cite{CK07}.

One major simplification in the unconstrained case is that we no longer need to distinguish between the cases $V_-(x,\tvp)>0$ and $V_-(x,\tvp)<0$ because there is only one opportunity process $L$. In terms of the discussion before Corollary \ref{corld}, we no longer need to worry about jumps of $V(x,\tvp)$ across $0$ since these do not affect the description of $L$. All we need is to be able to ``restart $V(x, \tvp)$ when it jumps to $0$'', which is the important insight obtained by \v Cern\'y and Kallsen \cite{CK07}. The adjustment process $\widetilde a$ from \cite{CK07} is moreover seen to be given by $\widetilde a=\bar a=-(\bar c)^{-1}\bar b=-\tpsi$, by comparing \eqref{eq:corld:CK} to (3.12) in \cite{CK07}.

The above result highlights an important difference between our approach and that in \cite{CK07}. We obtain our results by systematically using stochastic control ideas and in particular the martingale optimality principle (MOP). To illustrate this with an example, we see from the above that $\widetilde a=-\tpsi$ is obtained as the minimiser of the function $\fg$, which means that we exploit the MOP by using that the drift of $J(\vt)$ must vanish for the optimal strategy. In contrast, \v Cern\'y and Kallsen \cite{CK07} obtain $\widetilde a$ by closely examining the structure of the optimal strategies $\tvp^{(x,\tau)}$ for variable $\tau$, and they prove its properties using the optimality of $\tvp^{(x,\tau)}$ via martingale orthogonality conditions. They do not explicitly use dynamic programming and never mention the MOP.

The next proposition summarises the most important results for the unconstrained case $C\equiv\RR^d$. We give no proof; this all follows directly by specialising our earlier results. 

\begin{prop}\label{prop:sum}
Suppose that $S$ is in $\HzlocP$. Then:
\bi
\item[\bf{1)}] There exists an RCLL submartingale $L=(L_t)_{0\leq t\leq T}$, called \emph{opportunity process}, such that for each $x\in\RR$ and $\tau\in\cS_{0,T}$, the process
$$
\textstyle
J_t(\vt;x,\tau)=\big( x+\int_\tau^t\vt_udS_u \big)^2L_t,\quad 0\leq t\leq T
$$
is a submartingale for every $\vt\in\TB$ with $\vt=0$ on $\llbracket0,\tau\rrbracket$. Moreover, $J(\tvt;x,\tau)$ is a martingale for $\tvt\in\TB$ with $\tvt=0$ on $\llbracket 0,\tau\rrbracket$ if and only if $\tvt=\tvp^{(x,\tau)}$ is optimal for \eqref{A}. The process $L$ is given explicitly as an RCLL version of
$$
\textstyle
\bL(t):=\essinf\big\{ E\big[|1-\int_t^T \vp_u \,dS_u|^2\big|\F_t\big] \,\big|\, \text{$\vp\in\TB$ with $\vp=0$ on $\llbracket0,t\rrbracket$}\big\},\quad 0\leq t\leq T.
$$
\item[\bf{2)}] Suppose that $L$ and $L_-$ are both $>0$ and that there exists a solution $\tvp^{(1,\tau)}$ to \eqref{A} with $x=1$ for any stopping time $\tau$. Then the joint differential characteristics of $(S,L)$ satisfy
\be
\text{$b^{L}=L_-\bar b^\T (\bar c)^{-1}\bar b$}\label{eq:corld:aCK}
\ee
and we have $V(1,\tvp^{(1,\tau)})=\E\big((-\bar a\one_{\rrbracket \tau,T\rrbracket })\sint S\big)$ with $\bar a=(\bar c)^{-1}\bar b$. A sufficient condition for the assumptions in 2) is that there exists an E$\sigma$MM $Q$ for $S$ with $\frac{dQ}{dP}\in\LiiP$.
\item[\bf{3)}] Conversely, let $\ell$ be a semimartingale such that
\bi
\item[\bf{a)}] $\ell$ and its left limit $\ell_{-}$ are $(0,1]$-valued and $\ell_T=1$.
\item[\bf{b)}] The joint differential characteristics of $(S,\ell)$ satisfy
$$
b^{\ell}=\ell_-\bar b^\T (\bar c)^{-1}\bar b.
$$
\item[\bf{c)}] For $\bar a:=(\bar c)^{-1}\bar b$, we have that
$$
\bar\lambda^{(\tau)}:= \E\big( (-\bar a\mathbbm{1}_{\rrbracket \tau,T\rrbracket})\sint S\big)_- \bar a \mathbbm{1}_{\rrbracket \tau,T\rrbracket})\in\TB .
$$
\ei
Then $\tvp^{(1,\tau)}:=-\bar\lambda^{(\tau)}$ is the solution to \eqref{ap} with $x=1$ for each $\tau\in\cS_{0,T}$, and $L:=\ell$ is the opportunity process.
\ei
\end{prop}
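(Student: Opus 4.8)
The plan is to derive all three parts by specialising the results of Sections~\ref{sec:dp} and~\ref{sec:ld} to the symmetric cone $K\equiv\R^d$ and tracing how the two opportunity processes collapse into one. The key preliminary observation is that $L^+=L^-=:L$. Indeed, in the definition \eqref{C} of $\bL^\pm(\sigma)$ the set $\fK(0,\sigma;\sigma)$ now contains $-\vp$ together with every $\vp$ (because $K\equiv\R^d$ is symmetric), so the two essential infima defining $\bL^+(\sigma)$ and $\bL^-(\sigma)$ coincide; hence $L^+=L^-$ by Proposition~\ref{prop:mop}. Setting $L:=L^+=L^-$ collapses the decomposition \eqref{E} to $J_t(\vt;x,\tau)=\big((x+\int_\tau^t\vt_u\,dS_u)^+\big)^2L_t+\big((x+\int_\tau^t\vt_u\,dS_u)^-\big)^2L_t=\big(x+\int_\tau^t\vt_u\,dS_u\big)^2L_t$. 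Part 1) is then exactly Propositions~\ref{pJL} and~\ref{prop:mop} read in this notation (note $\TBK=\TB$ here): the submartingale/martingale statements and the optimality characterisation are inherited verbatim, and the explicit essinf formula for $\bL$ is \eqref{C} with sign $-$ and $K\equiv\R^d$.

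For part 2) I would invoke Corollary~\ref{corld} in the case $\ell^+=\ell^-=L$. The quadratic reduction already carried out in Section~\ref{sec:rw:uc} gives $\min_{\psi\in\R^d}\fg^+(\psi)=\fg^+(\tpsi)=-L_-\bar b^\T(\bar c)^{-1}\bar b=\min_{\psi\in\R^d}\fg^-(\psi)$ with minimiser $\tpsi^+=-\tpsi^-=\tpsi=-\bar a$, so that \eqref{eq:corld:a} becomes the single drift equation \eqref{eq:corld:aCK}, $b^L=L_-\bar b^\T(\bar c)^{-1}\bar b$. Substituting $\tpsi^+=-\tpsi^-=\tpsi$ collapses the SDE \eqref{eq:corld:b} to the linear equation $dV_t^{(1,\tau)}=V_{t-}^{(1,\tau)}\tpsi_t\one_{\rrbracket\tau,T\rrbracket}\,dS_t$, whose solution is $\E\big((-\bar a\one_{\rrbracket\tau,T\rrbracket})\sint S\big)$ as in \eqref{eq:corld:CK}. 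For the sufficient condition, an E$\sigma$MM $Q$ with $\frac{dQ}{dP}\in\LiiP$ is (since $S\in\HzlocP$) an ELMM; writing its density process as $Z^Q=Z^Q_0\E(N^Q)$ makes $(\E,\ZN)$ regular and square-integrable with $S$ an $\E$-local martingale, so Lemma~\ref{lwac} yields $L,L_->0$, while Theorem~\ref{thmclosedTBC} gives closedness of $G_T(\TB\one_{\rrbracket\tau,T\rrbracket})$ and hence solvability of \eqref{A} for every pair $(1,\tau)$.

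Part 3) is the verification Theorem~\ref{vt1} (together with Remark~\ref{rm:vt1} for the $\tau$-version) specialised to $\ell^+=\ell^-=\ell$ and $K\equiv\R^d$: hypotheses a)--c) are precisely the $\ell^+=\ell^-$ forms of conditions 1)--3) there, with b) matching \eqref{eq:vt1:a} via the same quadratic reduction and c) being \eqref{eq:vt1:c} for $\bar\vp=\bar\lambda^{(\tau)}$. Theorem~\ref{vt1} then gives optimality of $\tvp^{(1,\tau)}=-\bar\lambda^{(\tau)}$ and that $j(\bar\vp^{(1,\tau)})=V(1,\tvp^{(1,\tau)})^2\ell$ is a martingale of class (D). The one step beyond literal specialisation is identifying $\ell$ with $L$, which I would carry out at the level of stopping times: since $V_\tau^{(1,\tau)}=1$ and $\ell_T=1$, evaluating the martingale $j(\bar\vp^{(1,\tau)})$ at $\tau$ gives $\ell_\tau=E\big[V_T^2\ell_T\mid\F_\tau\big]=E\big[|V_T(1,\bar\vp^{(1,\tau)})|^2\mid\F_\tau\big]=\bL(\tau)=L_\tau$ for every stopping time $\tau$ (the last equality because $\bar\vp^{(1,\tau)}$ is optimal), and RCLL then forces $\ell=L$. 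Lemma~\ref{maximal} offers an alternative route to $\ell\le L$.

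The main obstacle is not any single estimate but bookkeeping: verifying that the sign conventions of Section~\ref{sec:rw:uc} (the relation $\tpsi^+=-\tpsi^-$ and the reflection $\fg^+(\psi)=\fg^-(-\psi)$) propagate correctly through the SDEs \eqref{eq:corld:b} and \eqref{eq:vt1:b}, and confirming that each hypothesis of Corollary~\ref{corld} and Theorem~\ref{vt1} is genuinely implied by the symmetric-cone specialisation. The identification $\ell=L$ in part 3), though short, is the only place where an argument beyond literal specialisation is needed.
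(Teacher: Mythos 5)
Your proposal is correct and follows essentially the same route as the paper, which offers no separate proof of Proposition \ref{prop:sum} precisely because it ``follows directly by specialising our earlier results'': the symmetry of $K\equiv\R^d$ giving $L^+=L^-$, the quadratic reduction yielding $\tpsi^+=-\tpsi^-=-\bar a$ and the single drift equation, Lemma \ref{lwac} and Theorem \ref{thmclosedTBC} for the sufficient condition, and Corollary \ref{corld} together with Theorem \ref{vt1}/Remark \ref{rm:vt1} for parts 2) and 3). Your stopping-time argument identifying $\ell$ with $L$ in part 3) (via $\ell_\tau=E\big[|V_T(1,\bar\vp^{(1,\tau)})|^2\,\big|\,\F_\tau\big]=L_\tau$ and right-continuity) is a correct and welcome filling-in of the one step the paper's outline leaves implicit.
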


Note that the equation \eqref{eq:corld:aCK} for the joint differential characteristics of $(S,\ell)$ is the same as (3.32) in \cite{CK07}. Moreover, parts 2) and 3) of Proposition \ref{prop:sum} essentially recover Theorem 3.25 of \cite{CK07}; our result is actually even stronger since we do not need the assumption from \cite{CK07} that $\E( (-\bar a\mathbbm{1}_{\rrbracket \tau,T\rrbracket})\sint S)\ell$ is of class (D) for each stopping time $\tau\in\cS_{0,T}$.

The results of \v Cern\'y and Kallsen \cite{CK07} show (as repeated in part 2) of Proposition \ref{prop:sum}) that a sufficient condition for the existence of all optimal strategies $\tvp^{(1,\tau)}$ for $\tau\in\cS_{0,T}$ as well as for strict positivity of $L$ and $L_-$ is the existence of an E$\sigma$MM $Q$ for $S$ with $\frac{dQ}{dP}\in\LiiP$. Our next theorem sharpens this into a precise characterisation by giving \emph{necessary and sufficient} conditions. This result is also one reason why we have introduced the notion of $(\E,\ZN)$-martingales in the precise form of Section \ref{sec:pf}.
\begin{thm}
\label{T6.2}
For $S\in\HzlocP$, the following are equivalent:
\bi
\item[1)] The opportunity process $L$ and its left limit $L_{-}$ are $(0,1]$-valued and there exists a solution $\tvp^{(1,\tau)}$ to \eqref{A} with $x=1$ for any stopping time $\tau\in\cS_{0,T}$.
\item[2)] There exist $N\in\MznlocP$ and $\ZN$ such that $(\E,\ZN)$ with $\E=\E(N)$ is regular and square-integrable and $S=S_0+M-\la M, N\ra$ is an $\E$-local martingale.
\ei
\end{thm}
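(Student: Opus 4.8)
The statement is an equivalence, so I would prove the two implications separately; the direction 2)$\Rightarrow$1) is a short consequence of the machinery already assembled, whereas 1)$\Rightarrow$2) requires \emph{constructing} the pair $(N,\ZN)$ and is the heart of the matter. For 2)$\Rightarrow$1), I would feed the hypotheses directly into Lemma \ref{lwac} and Theorem \ref{thmclosedTBC}. Since $(\E,\ZN)$ is regular and square-integrable and $S=S_0+M-\la M,N\ra$ is an $\E$-local martingale, Lemma \ref{lwac} gives at once that $L^\pm$ and $L^\pm_-$ are $(0,1]$-valued, and in the unconstrained case $L:=L^+=L^-$, so $L$ and $L_-$ are $(0,1]$-valued. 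For the solvability of \eqref{A} for \emph{every} stopping time $\tau$, I would apply Theorem \ref{thmclosedTBC} to the restarted correspondence $\R^d\one_{\rrbracket\tau,T\rrbracket}$, which is predictable with closed values and whose projection on the predictable range of $S$ is trivially closed; as the same $N,\ZN$ serve for all $\tau$, its hypotheses hold verbatim, so $G_T(\overline{\Theta(\R^d\one_{\rrbracket\tau,T\rrbracket})})$ is closed in $\LiiP$ and \eqref{A} admits a solution $\tvp^{(1,\tau)}$. This is precisely condition 1).

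For 1)$\Rightarrow$2) the assumptions are exactly those of Corollary \ref{corld}, in the form of part 2) of Proposition \ref{prop:sum}, so the optimal wealth is the stochastic exponential $V^{(1,\tau)}:=V(1,\tvp^{(1,\tau)})=\E\big((-\bar a\one_{\rrbracket\tau,T\rrbracket})\sint S\big)$ with $\bar a=(\bar c)^{-1}\bar b$, and by part 3) of Lemma \ref{mart} (using $L^+=L^-=L$) the restarted product ${}^\tau\big(V^{(1,\tau)}L\big)$ is a square-integrable martingale, with $\widetilde M^{(1,\tau)}=V^{(1,\tau)}L$ on $\llbracket\tau,T\rrbracket$. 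My candidate is the local martingale obtained as a patched stochastic logarithm,
$$
N:=\sum_{m\ge0}\Big(\frac{1}{(V^{(1,T_m)}L)_-}\,\one_{\rrbracket T_m,T_{m+1}\rrbracket}\Big)\sint\big(V^{(1,T_m)}L\big),
$$
which is designed so that ${}^{T_m}\E(N)=V^{(1,T_m)}L/L_{T_m}$ on $\llbracket T_m,T_{m+1}\llbracket$ (note $V^{(1,T_m)}_{T_m}=1$). Because $\bar a$ and $L$ are global and the $T_m$ increase stationarily to $T$, this is locally a finite sum and defines a single process; moreover $\E(N)$ reaches $0$ exactly where the optimal wealth does, i.e.\ at the $T_m$, since each $V^{(1,\tau)}$ is a stochastic exponential and can only hit $0$ through a jump. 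I would then set $\ZN:=L$, which by the assumption in 1) is strictly positive, RCLL, adapted and bounded.

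It remains to verify the three defining properties, which the construction makes transparent. First, $N\in\MznlocP$ because on each interval it is the stochastic logarithm of the square-integrable martingale $V^{(1,T_m)}L$, whose left limit stays strictly positive on $\llbracket T_m,T_{m+1}\rrbracket$ (as $L_->0$ and $V^{(1,T_m)}_-\neq0$ before its jump to $0$), so the standard localisation yields a locally square-integrable local martingale. Second, $(\E,\ZN)$ is regular and square-integrable: $\ZNTm=L_{T_m}\le1$ is square-integrable, and
$$
\one_{\rrbracket T_m,T\rrbracket}\sint\big(\ZNTm\,{}^{T_m}\E(N)\big)=\one_{\rrbracket T_m,T\rrbracket}\sint\big(V^{(1,T_m)}L\big)={}^{T_m}\big(V^{(1,T_m)}L\big)
$$
is a square-integrable martingale by part 3) of Lemma \ref{mart}. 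Third, $S$ is an $\E(N)$-local martingale: combining part 4) of Lemma \ref{mart} for $\pm\vt$ (legitimate since $\R^d$ is symmetric) with the identity $\widetilde M^{(1,T_m)}_t=E[V^{(1,T_m)}_T\,|\,\F_t]$ for $t\ge T_m$ shows that $(\vt\sint S)\,\widetilde M^{(1,T_m)}$ is a martingale for every $\vt$ supported on $\rrbracket T_m,T\rrbracket$; taking $\vt=\one_{\rrbracket T_m,T\rrbracket}e_i$ and localising gives that ${}^{T_m}S\cdot V^{(1,T_m)}L$ is a local martingale, and dividing by the $\F_{T_m}$-measurable $L_{T_m}>0$ yields that ${}^{T_m}S\cdot{}^{T_m}\E(N)$ is a local martingale, which is the $\E$-local martingale property. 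Proposition \ref{propelocalmart} then turns this into $A^S=-\la M,N\ra$, i.e.\ the representation $S=S_0+M-\la M,N\ra$ demanded in 2).

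The main obstacle I expect is the bookkeeping in 1)$\Rightarrow$2): assembling the restarted stochastic logarithms into one $N$ whose exponential restarts ${}^{T_m}\E(N)$ faithfully reproduce the restarted optimisation problems, checking that the zeros of $\E(N)$ coincide with the $T_m$, and controlling the integrability so that $N$ is genuinely in $\MznlocP$ and the products above are \emph{true} rather than merely $\sigma$-martingales. Strict positivity of $L$ and $L_-$ from condition 1) is what keeps $1/(V^{(1,T_m)}L)_-$ locally bounded and drives the construction, and it is precisely here that the process $\ZN=L$, playing the role of a ``density at time $T_m$'' in the absence of an $L^2$-integrable ELMM, becomes indispensable.
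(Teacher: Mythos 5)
Your direction 2)$\Rightarrow$1) is exactly the paper's argument (Theorem \ref{thmclosedTBC} applied to $C=\R^d\one_{\rrbracket\tau,T\rrbracket}$, plus Lemma \ref{lwac}), and for 1)$\Rightarrow$2) your candidate pair is in substance the paper's as well: the paper also takes $\ZN:=L$ and an $N$ characterised by ${}^{T_m}\E(N)=V^{(1,T_m)}L/L_{T_m}$ on $\llbracket T_m,T\rrbracket$. Where you part ways is the \emph{global} definition of $N$. The paper writes $L=L_0\,\E(K')$ and sets $N=-\bar a\sint S+K'-[\bar a\sint S,K']$ via Yor's formula, which obliges it to prove $\bar a=(\bar c)^{-1}\bar b\in\cL(S)$; that is its step 2), carried out with the opportunity-neutral measure $P^*$ and Girsanov's theorem. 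You instead patch together stochastic logarithms of the square-integrable martingales $V^{(1,T_m)}L$ over the intervals $\rrbracket T_m,T_{m+1}\rrbracket$, so that each piece is well defined with no integrability work (note only that $(V^{(1,T_m)}L)_-$ is \emph{nonvanishing} rather than strictly positive there, since the optimal wealth can jump negative).

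However, there is a genuine gap: you assert that ``the $T_m$ increase stationarily to $T$'' and use this to conclude that your sum is locally finite, but in your construction this is precisely the point that must be \emph{proved}, not quoted. In the paper, stationarity is free because $N$ is first shown to be a globally defined RCLL local martingale, and an RCLL process has only finitely many jumps of size $-1$; for you, the $T_m$ are defined recursively as the successive zeros of the optimal wealth processes, and if they accumulated at some $T_\infty\le T$ with positive probability, your $N$ would have infinitely many jumps $\Delta N_{T_{m+1}}=-1$ before $T_\infty$, hence no left limit there, and would fail to be in $\MznlocP$, so statement 2) would not follow. Moreover, since $T_{m+1}$ is exactly the next time at which $\bar a^\T\Delta S=1$, non-accumulation of the $T_m$ is essentially equivalent to $(-\bar a)\sint S$ being a well-defined semimartingale, i.e.\ to the very fact $\bar a\in\cL(S)$ that you were hoping to bypass; the appeal to ``$\bar a$ and $L$ are global'' does not address this. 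The gap is fillable inside your scheme, and without Girsanov: by part 2) of Lemma \ref{mart}, $L_{T_m}=E[V^{(1,T_m)}_T|\F_{T_m}]=E[|V^{(1,T_m)}_T|^2|\F_{T_m}]$, and $V^{(1,T_m)}_T=0$ on $B_m:=\{T_{m+1}<T\}$, so Cauchy--Schwarz gives $L_{T_m}\le P[B_m^c|\F_{T_m}]$; on $A:=\bigcap_m B_m=\{T_m<T\text{ for all }m\}$ one has $B_m^c\uparrow A^c$ and $A\in\bigvee_m\F_{T_m}$, so conditional dominated convergence (Hunt's lemma) gives $P[B_m^c|\F_{T_m}]\to\one_{A^c}=0$ on $A$, whence $L_{T_\infty-}=\lim_m L_{T_m}=0$ on $A$, contradicting $L_->0$ unless $P[A]=0$ --- the same Cauchy--Schwarz device as in the proof of Lemma \ref{lpfw}. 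With this supplied, your proof closes and is a genuine alternative to the paper's step 2); as written, the construction of $N$ is unjustified at its most delicate point.
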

\bp
The implication ``$2)\Longrightarrow 1)$'' is easy. Indeed, the closedness in $\LiiP$ of $G_T(\TBC)$ obtained from Theorem \ref{thmclosedTBC} implies the existence of all the $\tvp^{(1,\tau)}$ by taking $C = \RR^d \one_{\rrbracket\tau,T\rrbracket}$, and strict positivity of $L$ and $L_-$ is from Lemma \ref{lwac}. We prove the converse implication ``$1)\Longrightarrow 2)$'' in several steps.

1) Fix $\tau$ and use Lemma \ref{lpfw} to write 
$
V(1,\tvp^{(1,\tau)}) = \cE(\tpsi^{(1,\tau)}\sint S) = \cE( (\tpsi^{(1,\tau)}\one_{\rrbracket\tau,T\rrbracket})\sint S)
$.
As in Lemma \ref{mart}, using that $L^+ = L^- = L$, consider the process $\widetilde M^{(1,\tau)} = V(1,\tvp^{(1,\tau)}) L$ and the square-integrable martingale 
$
\one_{\rrbracket\tau,T\rrbracket}\sint\widetilde M^{(1,\tau)} = \one_{\rrbracket\tau,T\rrbracket}\sint (V(1,\tvp^{(1,\tau)}) L)
$.
Because $L_->0$, we can write $L = L_0\,\cE(K')$. Moreover, Corollary \ref{corld} and its proof give that $\tpsi^{(1,\tau)}$ coincides on the set
$
\rrbracket\tau,T\rrbracket \cap \{V_-(1,\tvp^{(1,\tau)})\ne0\}
$
with the minimiser $\tpsi$ of the function $\fg$, which is $\tpsi = -\bar a = -(\bar c)^{-1}\bar b$ by \eqref{sec:rw:C1}, so that
$
V(1,\tvp^{(1,\tau)}) = \cE((-\bar a \one_{\rrbracket\tau,T\rrbracket})\sint S)
$.
This implies
\begin{align}
\widetilde M^{(1,\tau)}
\nonumber
&=
L^\tau + \one_{\rrbracket\tau,T\rrbracket}\sint \big( V(1,\tvp^{(1,\tau)}) L_0\,\cE(K') \big)
\\
&=
L^\tau + \one_{\rrbracket\tau,T\rrbracket}\sint \Big( \cE\big( (-\bar a\one_{\rrbracket\tau,T\rrbracket})\sint S \big) L_\tau \, \cE\big( \one_{\rrbracket\tau,T\rrbracket}\sint K' \big) \Big)
\nonumber
\\
&=
L^\tau \cE\big( \one_{\rrbracket\tau,T\rrbracket}\sint N \big)
\label{Mteq}
\end{align}
by Yor's formula, with $N := -\bar a\sint S + K' - [\bar a\sint S, K']$. Moreover, by Lemma \ref{mart} for 
$\vt := \pm \one_{\rrbracket\tau_n,\tau_{n+k}\rrbracket}$ for a localising sequence with $S^{\tau_m}\in\HzP$ for all $m$, we obtain that the product of ${}^{\tau_n} S$ and $\widetilde M^{(1,\tau)}$ is for each $n$ a local martingale (with $(\tau_{n+k})_{k\in{\mathbb N}}$ as localising sequence).

2) At the end of step 1), we have glossed over a point that we must settle now. While \eqref{Mteq} is correct as it stands, the subsequent definition of $N$ on all of $\llbracket0,T\rrbracket$ requires us to show that $\bar a$ is in $\cL(S)$. To do that, we recall that $K' = \frac{1}{L_-}\sint L$ (this is called the extended mean-variance tradeoff process in Definition 3.11 in \cite{CK07}) and introduce the opportunity-neutral measure $P^*\approx P$ by
$
\frac{dP^*}{dP} := \frac{L_T}{E[L_0] \, \cE(A^{K'})_T}
$.
Then Girsanov's theorem (see Lemma A.9 in \cite{CK07}) gives as in the proof of Lemma 3.17 in \cite{CK07} that
$
b^{S,P^*} = \frac{\bar b}{1+\Delta A^{K'}}
$
and
$
[S]^{{\bf p},P^*} = \widetilde c^{S,P^*}\sint B = \frac{\bar c}{1+\Delta A^{K'}}\sint B
$.
Note that $A^{K'}$ is increasing because $L$ is a submartingale, and Corollary \ref{corld} with \eqref{sec:rw:C} gives
$$
\textstyle
A^{K'}
=
\frac{1}{L_-}\sint A^L
=
\frac{b^L}{L_-}\sint B
=
\big( - \frac{1}{L_-} \min_{\psi\in\RR^d} \fg(\psi;L) \big)\sint B
=
(\bar b^\top (\bar c)^{-1} \bar b) \sint B.
$$
So we obtain from $\bar a = - (\bar c)^{-1} \bar b$ and since $[S]^{{\bf p}, P^*} - \langle M^{S,P^*} \rangle$ is nonnegative definite that
$$
\textstyle
\int |\bar a \, dA^{S,P^*}| + \int \bar a^\top d\langle M^{S,P^*}\rangle \bar a
=
( |\bar a^\top b^{S,P^*}| + \bar a^\top \widetilde c^{M,P^*} \bar a)\sint B
\le
2 \frac{\bar b^\top (\bar c)^{-1} \bar b}{1+\Delta A^{K'}} \sint B
\le
2 A^{K'},
$$
which shows that $\bar a$ is in both $\cL(A^{S,P^*})$ and $\cL^2_{\rm loc}(M^{S,P^*})$ and therefore in $\cL(S)$. Hence $N$ is well defined and a semimartingale. As in Section \ref{sec:pf}, define the stopping times $T_0:=0$ and
$T_{m+1}=\Inf\{t>T_m\, |\, {}^{T_m}\E(N)_t=0\}\wedge T$, and note that $(T_m)$ increases to $T$ stationarily.

3) Step 1) with $\tau=T_m$ implies that 
$
\one_{\rrbracket T_m,T\rrbracket}\sint \widetilde M^{(1,T_m)} = L_{T_m} \one_{\rrbracket T_m,T\rrbracket} \sint \cE( \one_{\rrbracket T_m,T\rrbracket} \sint N)
$
is for each $m$ a square-integrable martingale. By Remark \ref{Nrem}, this implies that $N$ is in $\MznlocP$ because $L>0$.  Then step 1) also shows that $(\E,\ZN)$ with $\E = \E(N)$ and $\ZN = L$ is regular and square-integrable, since the product of $L^{T_m}$ and ${}^{T_m}\E(N) = \E( \one_{\rrbracket T_m,T\rrbracket}\sint N)$ is $\widetilde M^{(1,T_m)}$. Finally, step 1) with $\tau_n$ replaced by $\tau_n\wedge T_m$ yields for $n\to\infty$ that $S$ is an $\E$-local martingale. This ends the proof.
\ep
An alternative description of $L$ and hence of the optimal strategies is via the BSDE \eqref{L-BSDE} in Corollary \ref{cor:BSDE}. Combining \eqref{sec:rw:C} with the fact that $\fh^\pm=\fg^\pm$ in Section \ref{sec:ld}, we obtain that the BSDE system \eqref{L-BSDE} (for $L^\pm$) collapses to the single BSDE (for $L$)
$$
L = (L_-\bar b^\T (\bar c)^{-1}\bar b)\sint B + H^{L}\sint S^c + W^{L}\ast(\mu^S-\nu^S) + N^{L}, \quad L_T=1.
$$
By also using \eqref{sec:rw:A}, \eqref{sec:rw:B} and \eqref{delta}--\eqref{beta}, we can rewrite the drift term (with respect to $B$) into a more explicit form and obtain 
\begin{align}
L&=H^{L}\sint S^c+W^{L}\ast(\mu^S-\nu^S)+N^{L}
\nonumber\\
&\phantom{=\ }+\bigg\{\bigg(b^S+ c^S\frac{H^L}{L_-}+\int \frac{\Delta A^L+W^L(u)-\WhL}{L_-}uF^S(du)\bigg)^\T\nonumber\\
&\phantom{=\ }\times \bigg(c^S+\int uu^\T\Big(1+\frac{\Delta A^L+W^L(u)-\WhL}{L_-}\Big)F^S(du)\bigg)^{-1}\nonumber\\
&\phantom{=\ }\times\bigg(b^S+ c^S\frac{H^L}{L_-}+\int \frac{\Delta A^L+W^L(u)-\WhL}{L_-}uF^S(du)\bigg)L_-\bigg\}\sint B,
\quad
L_T=1.
\label{gamma}
\end{align}
This is much simpler than the constrained case because we no longer have a coupled system of BSDEs (for $L^\pm$). Note that \eqref{gamma} has one more term than the otherwise identical equation (3.37) in \cite{CK07}; it seems that \v Cern\'y and Kallsen \cite{CK07} have somewhere lost $\Delta A^L$, as has also been noted by other authors.
\subsection{The continuous case}
To the best of our knowledge, all results on the Markowitz problem under constraints in continuous-time models have been obtained when $S$ is \emph{continuous}. Before discussing individual papers, we therefore explain how our results simplify for continuous $S$.

First of all, Lemma \ref{lpfw} yields that $V(x,\tvp^{(x,\tau)})=x\,\E(\tpsi^{(x,\tau)}\sint S)$. So if \eqref{ap} (when we start from $\tau=0$) has a solution, the process $V(x,\tvp^{(x,0)})$ has a \emph{unique sign} on all of $\llbracket 0,T\rrbracket$ because the stochastic exponential of a continuous process never hits $0$. One can then show with some extra work that
$$
\bar\vp^{(x,\tau)} := x\,\E\big((\tpsi^{(x,0)}\mathbbm{1}_{\rrbracket \tau,T\rrbracket})\sint S\big)\tpsi^{(x,0)}\mathbbm{1}_{\rrbracket \tau,T\rrbracket}
=
\frac{x}{V_\tau(x,\tvp^{(x,0)})}\tvp^{(x,0)}\mathbbm{1}_{\rrbracket \tau,T\rrbracket}
$$
is optimal for \eqref{A} (when we start from $\tau$); more precisely, this can be done if we have the existence of an optimal strategy $\tvp^{(x,\tau)}$ for all $(x,\tau)$ or if the constraints cor\-re\-spon\-dence $C$ has convex closed cones as values.  So if $S$ is continuous, we basically do not need to study all the conditional problems; it is enough to understand and describe $\tvp^{(x,0)}$.

In the local description in Section \ref{sec:ld}, we next see in \eqref{g2+} that $\fg^{2,\pm}\equiv0$ when $S$ has no jumps; so \eqref{g+} gives $\fg^{\pm}=\fg^{1,\pm}$ and \eqref{g1+} shows that $\fg^+$ and $\fg^-$ only depend on $\ell^+$ and $\ell^-$, respectively. This implies in turn that the two coupled equations in \eqref{mainthm1:eq:1} in Theorem \ref{mainthm1} \emph{decouple}; and since we have already seen above that $V(x,\tvp)$ has a unique sign on $\llbracket 0,T\rrbracket$, we need in fact only one of those two equations \mbox{(depending on the sign of $x$).}

To describe the optimal strategy $\tvp^{(x,0)}$, we must find the minimiser $\tpsi^{(x,0)}$ of $\fg^+$ or $\fg^-$ (depending on the sign of $x$). Because $\fg^\pm$ are simple quadratic functions of $\psi$, as the terms $\fg^{2,\pm}$ are absent, finding their minimisers over the constraint set $K$ is straightforward in principle. But explicit (closed form) expressions can be expected only in special cases.

Conversely, Theorem \ref{vt1} allows us to construct a solution $\tvp^{(x,0)}$ to \eqref{ap} from a solution to the BSDEs in \eqref{BSDE2}. Those equations take the more explicit form
\be
\ell^\pm=-\Inf_{\psi\in K}\fh^\pm(\psi;S,\ell^\pm)\sint B + H^{\ell^\pm}\sint M + N^{\ell^\pm},
\quad \ell^\pm_T=1
\label{E-2}
\ee
with
$$
\mathfrak{h}^\pm(\psi;S,\ell^\pm)=\ell^\pm_{-}\psi^\T c^S\psi\pm2\ell^\pm_{-}\psi^\T b^S\pm2 \psi^\T c^{S}H^{\ell^\pm} .
$$
In the unconstrained case $C\equiv K\equiv\R^d$, we can find the minimal value of $\fh^\pm$ explicitly by completing the square. Since we then also need not distinguish between $\ell^+$ and $\ell^-$, as seen in Section \ref{sec:rw:uc}, the BSDE \eqref{E-2} becomes (after doing the computations)
\begin{equation}
L = {}H^{L}\sint M+N^{L} + \bigg\{\bigg(b^S+c^S\frac{H^L}{L_-}\bigg)(c^S)^{-1}\bigg(b^S+c^S\frac{H^L}{L_-}\bigg)L_-\bigg\}\sint B,\quad
L_T=1.
\label{F-2}
\end{equation}
This equation can also be found in Kohlmann and Tang \cite{KT02}, Mania and Tevzadze \cite{MT03} or Bobrovnytska and Schweizer \cite{BS04}, among others. Of course, \eqref{F-2} can also be obtained as a special case of \eqref{gamma} by simply dropping there all the jump terms. Note that even if $S$ is continuous, $L$ need not be, due to the presence of the orthogonal martingale term $N^L$.

After these general remarks, let us now discuss and compare the most important results in the literature so far.

We start with Hu and Zhou \cite{HZ04}, Labb\'e and Heunis \cite{LH07} and Li, Zhou and Lim \cite{LLZ02}. They all use for $S$ a multidimensional It\^o process model as in Example \ref{Exa:Ito} of the form
\be
dS_t=\mathrm{diag}(S_t)\big((\mu_t-r_t{\bf 1})\, dt+\sigma_t\, dW_t\big)\label{G}
\ee
with a vector drift process $\mu$ and a matrix volatility process $\sigma$. An important assumption is that $\dim S= \dim W$ and that $\sigma$ is invertible (even uniformly elliptic); this means that the model without constraints is complete and implies that the projection $\Pi^S$ on the predictable range of $S$ is simply the identity. Finally, the constraints are given by closed convex cones $K$ which are constant (i.e.~do not depend on $t$ or $\om$).

In \cite{HZ04}, the approach is to first study a more general constrained stochastic linear-quadratic (LQ) control problem and then derive results for the Markowitz problem as a special case. One inherent disadvantage is that this usually provides less intuition and insight than a direct approach as in our paper. At the more abstract level, \cite{HZ04} prove verification theorems; they show how solutions to certain BSDEs induce solutions to certain LQ control problems and also prove existence of solutions to their BSDEs under suitable conditions. In the context of the model \eqref{G}, one key assumption is that the instantaneous \emph{Sharpe ratio} process 
$
\bar\lambda := \sigma^{-1}(\mu-r{\bf 1}) = \sigma^\top (\sigma\sigma^\top)^{-1}(\mu-r{\bf 1})
$ 
is uniformly \emph{bounded}; this is exploited to prove solvability of the BSDEs by using results of Kobylanski \cite{K00}. Moreover, the arguments exploit (via the use of BSDE comparison theorems) that the opportunity processes $L^\pm$ are continuous since the filtration generated by the driving Brownian motion has no discontinuous martingales. Boundedness of $\bar \lambda$ also implies the existence of an E$\sigma$MM $Q$ for $S$ with $\frac{dQ}{dP}\in\LiiP$; in fact, one can take for $Q$ the minimal martingale measure given by $dQ=\E(-\bar\lambda \sint W)_T \, dP$. Theorem \ref{thmclosedTBC} then implies the closedness in $ \LiiP$ of $G_T(\TBK)$ and hence the solvability of \eqref{ap}. Actually, boundedness of $\bar\lambda$ even implies that the minimal martingale measure satisfies the reverse H\"older inequality $R_2(P)$, so that $G_T(\TSK)$ is closed in $\LiiP$; see Remark \ref{compare}. Moreover, the opportunity process $L$ from Proposition \ref{prop:sum} is uniformly bounded away from 0 due to the reverse H\"older inequality $R_2(P)$, and hence so are both $L^\pm$ because they dominate $L$. As already commented before Lemma \ref{maximal}, the solution to \eqref{BSDE2} is then also unique within that class of processes. But for applications, one serious drawback of assuming $\bar\lambda$ bounded is that this restrictive condition is often hard to check or even not satisfied in specific (e.g.~Markovian) models for $S$. Moreover, we could not find in \cite{HZ04} any explanation where the BSDEs come from so that the presentation seems to us not fully transparent. One simple illustration is that the authors of \cite{HZ04} also observe that one needs only one of the two BSDEs; but their explanation seems to miss that this is directly due to the continuity of $S$, as explained above before \eqref{E-2}.

In \cite{LH07}, the final setting is even more special since the coefficients $\mu, r, \sigma$ in \eqref{G} are all deterministic functions. Labb\'e and Heunis \cite{LH07} use convex duality to obtain existence and the structure of the solution to the Markowitz problem, by first solving a dual problem and then constructing from that the desired primal solution. More precisely, existence is proved for random coefficients and even (fixed) convex closed, but not necessarily conic, constraints if $\bar\lambda=\sigma^{-1}(\mu-r{\bf 1})$ is bounded (as in \cite{HZ04}). However, the results on the \emph{structure} of the optimal portfolio are obtained by first studying and solving the HJB equation for the dual problem, and this hinges crucially on the assumption of deterministic coefficients. It also needs closed convex cones for the constraints. 
From our perspective, the use of duality is in general not really necessary to obtain the structure of the solution to the primal problem. Duality is very often useful for proving the existence of a (primal) solution; but if that is achieved differently (or assumed), structural results about the solution can usually be derived directly in the primal setting, as we have done here.

Finally, one of the earliest papers on the Markowitz problem under constraints in a continuous-time setting is due to Li, Zhou and Lim \cite{LLZ02}. The coefficients $\mu, r, \sigma$ there are deterministic functions, $\bar\lambda=\sigma^{-1}(\mu-r{\bf 1})$ is again bounded, and constraints are given by $C\equiv K\equiv\RR^d_+$ (no shortselling). The treatment in \cite{LLZ02} combines LQ control with Markovian and PDE techniques; instead of working with BSDEs as in \cite{HZ04}, the authors of \cite{LLZ02} study the (primal) HJB equation associated to the Markowitz problem, construct for that a viscosity solution, and use a verification result to then derive the optimal strategy. A major step in their proof is to deal with a potential irregularity in the HJB equation (the set $\Gamma_3$ in \cite{LLZ02}, where $v(t,x)=0$).
From our general perspective, there are two comments. One is that a (well-hidden) assumption in \cite{LLZ02} is that the vector $\mu-r{\bf 1}$ is in $\RR^d_+$ (since the coefficient $B$ in the abstract problem (3.1) in \cite{LLZ02} must lie in the positive orthant). By looking at our functions $\fg^\pm=\fg^{1,\pm}$ in \eqref{g1+} and using that $K\equiv\RR^d_+$, we then directly obtain as minimisers $\tpsi^+=0$ and $\tpsi^- = (\sigma^\T)^{-1} \bar\pi$, where $\bar\pi$ denotes the projection on $\sigma^\T K = \sigma^\T \RR^d_+$ of $\bar\lambda = \sigma^{-1} (\mu - r{\bf 1})$; so the optimal strategy is almost directly given. Secondly, the fact that $V(x,\tvp)$ has a unique sign implies that the potential irregularity in the HJB equation is actually not relevant since the optimiser will not go there; this explains why there is no genuine smoothness problem in \cite{LLZ02}.

While all the above papers consider models which are complete without constraints, there has also been some recent work going beyond such restrictive setups; we mention here Jin and Zhou \cite{JZ07} and Donnelly \cite{D08}. Both use duality techniques to prove the existence of a solution; \cite{D08} has an It\^o process model with regime-switching coefficients and (deterministic and constant) convex constraints, while \cite{JZ07} studies no-shortselling constraints ($C\equiv K\equiv \RR^d_+$) in an incomplete It\^o process model. The latter paper also obtains the optimal strategy more explicitly for the special case of deterministic parameters $\mu, r, \sigma$; this is possible because (like in \cite{LH07}) the dual problem becomes much simpler under that condition. All in all, it seems fair to say that even for continuous $S$, our results on the \emph{structure} of the optimal strategy in the Markowitz problem under constraints contain and substantially extend all the available literature so far.

The last statement needs an important clarification. We focus here on constraints on \emph{strategies} and there in particular on the \emph{structure of the optimiser} for the Markowitz problem. There have been quite a few papers on the Markowitz problem (usually in the form \eqref{CMP} of minimising the variance  subject to a given mean for the final wealth) with the additional constraint of having a \emph{nonnegative wealth} process. One of the earliest papers on this topic is due to Korn and Trautmann \cite{KT95}, and more recent contributions include Bielecki, Jin, Pliska and Zhou \cite{BJPZ} and Xia \cite{X05}. In most cases, the discussion and solution goes as follows. If one has a good equivalent martingale measure $Q$, say, then nonnegative wealth $V(x,\vt)\geq 0$ as a process is equivalent to having nonnegative final wealth, $V_T(x,\vt)\ge0$. If one also has a complete model, every final payoff is replicable and so it is enough to solve the static Markowitz problem over (nonnegative) final wealth only. This is done in \cite{KT95} via duality and utility-based techniques and in \cite{BJPZ} via Lagrange multipliers. The paper by Xia \cite{X05} is a little different; it actually reduces the problem of minimising $E[|y-V_T(x,\vt)|^2]$ for continuous $S$ and $y>x$ by observing (and proving) that it is optimal to first minimise the expected squared shortfall $E[|(y-V_T(x,\vt))^+|^2]$ and then stop the corresponding wealth process as soon as it hits $y$. But in all these cases, a nonnegative wealth constraint is substantially easier to deal with than constraints imposed on strategies.
\bigskip
\begin{ak}
Financial support by the National Centre of Competence in Research ``Financial Valuation and Risk Management'' (NCCR FINRISK), Project D1 (Mathematical Methods in Financial Risk Management) is gratefully acknowledged. The NCCR FINRISK is a research instrument of the Swiss National Science Foundation.
\end{ak}

\end{document}